%% LyX 2.3.4.4 created this file.  For more info, see http://www.lyx.org/.
%% Do not edit unless you really know what you are doing.
\documentclass[oneside,english]{amsart}
\usepackage{lmodern}
\usepackage[T1]{fontenc}
\usepackage[latin9]{inputenc}
\usepackage{geometry}
\geometry{verbose,lmargin=1in}
\usepackage{float}
\usepackage{mathtools}
\usepackage{enumitem}
\usepackage{amsthm}
\usepackage{amssymb}
\usepackage{mathdots}
\usepackage{graphicx}

\makeatletter

%%%%%%%%%%%%%%%%%%%%%%%%%%%%%% LyX specific LaTeX commands.
%% Because html converters don't know tabularnewline
\providecommand{\tabularnewline}{\\}
\floatstyle{ruled}
\newfloat{algorithm}{tbp}{loa}
\providecommand{\algorithmname}{Algorithm}
\floatname{algorithm}{\protect\algorithmname}

%%%%%%%%%%%%%%%%%%%%%%%%%%%%%% Textclass specific LaTeX commands.
\numberwithin{equation}{section}
\numberwithin{figure}{section}
\theoremstyle{plain}
\newtheorem{thm}{\protect\theoremname}
\theoremstyle{definition}
\newtheorem{defn}[thm]{\protect\definitionname}
\newlist{casenv}{enumerate}{4}
\setlist[casenv]{leftmargin=*,align=left,widest={iiii}}
\setlist[casenv,1]{label={{\itshape\ \casename} \arabic*.},ref=\arabic*}
\setlist[casenv,2]{label={{\itshape\ \casename} \roman*.},ref=\roman*}
\setlist[casenv,3]{label={{\itshape\ \casename\ \alph*.}},ref=\alph*}
\setlist[casenv,4]{label={{\itshape\ \casename} \arabic*.},ref=\arabic*}
\theoremstyle{remark}
\newtheorem{rem}[thm]{\protect\remarkname}
\theoremstyle{plain}
\newtheorem{lem}[thm]{\protect\lemmaname}

%%%%%%%%%%%%%%%%%%%%%%%%%%%%%% User specified LaTeX commands.
\usepackage{amsmath}
\usepackage{colortbl}
\usepackage{arydshln}
\usepackage{color}
\usepackage{hyperref}

\@ifundefined{showcaptionsetup}{}{%
 \PassOptionsToPackage{caption=false}{subfig}}
\usepackage{subfig}
\makeatother

\usepackage{babel}
\usepackage{listings}
\providecommand{\casename}{Case}
\providecommand{\definitionname}{Definition}
\providecommand{\lemmaname}{Lemma}
\providecommand{\remarkname}{Remark}
\providecommand{\theoremname}{Theorem}

\begin{document}
\title{A Blind Permutation Similarity Algorithm}
\author{Eric Barszcz\\
NASA Ames Research Center\\
Moffett Field, CA 94035\\
}
\keywords{Permutation Similarity, Graph Isomorphism, Blind Algorithm}
\begin{abstract}
This paper introduces a polynomial blind algorithm that determines
when two square matrices, $A$ and $B$, are permutation similar.
The shifted and translated matrices $\left(A+\beta I+\gamma J\right)$
and $\left(B+\beta I+\gamma J\right)$ are used to color the vertices
of two square, edge weighted, rook's graphs. Then the orbits are found
by repeated symbolic squaring of the vertex colored and edge weighted
adjacency matrices. Multisets of the diagonal symbols from non-permutation
similar matrices are distinct within a few iterations, typically four
or less.
\end{abstract}

\maketitle

\section{Introduction}

This paper introduces a polynomial blind algorithm for permutation
similarity. Specht's Theorem gives an infinite set of necessary and
sufficient conditions for two square matrices to be unitarily similar
\cite[pg. 97]{hornAndJohnson2013}. Specht's Theorem compares the
traces of certain matrix products. If the traces ever fail to match,
then the matrices are not unitarily similar. Subsequent work reduced
the number of traces required to a finite number \cite[pg. 98]{hornAndJohnson2013}.
The permutation similarity algorithm is somewhat analogous, an infinite
set of matrix products are checked to see if the multisets (not traces)
of the diagonal elements match. If they ever fail to match, the two
matrices are not permutation similar.

Permutation similarity plays a large role in the graph isomorphism
problem where two graphs are isomorphic iff their adjacency matrices
are permutation similar. The best theoretical results for graph isomorphism
are by Babai, showing graph isomorphism is quasipolynomial \cite{babai2015}.
The typical graph isomorphism algorithm uses equitable partitions
and vertex individualization with judicious pruning of the search
tree to generate canonical orderings that are compared to determine
if two graphs are isomorphic \cite{mckay2014}. However, Neuen and
Schweitzer suggest that all vertex individualization based refinement
algorithms have an exponential lower bound \cite{neuen2017a}. The
permutation similarity algorithm does not perform vertex individualization
nor does it try to construct a canonical ordering.

The overall process is as follows. Square (real or complex) matrices
$A$ and $B$ are converted into positive integer matrices whose diagonal
entries that are distinct from the off-diagonal entries, $\left(A_{\Sigma}+\beta I+\gamma J\right)$
and $\left(B_{\Sigma}+\beta I+\gamma J\right)$. The shifted and translated
matrices are used to color the vertices of edge weighted rook's graphs.
The edge weights are `1' for a column edge and `2' for a row edge.
The resultant graphs are called permutation constraint graphs (PCGs),
see Figure \ref{fig:rookVsPCG}. The purpose of a PCG is to add symmetric
permutation constraints to the original matrix. Then the vertex colored
and edged weighted adjacency matrices of the PCGs are constructed.
Such an adjacency matrix is called a permutation constraint matrix
(PCM). It will be shown that $A$ and $B$ are permutation similar
iff their associated PCMs are permutation similar. Next, the PCMs
are repeatedly squared using symbolic matrix multiplication. Symbolic
squaring generates a canonical string for each inner product. Symbols
are substituted for the strings in a consistent fashion. After the
symbol substitution, the multisets of diagonal symbols are compared.
If they every differ, the matrices are not permutation similar. Symbolic
squaring monotonically refines the partitions until a stable partition
is reached. 

Experimentally, a stable partition is reached in six iterations or
less and non permutation similar matrices are separated in four iterations
or less. The canonical inner product strings generated by symbolic
squaring separate the orbits imposed by the automorphism group acting
on the set of $(i,j)$ locations.

The necessity of the blind algorithm is proven in Section \ref{sec:Blind-P-similarity-Algorithm}.
Sufficiency is argued in Section \ref{sec:The-Crux}. Section \ref{sec:FindingP}
provides a polynomial algorithm that uses the blind algorithm to find
a permutation between permutation similar matrices. Section \ref{sec:Discussion}
provides some additional material that readers may find interesting.
Section \ref{sec:Summary} is a summary.

\section{\label{sec:Blind-P-similarity-Algorithm}Blind P-similarity Algorithm}

Algorithm \ref{alg:Blind-p-similarity-algorithm} contains pseudocode
for the blind permutation similarity algorithm. This section is organized
so the background for each function is discussed in order. The discussion
will make it clear that the algorithm tests necessary conditions for
two matrices to be permutation similar. A conservative complexity
bound for Algorithm \ref{alg:Blind-p-similarity-algorithm} is presented
in Section \ref{subsec:Algorithm-Complexity}. Sufficiency arguments
are given in Section \ref{sec:The-Crux}.

\begin{algorithm}
\begin{lstlisting}
01 function psim = BPSAY(M1,M2)
02 % Blind Permutation Similarity Algoritm, Yes?
03 % Inputs:
04 %   M1 - mxm square matrix (real or complex)
05 %   M2 - mxm square matrix (real or complex)
06 % Outputs:
07 %   psim - boolean (TRUE if M1 & M2 are p-similar)
08 
09 psim = FALSE;	% initialize psim to FALSE
10 
11 [A,B] = SymbolSubstitution(M1, M2);
12
13 S = PCM(ShiftAndTranslate(A, beta=m^2, gamma=2));
14 T = PCM(ShiftAndTranslate(B, beta=m^2, gamma=2));
15 
16 repeat
17 	[Ssqr,Tsqr] = SymbolSubstitution(SymSqr(S), SymSqr(T));
18 	if (~CompareDiagMultisets(Ssqr,Tsqr)), break; end if	% not p-sim
19
20 	converged = ComparePatterns(S,Ssqr) & ComparePatterns(T,Tsqr);
21 	if (converged), psim = TRUE; end if			% p-sim
22 
23 	S = Ssqr;  T = Tsqr;
24 until (converged)
25 
26 return(psim);
27 end
\end{lstlisting}

\caption{\label{alg:Blind-p-similarity-algorithm}Blind p-similarity algorithm
pseudocode}
\end{algorithm}

\subsection{Permutation Similarity}

Real or complex square $m\times m$ matrices $A$ and $B$ are permutation
similar iff there exists a permutation matrix $P$ such that $PAP^{T}=B$
\cite[pg. 58]{hornAndJohnson2013}. The notation $A\underset{P}{\sim}B$
is used to indicate that $A$ and $B$ are permutation similar. Permutation
similarity will be abbreviated as p-similarity in the text.

\subsection{\label{subsec:Symmetric-Permutation}Symmetric Permutation and Mixes}

The process of applying a permutation matrix $P$ from the left and
right as $PAP^{T}$ is called a symmetric permutation \cite[pg. 147]{golub}.
Symmetric permutations have four main characteristics:
\begin{enumerate}
\item Elements of a matrix are moved around by symmetric permutation but
not changed, so the multiset of elements in a matrix before and after
a symmetric permutation are identical. Let the multiset of all elements
in a matrix be called the \emph{mix}, i.e.,\emph{ }$mix\left(A\right)=\left\{ \left\{ A_{i,j}\right\} \right\} $.
Then $mix\left(A\right)=mix\left(PAP^{T}\right)$.
\item Elements of a row are moved together, so the multiset of elements
from a row before and after being moved by a symmetric permutation
are identical. Let the multiset of row multisets be called the \emph{row
mix}, i.e.,\emph{ }$rowMix\left(A\right)=\left\{ \left\{ \begin{array}{ccc}
\left\{ \left\{ A_{1,:}\right\} \right\} , & \cdots, & \left\{ \left\{ A_{m,:}\right\} \right\} \end{array}\right\} \right\} $. Then $rowMix\left(A\right)=rowMix\left(PAP^{T}\right)$.
\item Elements of a column are moved together, so the multiset of elements
from a column before and after being moved by a symmetric permutation
are identical. Let the multiset of column multisets be called the
\emph{column mix}, i.e.,\emph{ $colMix\left(A\right)=\left\{ \left\{ \begin{array}{ccc}
\left\{ \left\{ A_{:,1}\right\} \right\} , & \cdots, & \left\{ \left\{ A_{:,m}\right\} \right\} \end{array}\right\} \right\} $}. Then $colMix\left(A\right)=colMix\left(PAP^{T}\right)$.
\item Elements on the diagonal remain on the diagonal, so the multiset of
elements on the diagonal is the same before and after a symmetric
permutation. Let the multiset of diagonal elements be called the \emph{diagonal
mix}, i.e., $diagMix\left(A\right)=\left\{ \left\{ A_{j,j}\right\} \right\} $.
Then $diagMix\left(A\right)=diagMix\left(PAP^{T}\right)$.
\end{enumerate}

\subsection{\label{subsec:Triplet-Notation}Triplet Notation for a Matrix}

Because symmetric permutations move values around without changing
them, the values can be replaced with a different set of symbols without
changing the action. It is useful to view a $m\times m$ matrix as
a triplet consisting of i) $\Pi$: a partition of the $(i,j)$ locations,
ii) $\Sigma$: a set of distinct symbols, and iii) $g$: a bijective
mapping between symbols and cells of the partition. Each cell of the
partition is associated with a single symbol/value. 

For example, let $M$ be a $m\times m$ matrix. Define $L$ as the
set of all $(i,j)$ locations, 
\[
L=\left\{ \left(i,j\right)\,|\,1\leq i\leq m,\,1\leq j\leq m\right\} .
\]
 Then the triplet notation for $M$ is 
\[
M=\left(\Pi,\Sigma,g\right)
\]
 where $\Pi$ is a partition of $L$ where each cell contains all
of the locations associated with the same symbol, $\Sigma$ is the
set of distinct symbols in $M$, and $g$ is a bijective mapping of
symbols in $\Sigma$ to cells in $\Pi$.

Often $\Pi$ and $\Sigma$ are assumed to be ordered. $\Pi$ is ordered
by first applying lexicographic ordering within each cell to order
the locations in the cell. The first location in each ordered cell
is designated the \emph{representative location} for that cell. Next,
the representative locations are collected and lexicographically ordered
to order the cells themselves. The set of symbols $\Sigma$ is ordered
such that the first symbol is associated with the first cell of $\Pi$
and the second symbol with the second cell, and so on. So the mapping
$g$ from $\Sigma$ to $\Pi$ is simply the identity mapping. If the
notation $M=\left(\Pi,\Sigma\right)$ appears, it means that $\Pi$
and $\Sigma$ are ordered and $g$ is the identity mapping.

To compare two partitions, $\Pi1$ and $\Pi2$, the representative
location for each cell is assigned to all locations within the cell,
essentially using the representative locations as the symbol set.
Then the \emph{partition difference}, $\Pi1-\Pi2$, is defined as
\[
\left(\Pi1-\Pi2\right)_{i,j}=\begin{cases}
0 & \mbox{if the representative locations at }(i,j)\mbox{ match}\\
1 & \mbox{if the representative locations at }(i,j)\mbox{ do not match.}
\end{cases}
\]

The partition $\Pi$ will be referred to as a \emph{pattern} to evoke
a two dimensional image. $\Pi\left(A\right)$ is a refinement of $\Pi\left(B\right)$,
$\Pi\left(A\right)\preceq\Pi\left(B\right)$, if each cell of $\Pi\left(A\right)$
is a subset of a cell of $\Pi\left(B\right)$ \cite{mckay2014}.

\subsection{\label{subsec:Symbol-Substitution}Symbol Substitution: Lines 11
\& 17}

A consequence of symmetric permutations moving values/symbols around
without changing them is that the symbols can be replaced without
changing the p-similarity relation.
\begin{thm}
\label{thm:ConsistentSymbolSubstitution}Given square matrices $A$
and $B$. $A\underset{P}{\sim}B\iff A_{\overline{\Sigma}}\underset{P}{\sim}B_{\overline{\Sigma}}$
where $A=\left(\Pi_{A},\Sigma,g_{A}\right)$, $B=\left(\Pi_{B},\Sigma,g_{B}\right)$,
$A_{\overline{\Sigma}}=\left(\Pi_{A},\overline{\Sigma},g_{A}\right)$,
$B_{\bar{\Sigma}}=\left(\Pi_{B},\overline{\Sigma},g_{B}\right)$ and
$\left|\Sigma\right|=\left|\overline{\Sigma}\right|$.
\end{thm}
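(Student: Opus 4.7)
The plan is to show that symmetric permutation (which rearranges positions without touching values) and consistent symbol substitution (which relabels values without touching positions) commute as operations on the triplet $(\Pi,\Sigma,g)$. From this commutativity, both directions of the ``iff'' fall out immediately using the same witness permutation $P$.

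First I would pin down what ``consistent symbol substitution'' means in the statement. The hypothesis gives $A=(\Pi_A,\Sigma,g_A)$ and $B=(\Pi_B,\Sigma,g_B)$ over a common symbol set $\Sigma$, and likewise $A_{\overline{\Sigma}}$ and $B_{\overline{\Sigma}}$ over a common $\overline{\Sigma}$ with $|\Sigma|=|\overline{\Sigma}|$. Reading the excerpt's ordered-triplet convention ($\Sigma$ ordered so that the $k$th symbol labels the $k$th cell of $\Pi$, and $g$ is then the identity), the substitution is encoded by a single bijection $\phi\colon\Sigma\to\overline{\Sigma}$ that is applied uniformly to both matrices. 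Concretely, $(A_{\overline{\Sigma}})_{i,j}=\phi(A_{i,j})$ and $(B_{\overline{\Sigma}})_{i,j}=\phi(B_{i,j})$ for every $(i,j)$.

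For the forward direction, I would assume $PAP^{T}=B$ for some permutation matrix $P$ with underlying permutation $\pi$ and compute entrywise
\[
(P A_{\overline{\Sigma}} P^{T})_{i,j}=(A_{\overline{\Sigma}})_{\pi^{-1}(i),\pi^{-1}(j)}=\phi\!\left(A_{\pi^{-1}(i),\pi^{-1}(j)}\right)=\phi\!\left((PAP^{T})_{i,j}\right)=\phi(B_{i,j})=(B_{\overline{\Sigma}})_{i,j}.
\]
So the same $P$ witnesses $A_{\overline{\Sigma}}\underset{P}{\sim}B_{\overline{\Sigma}}$. The reverse direction is symmetric: $\phi$ is a bijection by construction, so $\phi^{-1}\colon\overline{\Sigma}\to\Sigma$ exists, and the same calculation with $\phi^{-1}$ in place of $\phi$ recovers $PAP^{T}=B$ from $P A_{\overline{\Sigma}} P^{T}=B_{\overline{\Sigma}}$.

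The statement is essentially bookkeeping, so there is no genuine obstacle in the argument itself. The only subtle point is notational: one must verify that the implicit substitution bijection is indeed forced to be the same for $A$ and $B$. This is exactly what the shared symbol sets $\Sigma$ and $\overline{\Sigma}$ (together with the ordered-triplet convention from Section~\ref{subsec:Triplet-Notation}) encode; without that shared choice, the symbol classes in $A$ and $B$ could be relabelled incompatibly and the equivalence would fail. I would include one sentence flagging this, and then the entrywise computation above completes the proof.
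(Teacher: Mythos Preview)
Your proposal is correct and follows essentially the same approach as the paper: both argue that the same witness permutation $P$ works before and after substitution, and both dispatch the reverse direction by swapping the roles of $\Sigma$ and $\overline{\Sigma}$. Your version is slightly more explicit in naming the substitution bijection $\phi$ and writing out the entrywise chain, whereas the paper argues verbally that ``everywhere $\sigma_k$ appears \ldots\ will be replaced by $\overline{\sigma}_k$''; but the content is the same.
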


\begin{proof}
Assume $A\underset{P}{\sim}B$ so there exists a permutation matrix
$P$ such that $PAP^{T}=B$. Further assume that the symmetric permutation
moves the symbol at location $(i,j)$ of $A$ to location $(r,s)$
in $B$. So $A_{i,j}=B_{r,s}=\sigma_{k}$ where $\sigma_{k}\in\Sigma$
is the $k^{{\rm th}}$ symbol of $\Sigma$. Everywhere $\sigma_{k}$
appears in $A$ and $B$ will be replaced by $\overline{\sigma}_{k}$,
the $k^{{\rm th}}$ symbol of $\overline{\Sigma}$ in $A_{\overline{\Sigma}}$
and $B_{\overline{\Sigma}}$. Similarly for each of the other symbols
in $\Sigma$. Therefore, $PA_{\overline{\Sigma}}P^{T}=B_{\overline{\Sigma}}$
so $A_{\overline{\Sigma}}\underset{P}{\sim}B_{\overline{\Sigma}}$.

For the other direction, reverse the roles of $\Sigma$ and $\overline{\Sigma}$.
\end{proof}
A consequence of Theorem \ref{thm:ConsistentSymbolSubstitution} is
that the p-similarity problem for real or complex matrices reduces
to solving p-similarity for positive integer matrices. 

The symbol substitution function, $SymbolSubstitution\left(\,\right)$,
appears on lines 11 and 17 in Algorithm \ref{alg:Blind-p-similarity-algorithm}.
The function takes two square arrays of values (real or complex numbers
in line 11, and strings in line 17) and outputs two square, positive
integer matrices. The substitutions are performed so that distinct
input values are assigned distinct output symbols and identical input
values are assigned identical output symbols.

Given $m\times m$ arrays as input, a complexity bound on symbol substitution
is based on assumption that it is performed similar to the Sieve of
Eratosthenes. Arrays are searched in column major order. When a previously
unseen input value is encountered, a new output symbol is assigned
to that input value and $O\left(m^{2}\right)$ comparisons are performed
on each input array to find all locations containing that input value.
Once all matching input values have been replaced with the output
symbol, the next unseen input value is located and the process repeats.
Since there are at most $2m^{2}$ distinct input values, there are
$O\left(m^{4}\right)$ comparisons. So a symbol substitution using
this approach requires $O\left(m^{4}\right)$ comparison operations
times the cost to do a comparison operation. 

The sufficiency argument in Section \ref{sec:The-Crux} uses a second
symbol substitution function $SymSub\left(\,\right)$ which takes
a single square array of values and output a positive integer matrix.
For reproducibility, symbol assignments are performed in a permutation
independent fashion. Assume there a $n$ distinct symbols where there
are $n_{1}$ distinct off-diagonal symbols and $n_{2}$ distinct diagonal
symbols, with $n=n_{1}+n_{2}$. The set of $n_{1}$ distinct off-diagonal
strings are lexicographically ordered and then 1 is assigned to the
first string in the ordered list and so on until $n_{1}$ is assigned
to the last string in the ordered list. Similarly, the list of $n_{2}$
distinct diagonal strings are ordered prior to assigning $\left(n_{1}+1\right)$
to $n$ to them. This guarantees that any symmetric permutation applied
to the array of canonical inner product strings will assign the same
integer to the same string.

\subsection{Shifting and Translating: Lines 13 \& 14}

The next theorem shows up in spectral graph literature with integer
coefficients for $\beta$ and $\gamma$. It is presented here in the
more general form where the coefficients are real numbers.
\begin{thm}
\label{thm:ShiftAndTranslate}Given square matrices $A$ and $B$.
$A\underset{P}{\sim}B\iff\left(\alpha A+\beta I+\gamma J\right)\underset{P}{\sim}\left(\alpha B+\beta I+\gamma J\right)$
where $I$ is the identity matrix, $J$ is the matrix of all ones
and $\alpha$, $\beta$, $\gamma\in\mathbb{R}$ with $\alpha\neq0$.
\end{thm}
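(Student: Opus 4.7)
The plan is to prove both directions by direct computation, exploiting two elementary facts about permutation matrices: they commute with $I$ under symmetric conjugation ($PIP^T = PP^T = I$), and they fix the all-ones matrix $J$ under symmetric conjugation ($PJP^T = J$, since $J = \mathbf{1}\mathbf{1}^T$ and $P\mathbf{1} = \mathbf{1}$ because each row and column of $P$ contains exactly one 1).

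For the forward direction, I would assume $A \underset{P}{\sim} B$ so that $PAP^T = B$ for some permutation matrix $P$. I then distribute $P(\cdot)P^T$ over the sum $\alpha A + \beta I + \gamma J$, invoke linearity, and substitute the two identities $PIP^T = I$ and $PJP^T = J$ together with $PAP^T = B$ to obtain $P(\alpha A + \beta I + \gamma J)P^T = \alpha B + \beta I + \gamma J$, which is exactly what is needed.

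For the reverse direction, I would assume $P(\alpha A + \beta I + \gamma J)P^T = \alpha B + \beta I + \gamma J$ for some permutation matrix $P$, distribute as before, apply the same two identities to cancel the $\beta I$ and $\gamma J$ terms from both sides, and arrive at $\alpha PAP^T = \alpha B$. Here the hypothesis $\alpha \neq 0$ is essential: dividing by $\alpha$ yields $PAP^T = B$, which is $A \underset{P}{\sim} B$.

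There is no real obstacle in this argument; the only subtlety worth flagging is that the \emph{same} permutation matrix $P$ witnesses the similarity on both sides, which is automatic from the algebraic manipulation rather than needing a separate construction. The necessity of $\alpha \neq 0$ should be highlighted explicitly, since for $\alpha = 0$ the shifted matrices are identical regardless of whether $A$ and $B$ are p-similar.
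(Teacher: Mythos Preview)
Your proposal is correct and follows essentially the same approach as the paper: both rely on the identities $PIP^{T}=I$ and $PJP^{T}=J$ and then distribute conjugation over the sum. Your write-up is in fact slightly more thorough, since the paper's proof only displays the forward direction explicitly, whereas you spell out the reverse direction and pinpoint exactly where the hypothesis $\alpha\neq 0$ is invoked.
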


\begin{proof}
By noting that $PIP^{T}=I$ and $PJP^{T}=J$ for all permutation matrices
$P$ so 
\begin{eqnarray*}
PAP^{T} & = & B\\
\alpha PAP^{T}+\beta I+\gamma J & = & \alpha B+\beta I+\gamma J\\
P\left(\alpha A+\beta I+\gamma J\right)P^{T} & = & \alpha B+\beta I+\gamma J.
\end{eqnarray*}
\end{proof}
There is overlap in the capability to replace symbols between Theorem
\ref{thm:ConsistentSymbolSubstitution} and Theorem \ref{thm:ShiftAndTranslate}.
However, if a symbol appears as a diagonal entry and as an off-diagonal
entry, Theorem \ref{thm:ShiftAndTranslate} can make the diagonal
entry distinct from the off-diagonal entry by shifting the spectrum,
changing the number of occurrences of that symbol. Theorem \ref{thm:ConsistentSymbolSubstitution}
cannot change the number occurrences of a symbol, it can only replace
a symbol with a new symbol without changing the the number of occurrences. 

The shift and translate function, $ShiftAndTranslate(\,)$ appears
in lines 13 and 14 of Algorithm \ref{alg:Blind-p-similarity-algorithm}.
The function takes a square, positive integer matrix $M$ and positive
integer values for $\beta$ and $\gamma$ as input and generates the
square, positive integer matrix $\left(M+\beta I+\gamma J\right)$
as output.

\subsection{Color Matrix: Lines 11, 13 \& 14}

Given $m\times m$ matrices $A$ and $B$, define their \emph{color
matrices} as the result of performing a consistent symbol substitution
followed by shifting and translating so that all symbols are positive
integers greater than or equal to three and the symbols on the diagonal
are distinct from all off-diagonal symbols. In particular, assume
$A=\left(\Pi_{A},\Sigma,g_{A}\right)$ and $B=\left(\Pi_{B},\Sigma,g_{B}\right)$
where $\left|\Sigma\right|=k$ is less than or equal to $m^{2}$.
Perform a consistent symbol substitution using $\overline{\Sigma}=\left\{ 1,2,\ldots,k\right\} $
to get $A_{\overline{\Sigma}}$ and $B_{\overline{\Sigma}}$ followed
by shifting and translating using $\beta=m^{2}$ and $\gamma=2$ to
get color matrices $A_{C}=\left(A_{\overline{\Sigma}}+m^{2}I+2J\right)$
and $B_{C}=\left(B_{\overline{\Sigma}}+m^{2}I+2J\right)$.

The smallest integer in $\overline{\Sigma}$ is one and the largest
is $k$, so shifting the spectrum of $A_{\overline{\Sigma}}$ and
$B_{\overline{\Sigma}}$ by $m^{2}I$ guarantees that the diagonal
symbols are distinct from off-diagonal symbols. Then adding $2J$
guarantees the smallest symbol is greater than or equal to three.
\begin{thm}
\label{thm:ColorMatrix}Given square matrices $A$ and $B$ and their
associated color matrices $A_{C}$ and $B_{C}$. $A\underset{P}{\sim}B\iff A_{C}\underset{P}{\sim}B_{C}$
.
\end{thm}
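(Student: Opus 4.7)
The plan is to obtain Theorem \ref{thm:ColorMatrix} as a direct composition of the two preceding theorems, because the color matrix construction is literally the concatenation of the operations they address: a consistent symbol substitution (Theorem \ref{thm:ConsistentSymbolSubstitution}), followed by a shift-and-translate (Theorem \ref{thm:ShiftAndTranslate}). I would simply chain the two biconditionals.

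First, I would apply Theorem \ref{thm:ConsistentSymbolSubstitution} to the pair $A=\left(\Pi_{A},\Sigma,g_{A}\right)$, $B=\left(\Pi_{B},\Sigma,g_{B}\right)$ using the replacement alphabet $\overline{\Sigma}=\{1,2,\ldots,k\}$. Since $|\Sigma|=|\overline{\Sigma}|=k$, the hypothesis is met, and the theorem yields
\[
A\underset{P}{\sim}B \iff A_{\overline{\Sigma}}\underset{P}{\sim}B_{\overline{\Sigma}}.
\]

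Next, I would apply Theorem \ref{thm:ShiftAndTranslate} with $\alpha=1$, $\beta=m^{2}$, and $\gamma=2$ to the integer matrices $A_{\overline{\Sigma}}$ and $B_{\overline{\Sigma}}$, giving
\[
A_{\overline{\Sigma}}\underset{P}{\sim}B_{\overline{\Sigma}} \iff \bigl(A_{\overline{\Sigma}}+m^{2}I+2J\bigr)\underset{P}{\sim}\bigl(B_{\overline{\Sigma}}+m^{2}I+2J\bigr).
\]
By the definition given just before the theorem, the right-hand side is precisely $A_{C}\underset{P}{\sim}B_{C}$. Chaining the two equivalences yields the desired biconditional.

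There is no substantive obstacle: the result is a corollary, and the only ``work'' is checking that the hypotheses of each invoked theorem are satisfied by the stated construction (common symbol set of cardinality $k\leq m^{2}$, nonzero $\alpha$, real $\beta,\gamma$). The one minor subtlety worth noting in passing is that the construction assumes $A$ and $B$ share a common alphabet $\Sigma$; this is harmless, since by the \emph{mix} invariant from Section \ref{subsec:Symmetric-Permutation} any p-similar pair must share its multiset of entries, and when they do not share entries the theorem holds vacuously (neither side of the biconditional can be true, as shifting and translating acts entrywise uniformly on the common background of $2J+m^{2}I$).
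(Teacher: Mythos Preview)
Your proof is correct and follows essentially the same approach as the paper: chain Theorem~\ref{thm:ConsistentSymbolSubstitution} for the symbol substitution step with Theorem~\ref{thm:ShiftAndTranslate} for the shift-and-translate step to obtain the biconditional. The paper's proof is a one-line version of exactly this; your added remark about the shared alphabet is a reasonable clarification but not strictly needed.
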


\begin{proof}
Result follows from applying Theorem \ref{thm:ConsistentSymbolSubstitution}
to the symbol substitution followed by Theorem \ref{thm:ShiftAndTranslate}
for the shifting and translating.
\end{proof}
Color matrices are formed in Algorithm \ref{alg:Blind-p-similarity-algorithm}
as a two step process. Given square input matrices $A$ and $B$,
the symbol substitutions are performed on line 11, followed by the
shift and translations on lines 13 and 14 for $A_{C}$ and $B_{C}$
respectively.

\subsection{Permutation Constraint Graph}

Given a $m\times m$ matrix $M$. The \emph{permutation constraint
graph} (PCG) associated with $M$ is the $m\times m$ rook's graph
with distinct column and row edge weights and vertex colors matching
$M_{C}$, $M$'s color matrix. It is called a permutation constraint
graph because the weighted edges of the rook's graph along with the
vertex coloring add symmetric permutation constraints to $M$. The
rationale behind PCGs is discussed in Section \ref{subsec:Origin-of-PCGs}.

It is well known that rows and columns of rook's graphs can be permuted
independently and that the rows and columns of a square rook's graph
can be exchanged, so the total number of automorphisms is $2\left(m!\right)^{2}$
\cite{rookGraph}. The automorphism group for a PCG is a subgroup
of the rook's graph automorphism group since all of the edges are
rook's graph edges. PCGs break the symmetry that allows rows to be
exchanged with columns by adding edge weights. A PCG has a column
edge weight of `1' and a row edge weight of `2' as shown in Figure
\ref{fig:PCGsubB}. 
\begin{figure}
\noindent \centering{}\hfill{}\subfloat[$3\times3$ rook's graph]{\begin{centering}
\includegraphics[width=2in]{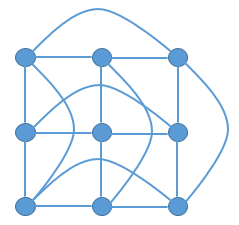}
\par\end{centering}
}\hfill{}\subfloat[\label{fig:PCGsubB}$3\times3$ permutation constraint graph (PCG)]{\begin{centering}
\includegraphics[width=2in]{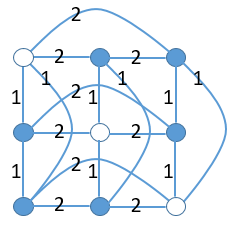}
\par\end{centering}
}\hfill{}\caption{\label{fig:rookVsPCG}$3\times3$ rook's graph versus a $3\times3$
permutation constraint graph}
\end{figure}

To see that a PCG implements all symmetric permutation constraints,
note that rook's graph automorphisms keep rows together and columns
together, similar to symmetric permutations. Symmetric permutation
also requires diagonal elements to remain on the diagonal. PCGs accomplish
this by having the diagonal colors be distinct from off-diagonal colors
as shown in Figure \ref{fig:PCGsubB}. A PCG automorphism must permute
a diagonal entry to another diagonal entry as required by symmetric
permutation. This forces the row permutation and the column permutation
to match, leading to the following theorem.
\begin{thm}
\label{thm:PCGvsPSim}Given square matrices $A$ and $B$. Their respective
PCGs $\Gamma_{A}$ and $\Gamma_{B}$ are isomorphic iff $A\underset{P}{\sim}B$.
\end{thm}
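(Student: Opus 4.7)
The plan is to prove both directions by translating between an explicit permutation $P$ on matrices and a vertex bijection on the PCGs, then leveraging Theorem \ref{thm:ColorMatrix} to reduce to the color matrices, since the PCGs are defined from $A_C$ and $B_C$.

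For the forward direction, I would start by assuming $A\underset{P}{\sim}B$ and invoking Theorem \ref{thm:ColorMatrix} to obtain $PA_{C}P^{T}=B_{C}$ for some permutation matrix $P$ with associated permutation $\sigma$ on $\{1,\ldots,m\}$. I would then define $\phi:V(\Gamma_A)\to V(\Gamma_B)$ by $\phi(i,j)=(\sigma(i),\sigma(j))$. Three things need checking: (i) vertex colors are preserved, which is immediate because $(A_C)_{i,j}=(B_C)_{\sigma(i),\sigma(j)}$; (ii) row edges go to row edges, since $\phi(i,j)$ and $\phi(i,k)$ share the first coordinate $\sigma(i)$; (iii) column edges likewise go to column edges. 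Hence $\phi$ is a color- and edge-weight-preserving isomorphism.

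For the reverse direction, I would start from a PCG isomorphism $\phi:\Gamma_A\to\Gamma_B$ and reconstruct a symmetric permutation. Because row edges have weight $2$ and column edges have weight $1$, $\phi$ sends row edges to row edges and column edges to column edges; consequently it maps each row of $\Gamma_A$ (a clique under the weight-$2$ edges) onto some row of $\Gamma_B$, and each column onto some column. This yields two permutations $\sigma,\tau\in S_m$ such that $\phi(i,j)=(\sigma(i),\tau(j))$. Next I would use the coloring: the diagonal vertices of each PCG carry colors distinct from all off-diagonal colors, so $\phi$ must send diagonal vertices to diagonal vertices. Applying this to $(i,i)$ forces $\sigma(i)=\tau(i)$ for every $i$, so $\sigma=\tau$. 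Taking $P$ to be the permutation matrix of $\sigma$, color preservation $(A_C)_{i,j}=(B_C)_{\sigma(i),\sigma(j)}$ gives $PA_CP^T=B_C$, and Theorem \ref{thm:ColorMatrix} then yields $A\underset{P}{\sim}B$.

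The only genuinely delicate step is the argument that an isomorphism $\phi$ must factor as $(\sigma,\tau)$ acting coordinatewise, i.e., that $\phi$ respects the product structure of the rook's graph. The weighted edges do most of the work (row-cliques must map to row-cliques, column-cliques to column-cliques), but I need to rule out degenerate cases such as $m=1$ or $m=2$ where a row-clique and a column-clique could coincide; I would handle these as small explicit checks. Once the factorization is in hand, the diagonal-color constraint merging $\sigma$ and $\tau$ is the clean payoff, and the rest of the proof is routine.
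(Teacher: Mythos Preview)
Your proposal is correct and follows essentially the same approach as the paper's proof: both directions invoke Theorem~\ref{thm:ColorMatrix} to pass to the color matrices, the edge weights force any isomorphism to factor as $(i,j)\mapsto(\sigma(i),\tau(j))$, and the distinguished diagonal colors then force $\sigma=\tau$. The only difference is presentational: the paper cites the automorphism group of the edge-weighted rook's graph as a known fact, whereas you spell out the row-clique/column-clique argument explicitly; your worry about small $m$ is harmless, since for $m\geq 2$ the weight labels already distinguish row edges from column edges unambiguously, and $m=1$ is trivial.
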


\begin{proof}
Theorem \ref{thm:ColorMatrix} establishes that $A\underset{P}{\sim}B$
iff $A_{C}\underset{P}{\sim}B_{C}$ so if $A\underset{P}{\sim}B$
then $\Gamma_{A}$ and $\Gamma_{B}$ are isomorphic since their vertex
colors are $A_{C}$ and $B_{C}$ are p-similar. In the other direction,
assume PCGs $\Gamma_{A}$ and $\Gamma_{B}$ are isomorphic. The first
step is to show that the re-labeling of vertices takes the form of
a symmetric permutation. $\Gamma_{A}$ and $\Gamma_{B}$ have identical
off-diagonal structure. So any re-labeling must act as an automorphism
for that structure, implying the set of possible re-labelings is a
subset of the rook's graph automorphisms that do not exchange the
rows and columns. Further, since the diagonal colors of $\Gamma_{A}$
and $\Gamma_{B}$ are distinct from off-diagonal colors, the set of
possible re-labelings is further restricted to only include re-labeling
that apply the same permutation to rows and columns. Therefore, any
re-labeling between $\Gamma_{A}$ and $\Gamma_{B}$ is applied as
a symmetric permutation. Otherwise the off-diagonal structure is not
maintained or diagonal vertices of $\Gamma_{A}$ are not mapped to
diagonal vertices of $\Gamma_{B}$. By hypothesis, $\Gamma_{A}$ and
$\Gamma_{B}$ are isomorphic, so there exists a symmetric permutation
mapping vertices of , $\Gamma_{A}$ to vertices of $\Gamma_{B}$ implying
the color matrices $A_{C}$ and $B_{C}$ are p-similar. Then by Theorem
\ref{thm:ColorMatrix} $A$ and $B$ are p-similar.
\end{proof}
Permutation constraint graphs are not explicitly formed in Algorithm
\ref{alg:Blind-p-similarity-algorithm}. However, they provide the
link to the permutation constraint matrices described in the next
section.

\subsection{Permutation Constraint Matrix: Lines 13 \& 14}

The vertex colored adjacency matrix associated with a PCG is called
a \emph{permutation constraint matrix} (PCM). Column major ordering
is used to bijectively map the vertices of the PCG to the diagonal
of the PCM. Let $M_{C}$ be the $m\times m$ color matrix of the PCG.
Then the diagonal of the PCM is given by 
\begin{equation}
D=diag\left(reshape\left(M_{C},m^{2},1\right)\right)\label{eq:D}
\end{equation}
 where $reshape\left(M_{C},m^{2},1\right)$ uses column major ordering
to reshape the $m\times m$ color matrix into a $m^{2}\times1$ vector
and $diag\left(\,\right)$ converts the $m^{2}\times1$ vector into
a $m^{2}\times m^{2}$ diagonal matrix. Applying column major ordering
to the vertices of the PCG creates a regular off-diagonal structure
in the PCM. Let $R$ be the $m^{2}\times m^{2}$ matrix representing
the off-diagonal structure of a PCM. $R$ is given by 
\begin{equation}
R=I_{m}\otimes\left(\mathbf{1}*\left(J_{m\times m}-I_{m}\right)\right)+\left(J_{m\times m}-I_{m}\right)\otimes\left(\mathbf{2}*I_{m}\right)\label{eq:EdgeWeightedR}
\end{equation}
 where $I\otimes\left(\mathbf{1}*\left(J-I\right)\right)$ are the
column edges, $\left(J-I\right)\otimes\left(\mathbf{2}*I\right)$
are the row edges, and $\otimes$ is the Kronecker product. So the
PCM of a $m\times m$ color matrix $M_{C}$ is written as 
\begin{equation}
PCM\left(M_{C}\right)=D+R.\label{eq:PCM}
\end{equation}
 Note that all $m^{2}\times m^{2}$ PCMs have identical off-diagonal
structure. The only difference between two $m^{2}\times m^{2}$ PCMs
is the diagonal.

For example, let $M=J_{3\times3}$, the $3\times3$ matrix of all
ones. The associated color matrix is $M_{C}=\left(J+9I+2J\right)$
and the PCG looks exactly like Figure \ref{fig:PCGsubB}, where the
white diagonal vertices are associated with `12' and the off-diagonal
vertices with `3'. The PCM generated using column major ordering is
\begin{equation}
PCM\left(J+9I+2J\right)=\left[\begin{array}{ccc|ccc|ccc}
12 & 1 & 1 & 2 &  &  & 2\\
1 & 3 & 1 &  & 2 &  &  & 2\\
1 & 1 & 3 &  &  & 2 &  &  & 2\\
\hline 2 &  &  & 3 & 1 & 1 & 2\\
 & 2 &  & 1 & 12 & 1 &  & 2\\
 &  & 2 & 1 & 1 & 3 &  &  & 2\\
\hline 2 &  &  & 2 &  &  & 3 & 1 & 1\\
 & 2 &  &  & 2 &  & 1 & 3 & 1\\
 &  & 2 &  &  & 2 & 1 & 1 & 12
\end{array}\right]\label{eq:Blah}
\end{equation}
where column edges have a weight of `1', row edges a weight of `2',
and the blank areas are filled with zeros. The horizontal and vertical
lines in (\ref{eq:Blah}) are there to emphasize the block structure.
\begin{thm}
\label{thm:ABvsST}Given $m\times m$ matrices $A$ and $B$ and their
associated $m^{2}\times m^{2}$ PCMs, $S=PCM(A_{C})$ and $T=PCM(B_{C})$,
then $A\underset{P}{\sim}B\iff S\underset{P}{\sim}T$.
\end{thm}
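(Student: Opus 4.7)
The plan is to reduce the statement to Theorem \ref{thm:PCGvsPSim} by showing that two PCMs are permutation similar if and only if the underlying PCGs are isomorphic, which in turn is equivalent to $A\underset{P}{\sim}B$. Throughout I will exploit the decomposition $S=D_A+R$, $T=D_B+R$ from (\ref{eq:D})--(\ref{eq:PCM}), where the structural off-diagonal matrix $R$ is the same for $S$ and $T$ because both PCMs are built from $m\times m$ color matrices.

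For the forward direction, I start from an $m\times m$ permutation $P$ with $PA_CP^T=B_C$ (which exists by Theorem \ref{thm:ColorMatrix}) and lift it to the $m^2\times m^2$ permutation $Q=P\otimes P$. Because the vertices of the PCG are enumerated by column-major ordering, $Q$ acts on the reshaped color matrix exactly as the symmetric permutation by $P$ acts on $A_C$, so $QD_AQ^T=D_B$. For the off-diagonal part, I use the Kronecker identity $(P\otimes P)(X\otimes Y)(P\otimes P)^T=(PXP^T)\otimes(PYP^T)$ together with the fact that $PIP^T=I$ and $P(J-I)P^T=J-I$; applied term by term to (\ref{eq:EdgeWeightedR}), this gives $QRQ^T=R$. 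Adding the two identities yields $QSQ^T=T$.

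For the reverse direction, suppose $Q$ is an $m^2\times m^2$ permutation with $QSQ^T=T$. Since the diagonal entries of $S$ and $T$ are exactly the vertex colors of $\Gamma_A$ and $\Gamma_B$ and the off-diagonal entries are exactly the edge weights, $Q$ realises a color- and weight-preserving bijection between the vertex sets of $\Gamma_A$ and $\Gamma_B$, i.e.\ an isomorphism of the PCGs. Theorem \ref{thm:PCGvsPSim} then gives $A\underset{P}{\sim}B$.

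The step I expect to be least routine is the reverse direction, because a priori an arbitrary $m^2\times m^2$ permutation $Q$ need not be of Kronecker form $P\otimes P$. The cleanest way around this is to appeal to Theorem \ref{thm:PCGvsPSim} rather than trying to reconstruct the $m\times m$ permutation directly: the identification of the PCM with the vertex-colored, edge-weighted adjacency matrix of the PCG makes ``permutation similarity of PCMs'' and ``isomorphism of PCGs'' tautologically the same relation, and Theorem \ref{thm:PCGvsPSim} does the real work of translating the isomorphism back into a symmetric permutation on $A$ and $B$. The Kronecker-product calculation above is only needed to produce the $m^2\times m^2$ permutation in the forward direction; the theorem does not assert, and the proof does not require, that every $Q$ witnessing $S\underset{P}{\sim}T$ be of the form $P\otimes P$.
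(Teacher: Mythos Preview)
Your proposal is correct and follows essentially the same route as the paper: both reduce the statement to Theorem~\ref{thm:PCGvsPSim} via the observation that a PCM is precisely the vertex-colored, edge-weighted adjacency matrix of the corresponding PCG under the column-major bijection, so permutation similarity of PCMs and isomorphism of PCGs are the same relation. Your forward direction is more explicit than the paper's---you actually exhibit $Q=P\otimes P$ and verify $QRQ^{T}=R$ and $QD_{A}Q^{T}=D_{B}$ via Kronecker identities, whereas the paper simply invokes the adjacency-matrix equivalence in both directions---but this is a difference in level of detail, not in strategy.
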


\begin{proof}
Using Theorem \ref{thm:PCGvsPSim} and the fact that column major
ordering is a bijective mapping from the $m\times m$ arrays of vertices
of the PCGs to the diagonals of the $m^{2}\times m^{2}$ PCMs.
\end{proof}
If PCMs $S$ and $T$ are p-similar, then the permutation symmetrically
permuting $S$ to $T$ has the form $P\otimes P$. To see this, note
that the adjacency matrix of an edge weighted rook's graph with a
column edge weight of `1' and a row edge weight of `2' is given by
$R$ in (\ref{eq:EdgeWeightedR}). Since rows and columns of $R$
can be permuted independently, but rows cannot be exchanged with columns,
the automorphism group of $R$ is 
\[
Aut\left(R\right)=\left\{ P_{c}\otimes P_{r}\right\} 
\]
 where $P_{c}$ and $P_{r}$ are $m\times m$ permutation matrices
applied to the columns and rows respectively. Since the off diagonal
structure of every $m^{2}\times m^{2}$ PCM is identical to $R$,
any permutation between $S$ and $T$ must have the form $P_{c}\otimes P_{r}$.
However, the vertex coloring on the diagonal of the PCG, when mapped
to the diagonal of the PCM, restrict the possible permutations to
those of the form $P\otimes P$.

Construction of the PCMs from the color matrices occurs in lines 13
and 14 of Algorithm \ref{alg:Blind-p-similarity-algorithm}. The function
$PCM\left(\,\right)$ takes a $m\times m$ color matrix $M_{C}$ as
input and returns the associated $m^{2}\times m^{2}$ PCM defined
by equations (\ref{eq:D}), (\ref{eq:EdgeWeightedR}), and (\ref{eq:PCM})
as output.

To recap, given (real or complex) square matrices $A$ and $B$, a
symbol substitution is performed to convert all values/symbols to
be positive integers. Next the color matrices $A_{C}$ and $B_{C}$
are constructed by shifting and translating the positive integer matrices
so the diagonal symbols differ from the off-diagonal symbols and the
smallest value is greater than or equal to three. The color matrices
are used to color the vertices of edge weighted rook's graphs where
column edges have weight one and row edges weight two creating PCGs.
Then PCMs $S$ and $T$ are constructed by using column major ordering
to map the PCG vertices to the diagonals. Theorem \ref{thm:ABvsST}
shows that $S$ and $T$ are p-similar iff $A$ and $B$ are p-similar.

\subsection{Symbolic Matrix Multiplication}

\subsubsection{Eigenspace Projector Patterns}

PCMs are real symmetric matrices. Therefore they have a unique spectral
decomposition in terms of the distinct eigenvalues and their associated
spectral projectors 
\begin{equation}
A=\sum\lambda_{i}E_{i}\mbox{ and }I=\sum E_{i}\label{eq:SpectralDecomp}
\end{equation}
 where $\lambda_{i}$ is an eigenvalue and $E_{i}$ is its associated
spectral projector \cite[pg. 9]{parlett98}. Assume there are $k$
distinct eigenvalues. Rearrange (\ref{eq:SpectralDecomp}) as
\begin{equation}
A_{1\times m\times m}=\left[\lambda_{1},\cdots,\lambda_{k}\right]\times\left[\begin{array}{c}
E_{1}\\
\vdots\\
E_{k}
\end{array}\right]_{k\times m\times m}\label{eq:A=00003DlambdaE}
\end{equation}
 where the spectral projectors are stacked like a pages in a book
lying on a desk and $A$ is a sheet of paper lying on the desk. Any
symmetric permutation applied to $A$ on the lhs is also applied to
each of the spectral projectors on the rhs. Therefore we can construct
a string from each $(:,i,j)$ column by concatenating the values.
Then substituting distinct symbols for distinct strings yields a pattern.
We call this pattern the \emph{eigenspace projector pattern}.

To find the eigenspace projector pattern you need to compute the eigenvalues
and eigenvectors. This is an iterative process and may be susceptible
to floating point arithmetic errors. Alternatively, the eigenspace
projector pattern can be generated from a stack of powers of $A$
by taking advantage of Wilkinson's observation \cite[pg. 13]{wilkinson65}
that the transpose of the Vandermonde matrix 
\[
V^{T}=\left[\begin{array}{ccc}
1 & \cdots & 1\\
\lambda_{1} & \cdots & \lambda_{k}\\
\vdots & \cdots & \vdots\\
\lambda_{1}^{k-1} & \cdots & \lambda_{k}^{k-1}
\end{array}\right],
\]
 is non-singular. Powers of $A$ can be stacked on the lhs as
\begin{equation}
\left[\begin{array}{c}
I\\
A\\
\vdots\\
A^{k-1}
\end{array}\right]_{k\times m\times m}=\left[\begin{array}{ccc}
1 & \cdots & 1\\
\lambda_{1} & \cdots & \lambda_{k}\\
\vdots & \cdots & \vdots\\
\lambda_{1}^{k-1} & \cdots & \lambda_{k}^{k-1}
\end{array}\right]\times\left[\begin{array}{c}
E_{1}\\
\vdots\\
E_{k}
\end{array}\right]_{k\times m\times m}.\label{eq:A=00003DVxE-1}
\end{equation}
Examining (\ref{eq:A=00003DVxE-1}), one sees that strings constructed
on the lhs of (\ref{eq:A=00003DVxE-1}) yield a pattern identical
to the eigenspace projector pattern since $V^{T}$ is non-singular.
PCMs are symmetric integer matrices so integer arithmetic can be used
to construct the eigenspace projector pattern. However, overflow errors
may occur.

The eigenspace projector pattern is a refinement of the individual
spectral projectors. So given a real symmetric matrix, the eigenspace
projector pattern is the most refined pattern one would expect for
that matrix. In graph isomorphism terms, the eigenspace projector
pattern is like the coarsest equitable partition \cite{mckay80} achievable
using naive color refinement. For two real symmetric matrices to be
p-similar, their eigenspace projector patterns must be p-similar.

\subsubsection{Computing Eigenspace Projector Patterns for SPD Matrices}

For symmetric positive definite (SPD) matrices, one doesn't need strings
constructed from the stack of powers of $A$, lhs of (\ref{eq:A=00003DVxE-1}),
to determine the eigenspace projector pattern. For a high enough power,
say $n$, the pattern $\Pi\left(A^{n}\right)$ is identical to the
eigenspace projector pattern.
\begin{thm}
\label{thm:ESPP}For real, symmetric positive definite matrix $A$,
there exists a finite integer $n$ such that the pattern in $A^{n}$
is identical to the eigenspace projector pattern, and all higher powers,
$A^{n+l}$, $l=1,\ldots$ have the same pattern.
\end{thm}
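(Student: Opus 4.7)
The plan is to lean on the spectral decomposition $A = \sum_{i=1}^{k} \lambda_i E_i$, where the distinct eigenvalues satisfy $\lambda_1 > \lambda_2 > \cdots > \lambda_k > 0$ (positivity coming from the SPD hypothesis), and to read off the identity $A^n = \sum_{i=1}^{k} \lambda_i^n E_i$. By the definition given just above the theorem, the eigenspace projector pattern $\Pi_E$ groups two locations $(j,k)$ and $(r,s)$ together precisely when the tuples $\bigl((E_1)_{j,k},\ldots,(E_k)_{j,k}\bigr)$ and $\bigl((E_1)_{r,s},\ldots,(E_k)_{r,s}\bigr)$ coincide. I want to show that for $n$ sufficiently large, the pattern $\Pi(A^n)$, which simply groups locations by equal numerical value of the entry $A^n_{j,k}$, coincides with $\Pi_E$, and remains equal to $\Pi_E$ for every larger exponent.

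The easy direction I would dispatch first: if $(j,k) \sim_E (r,s)$, then $A^n_{j,k} - A^n_{r,s} = \sum_i \lambda_i^n\bigl((E_i)_{j,k}-(E_i)_{r,s}\bigr) = 0$ for every $n$, so $\Pi_E$ refines $\Pi(A^n)$ for every $n \geq 0$. The substance of the proof is to rule out accidental coincidences between locations lying in distinct ESPP cells, and to show that once they disappear, they stay gone.

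For this I would fix a pair $(j,k) \not\sim_E (r,s)$, set $c_i := (E_i)_{j,k} - (E_i)_{r,s}$, and note that not every $c_i$ vanishes. Let $i^\star$ be the index of the largest $\lambda_i$ for which $c_i \neq 0$; this is well defined because only finitely many nonzero $c_i$'s appear and the $\lambda_i$ are totally ordered. Dividing the identity
\[
A^n_{j,k} - A^n_{r,s} \;=\; \sum_{i=1}^{k} c_i \lambda_i^n
\]
by $\lambda_{i^\star}^n$ leaves $c_{i^\star}$ plus a sum over indices $i$ with $\lambda_i < \lambda_{i^\star}$ and $c_i \neq 0$, each weighted by $(\lambda_i/\lambda_{i^\star})^n$. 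Since those ratios are strictly less than $1$, the remainder tends to $0$ and the quotient tends to $c_{i^\star} \neq 0$. Consequently there is a finite threshold $N_{(j,k),(r,s)}$ past which the difference is nonzero, and critically the same threshold continues to work for every larger exponent. Taking the maximum of $N_{(j,k),(r,s)}$ over the finitely many location pairs outside $\Pi_E$ yields a single integer $n$ for which $\Pi(A^n) = \Pi_E$ and, by the uniformity just noted, $\Pi(A^{n+l}) = \Pi_E$ for every $l \geq 1$.

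The main obstacle is the dominant-eigenvalue step, which really uses strict positivity and hence distinct magnitudes of the eigenvalues. Without the SPD hypothesis one could have $|\lambda_i| = |\lambda_{i^\star}|$ with opposite signs, so that $\sum_i c_i \lambda_i^n$ vanishes along arithmetic subprogressions of $n$ and the single-limit argument breaks down; in that more general setting the stabilization would have to be extracted from the linear-recurrence structure of the sequence $(A^n_{j,k})_n$ or by working with $A^2$ in place of $A$. In the SPD case, however, the positivity lets the dominant-term argument go through cleanly and delivers both the existence of $n$ and the stability for all higher powers in one shot.
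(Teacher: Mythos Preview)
Your proof is correct and follows essentially the same approach as the paper's: both isolate the dominant eigenvalue contribution in the exponential sum $\sum_i c_i\lambda_i^n$ to show that the difference $A^n_{j,k}-A^n_{r,s}$ is eventually nonzero and remains so, then take a maximum over the finitely many pairs. Your direct limit argument (dividing by $\lambda_{i^\star}^n$) is in fact cleaner than the paper's contradiction-based version, but the underlying idea is identical.
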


\begin{proof}
Assume symmetric positive definite matrix $A$ has $k$ distinct eigenvalues
and they are ordered $0<\lambda_{1}<\lambda_{2}<\cdots<\lambda_{k}$.
We only need to focus on a subset of locations from the lhs and rhs
of (\ref{eq:A=00003DVxE-1}), one for each distinct string. Let $(i,j)$
and $(r,s)$ be two off-diagonal locations with distinct strings.
Let $eij=E\left(:,i,j\right)$ and $ers=E\left(:,r,s\right)$ be the
column vectors associated with the strings on the rhs of (\ref{eq:A=00003DVxE-1})
and $aij$ and $ars$ be the column vectors associated with the strings
on the lhs of (\ref{eq:A=00003DVxE-1}). Note that 
\[
\left[\begin{array}{ccc}
1 & \cdots & 1\end{array}\right]\times eij=\left[\begin{array}{ccc}
1 & \cdots & 1\end{array}\right]\times ers=0
\]
 from (\ref{eq:SpectralDecomp}) since $(i,j)$ and $(r,s)$ are off-diagonal
locations.

Let $\delta=eij-ers$, then $\delta\neq0$ since the strings differ
but 
\begin{equation}
\left[\begin{array}{ccc}
1 & \cdots & 1\end{array}\right]\times\delta=0.\label{eq:sumDeltaEqualsZero}
\end{equation}
 Equation \ref{eq:sumDeltaEqualsZero} also applies to differences
between distinct strings on the diagonal since 
\[
\left[\begin{array}{ccc}
1 & \cdots & 1\end{array}\right]\times eii=\left[\begin{array}{ccc}
1 & \cdots & 1\end{array}\right]\times ejj=1.
\]
 Comparisons between diagonal and off-diagonals strings are ignored
since they can never match.

Given that $eij\neq ers$, then $aij\neq ars$ and there is at least
one power $t$, $1\leq t\leq k-1$, such that $A_{i,j}^{t}\neq A_{r,s}^{t}$.
The question remains, is there a finite integer $n$ such that $A_{i,j}^{n}\neq A_{r,s}^{n}$
and $A_{i,j}^{n+l}\neq A_{r,s}^{n+l}$ for all $l=1,...$.

Assume for every sequence of consecutive powers $t$ where $A_{i,j}^{t}-A_{r,s}^{t}\neq0$,
there is a maximum integer $n$ such that $A_{i,j}^{n}-A_{r,s}^{n}\neq0$
but $A_{i,j}^{n+1}-A_{r,s}^{n+1}=0$. Now write $A_{i,j}^{n+1}-A_{r,s}^{n+1}=0$
as 
\begin{equation}
A_{i,j}^{n+1}-A_{r,s}^{n+1}=\left[\lambda_{1}^{n+1},\cdots,\lambda_{k}^{n+1}\right]\times\delta=\left[\begin{array}{ccc}
1 & \cdots & 1\end{array}\right]\times\left[\begin{array}{c}
\lambda_{1}^{n+1}\delta_{1}\\
\vdots\\
\lambda_{k-1}^{n+1}\delta_{k-1}\\
\lambda_{k}^{n+1}\delta_{k}
\end{array}\right]=0.\label{eq:SumN+1EqualsZero}
\end{equation}
 Without loss of generality, assume that $\delta_{k}\neq0$, and make
the following substitution 
\begin{equation}
\left[\begin{array}{c}
\lambda_{1}^{n+1}\delta_{1}\\
\vdots\\
\lambda_{k-1}^{n+1}\delta_{k-1}\\
\lambda_{k}^{n+1}\delta_{k}
\end{array}\right]=\left[\begin{array}{c}
\lambda_{1}^{n+1}\delta_{1}\\
\vdots\\
\lambda_{k-1}^{n+1}\delta_{k-1}\\
-\left(\lambda_{1}^{n+1}\delta_{1}+\cdots+\lambda_{k-1}^{n+1}\delta_{k-1}\right)
\end{array}\right]\label{eq:BigSub}
\end{equation}
 where the right hand vector uses (\ref{eq:SumN+1EqualsZero}) to
substitute for $\lambda_{k}^{n+1}\delta_{k}$. Equating the last terms
of (\ref{eq:BigSub}) yields
\[
\lambda_{k}^{n+1}\delta_{k}=-\left(\lambda_{1}^{n+1}\delta_{1}+\cdots+\lambda_{k-1}^{n+1}\delta_{k-1}\right)
\]
or 
\[
\lambda_{k}^{n+1}=\frac{-\left(\lambda_{1}^{n+1}\delta_{1}+\cdots+\lambda_{k-1}^{n+1}\delta_{k-1}\right)}{\delta_{k}}.
\]
 Without loss of generality, assume $\delta_{k-1}\neq0$, and factor
$\lambda_{k-1}^{n+1}$ out of the numerator 
\[
\lambda_{k}^{n+1}=\lambda_{k-1}^{n+1}\left(\frac{-\left(\left(\frac{\lambda_{1}}{\lambda_{k-1}}\right)^{n+1}\delta_{1}+\cdots+1\delta_{k-1}\right)}{\delta_{k}}\right),
\]
and rearrange to get 
\begin{eqnarray}
\left(\frac{\lambda_{k}}{\lambda_{k-1}}\right)^{n+1} & = & \frac{-\left(\left(\frac{\lambda_{1}}{\lambda_{k-1}}\right)^{n+1}\delta_{1}+\cdots+1\delta_{k-1}\right)}{\delta_{k}}.\label{eq:contradiction}
\end{eqnarray}
 Taking the limit of both sides of (\ref{eq:contradiction}) results
in 
\[
\lim_{n\rightarrow\infty}\left(\frac{\lambda_{k}}{\lambda_{k-1}}\right)^{n+1}=+\infty\mbox{ while }\lim_{n\rightarrow\infty}\left(\frac{-\left(\left(\frac{\lambda_{1}}{\lambda_{k-1}}\right)^{n+1}\delta_{1}+\cdots+1\delta_{k-1}\right)}{\delta_{k}}\right)=\frac{-\delta_{k-1}}{\delta_{k}}
\]
 creating a contradiction. Therefore, there must be some finite integer
$n$ such that $A_{i,j}^{n}\neq A_{r,s}^{n}$ and $A_{i,j}^{n+l}\neq A_{r,s}^{n+l}$
for all $l=1,\ldots$.

Applying the same argument pairwise to all distinct off-diagonal strings
shows that a $n$ exists for each distinct pair of strings. Taking
the maximum over all of the $n$'s yields the desired result.
\end{proof}
Theorem \ref{thm:ESPP} does not guarantee that $n$ is less than
or equal to $k-1$, or that the value for $k$ is known. It states
that mathematically the pattern in powers of a real SPD matrix converges
to the eigenspace projector pattern. In practice, computer arithmetic
is finite precision. So if $n$ is large, the emergent pattern will
only reflect the eigenspace projector(s) associated with the dominant
eigenvalue(s).

\subsubsection{\label{subsec:Why-Symbolic-Squaring}Why Symbolic Matrix Multiplication}

Theorem \ref{thm:ESPP} says the eigenspace projector patterns of
integer SPD matrices can be computed using integer arithmetic. However,
it is not known when the eigenspace projector pattern will be reached,
and the computation is susceptible to overflow. When the author attempted
to find eigenspace projector patterns using recursive squaring, overflow
would occur after a few iterations. To avoid overflow, a symbol substitution
is performed to reduce the magnitude of the integers. The process
repeated alternating between recursive squaring and symbol substitution
until either the pattern converged or the number of symbols was so
large that symbol substitution could not reduce the magnitudes enough
to prevent overflow.

\textbf{Key observation}: There exist cases where the symbol substitution
resulted in a matrix whose eigenspace projector pattern is a strict
refinement of the original eigenspace projector pattern.\emph{ }These
``new'' matrices have the same p-similarity relationship as the
original matrices but a more refined eigenspace projector pattern.

An example is the first graph of case 10 from the website http://funkybee.narod.ru/graphs.htm.
The graph has 11 vertices and its adjacency matrix is given in Figure
\ref{fig:Case10Graph1FunkyBee}.
\begin{figure}
\begin{centering}
\includegraphics[scale=0.75]{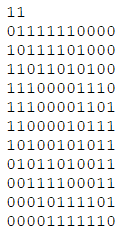}
\par\end{centering}
\caption{\label{fig:Case10Graph1FunkyBee}Case 10, graph 1 from http://funkybee.narod.ru/graphs.htm}

\end{figure}
 Its eigenspace projector pattern has 29 cells but recursive squaring
with symbol substitution resulted in a pattern with 30 cells. An exhaustive
search revealed four automorphisms. Applying the automorphisms yielded
30 orbits. The orbits are identical to the pattern generated by the
recursive squaring with symbol substitution. Comparing the orbits
with the eigenspace projector pattern shows that the cell associated
with symbol `6' in the eigenspace projector pattern contains two orbits,
see Figure \ref{fig:Case10Graph1_ESPP+Orbits}.
\begin{figure}
\centering{}\subfloat[Eigenspace Projector Pattern]{\includegraphics[width=2.5in]{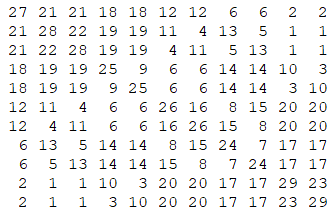}

}\hfill{}\subfloat[Orbits]{\includegraphics[width=2.5in]{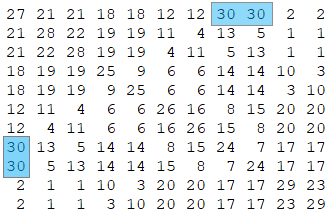}

}\caption{\label{fig:Case10Graph1_ESPP+Orbits}Case10 graph 1, eigenspace projector
pattern versus orbits}
\end{figure}

The question is how to find these more refined eigenspace projector
patterns without resorting to random symbol substitutions. This led
to examining symbolic matrix multiplication as a possible method for
simultaneously checking all possible symbol substitutions while avoiding
overflow.

\subsubsection{\label{subsec:CIPS}Canonical Inner Product Strings}

It turns out symbolic matrix multiplication also has an advantage
over regular matrix multiplication in separating orbits. Assume $A\underset{P}{\sim}B$
then there exists a $P$ such that $PAP^{T}=B$. If $P:i\rightarrow r\mbox{ and }j\rightarrow s$
then $A_{i,j}=B_{r,s}$. Also, 
\begin{eqnarray*}
PAP^{T}\times PAP^{T} & = & B\times B\\
P\left(AP^{T}\times PA\right)P^{T} & = & B\times B
\end{eqnarray*}
 implying 
\[
\left[AP^{T}\times PA\right]_{i,j}=\left[B\times B\right]_{r,s}
\]
where 
\[
A_{i,:}P^{T}\times PA_{:,j}=B_{r,:}\times B_{:,s}.
\]
 Since symmetric application of $P$ moves row $i$ of $A$ to become
row $r$ of $B$ (with the values permuted) and column $j$ of $A$
becomes column $s$ of $B$ (with the values permuted in the same
way) the multiset of terms in the inner products are identical. Further,
the order of the factors within a term are also identical. To see
why this is interesting, assume the inner products from two locations
are given by 
\[
\left[\begin{array}{cc}
\alpha & \beta\end{array}\right]\times\left[\begin{array}{c}
\beta\\
\alpha
\end{array}\right]=\alpha\beta+\beta\alpha=\gamma
\]
and 
\[
\left[\begin{array}{cc}
\alpha & \alpha\end{array}\right]\times\left[\begin{array}{c}
\beta\\
\beta
\end{array}\right]=\alpha\beta+\alpha\beta=\gamma.
\]
 Using regular matrix multiplication, both inner products appear to
be identical, evaluating to $\gamma$. However, symbolically the strings
formed from the vector of terms 
\[
\left[\begin{array}{c}
\alpha\beta\\
\beta\alpha
\end{array}\right]\mbox{ and \ensuremath{\left[\begin{array}{c}
\alpha\beta\\
\alpha\beta
\end{array}\right]}}
\]
are different, implying the two locations are not in the same orbit.
So symbolic inner products can separate orbits that appear to be the
same to regular matrix multiplication.
\begin{defn}
Define the \emph{canonical inner product string} of $u^{T}\times v$
as the ordered multiset of inner product terms where the first factor
in each term is from the row vector $u^{T}$ and the second from the
column vector $v$ 
\[
canonicalString\left(u^{T}\times v\right)=order\left(\left\{ \left\{ \left(u_{k},v_{k}\right)\right\} \right\} \right).
\]
To create the canonical inner product string, the multiset of terms
is split into two parts: terms involving diagonal symbols and terms
involving two off-diagonal symbols. For terms involving diagonal symbols,
they are ordered such that the row vector diagonal symbol comes first,
followed by the term involving the column diagonal symbol. For terms
involving two off-diagonal symbols, they are lexicographically ordered.
Then the ordered parts are concatenated, diagonal terms followed by
off-diagonal terms, to construct the canonical inner product string.
\end{defn}

\subsubsection{\label{subsec:symSqrDef} $\mathbf{SymSqr}$: Line 17}

A PCM is a symmetric positive integer matrix whose diagonal symbols
are distinct from off-diagonal symbols. Therefore the output of symbolically
squaring a PCM has a pattern that is symmetric and the diagonal canonical
inner product strings are distinct from off-diagonal canonical inner
product strings. The canonical inner product strings are constructed
as described in Section \ref{subsec:CIPS}. 
\begin{defn}
$SymSqr\left(\right)$ is a function that takes a symmetric matrix
whose diagonal symbols are distinct from off-diagonal symbols as input
and generates an array of canonical inner product strings as output.
Since the canonical inner product strings at $(i,j)$ and $(j,i)$
likely differ, the output array is made symmetric by choosing the
lessor of the canonical inner product strings at $(i,j)$ and $(j,i)$
to represent both locations. 
\end{defn}

Using a Jordan product to construct matching canonical strings for
$(i,j)$ and $(j,i)$ was considered. If a single string is constructed
from the combined terms of the inner products at $(i,j)$ and $(j,i)$,
it is possible that other locations $(r,s)$ and $(s,r)$, with distinct
canonical inner product strings, will generate the same string if
their terms are mixed together. This would hide the fact that the
locations are actually in different orbits. However, if the Jordan
product is interpreted as lexicographically ordering the pair of canonical
strings from $(i,j)$ and $(j,i)$ and concatenating them into a single
string, the final strings will be distinct and equivalent to choosing
the lessor string.
\begin{thm}
\label{thm:MatchingStrings}Locations in the same orbit have identical
canonical inner product strings.
\end{thm}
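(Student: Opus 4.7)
The plan is to unpack what ``same orbit'' means and then push the automorphism through the inner product. Suppose $(i,j)$ and $(r,s)$ lie in the same orbit of the PCM $S$ under its automorphism group. By definition there is a permutation matrix $P$ with $PSP^{T}=S$ and $P(i)=r$, $P(j)=s$. Equivalently, $S_{a,b}=S_{P(a),P(b)}$ for every pair $(a,b)$. (As noted after Theorem \ref{thm:ABvsST}, such a $P$ must have the form $Q\otimes Q$, but that extra structure is not needed here; only the automorphism identity is used.)

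Next I would compare the two inner products term by term. The list of ordered pairs produced by $SymSqr$ at $(i,j)$ is $\bigl\{\!\bigl\{(S_{i,k},S_{k,j})\bigr\}\!\bigr\}_{k}$, and at $(r,s)$ it is $\bigl\{\!\bigl\{(S_{r,l},S_{l,s})\bigr\}\!\bigr\}_{l}$. Reindex the second multiset by $l=P(k)$. The automorphism identity gives
\[
S_{r,P(k)}=S_{P(i),P(k)}=S_{i,k}\quad\text{and}\quad S_{P(k),s}=S_{P(k),P(j)}=S_{k,j},
\]
so the pair contributed by $k$ at $(i,j)$ equals the pair contributed by $P(k)$ at $(r,s)$. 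Because $P$ is a bijection on the index set, the two multisets of ordered pairs coincide.

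Finally I would argue that equal multisets yield equal canonical inner product strings. The ordering rule from Section \ref{subsec:CIPS} is a deterministic function of the multiset alone: the multiset is split into ``contains a diagonal symbol'' and ``two off-diagonal symbols'' parts, each part is sorted by a fixed rule (diagonal-first convention, then lexicographic), and the parts are concatenated. Since PCM diagonal symbols are distinct from off-diagonal symbols, this split depends only on the symbol values in each pair, not on the index $k$; thus equal multisets produce equal splits and hence the same canonical string. Conclusion: locations in the same orbit have identical canonical inner product strings.

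The only real subtlety, and the step I would be most careful about, is justifying that the diagonal/off-diagonal split and the chosen ordering are intrinsic to the multiset, so that the automorphism relabels nothing that affects the canonical string; everything else is a direct substitution using $PSP^{T}=S$.
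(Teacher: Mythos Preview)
Your argument is correct and is exactly the approach the paper takes: the paper's proof simply cites the derivation in Section~\ref{subsec:CIPS}, which is precisely the ``reindex by $l=P(k)$ and use $PSP^{T}=S$'' computation you wrote out, together with the observation that the canonical string depends only on the multiset of ordered pairs. If anything, your write-up is more explicit than the paper's one-line proof, particularly in justifying that the diagonal/off-diagonal split and ordering rule are functions of the multiset alone.
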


\begin{proof}
By the arguments used in Section \ref{subsec:CIPS} to derive and
construct the canonical inner product strings.
\end{proof}
The contrapositive of Theorem \ref{thm:MatchingStrings} causes patterns
to be refined. If two locations in the same cell of $\Pi\left(M\right)$
have distinct canonical strings when $M$ is symbolically squared,
they cannot be in the same orbit and will be assigned to different
cells in the new pattern $\Pi\left(SymSqr\left(M\right)\right)$.
\begin{thm}
\label{thm:DistinctStrings}Locations whose canonical inner product
strings differ are not in the same orbit.
\end{thm}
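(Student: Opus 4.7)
The plan is to obtain Theorem \ref{thm:DistinctStrings} as the direct contrapositive of Theorem \ref{thm:MatchingStrings}, which has already been established. Theorem \ref{thm:MatchingStrings} asserts the implication ``in the same orbit'' $\Rightarrow$ ``identical canonical inner product strings''; contraposition yields ``distinct canonical inner product strings'' $\Rightarrow$ ``not in the same orbit'', which is exactly the claim. So the proof reduces to invoking the earlier theorem.

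To make the step explicit, I would suppose for contradiction that two locations $(i,j)$ and $(r,s)$ have distinct canonical inner product strings yet lie in the same orbit of the automorphism group acting on $(i,j)$ locations. Same-orbit membership means there is an automorphism of the PCM sending $(i,j)$ to $(r,s)$, and (as noted after Theorem \ref{thm:ABvsST}) every such automorphism is of the form $P \otimes P$, i.e., acts as a symmetric permutation on the underlying color matrix. Theorem \ref{thm:MatchingStrings} applied to this symmetric permutation forces the canonical strings at the two locations to agree, contradicting the hypothesis.

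There is essentially no obstacle here: the substantive content was already developed in Section \ref{subsec:CIPS}, where the canonical ordering convention was chosen precisely so that a symmetric permutation carries the ordered multiset of inner product terms at $(i,j)$ to the ordered multiset at $(r,s)$, and in Theorem \ref{thm:MatchingStrings} itself. The only thing to double-check is that the direction of Theorem \ref{thm:MatchingStrings} being contraposed is the one needed, which it is, since that theorem is a one-way implication from orbit membership to string equality. The proof itself will therefore be a one- or two-sentence citation of Theorem \ref{thm:MatchingStrings}.
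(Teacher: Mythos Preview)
Your proposal is correct and matches the paper's own proof, which is simply the one line ``Contrapositive of Theorem \ref{thm:MatchingStrings}.'' The additional unpacking you give (the contradiction argument and the remark about the $P\otimes P$ form of automorphisms) is consistent with the paper's setup but goes beyond what the paper itself writes down.
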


\begin{proof}
Contrapositive of Theorem \ref{thm:MatchingStrings}.
\end{proof}
\begin{thm}
\label{thm:NoSplittingOrbits}If $M$ is a square matrix whose diagonal
symbols are distinct from off-diagonal symbols, then the cells of
$\Pi\left(SymSqr\left(M\right)\right)$ represent disjoint sets of
orbits.
\end{thm}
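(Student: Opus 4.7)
The plan is to prove that each cell of $\Pi(SymSqr(M))$ is a union of orbits of the automorphism group of $M$ acting on locations, which is exactly what is meant by ``cells represent disjoint sets of orbits.'' Equivalently, I will show that the new partition refines the orbit partition: if two locations $(i,j)$ and $(r,s)$ lie in the same orbit, then $SymSqr(M)$ assigns them the same symbol and they land in the same cell.

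Suppose $(i,j)$ and $(r,s)$ lie in the same orbit, witnessed by a symmetric permutation $P$ with $PMP^T=M$ that sends $i\mapsto r$ and $j\mapsto s$. By Theorem \ref{thm:MatchingStrings} the canonical inner product string at $(i,j)$ agrees with the canonical inner product string at $(r,s)$. The same permutation $P$ also sends $(j,i)$ to $(s,r)$, so Theorem \ref{thm:MatchingStrings} likewise forces the canonical string at $(j,i)$ to equal the one at $(s,r)$. Hence the unordered pair of strings from $\{(i,j),(j,i)\}$ coincides with the unordered pair from $\{(r,s),(s,r)\}$, and choosing the lesser of each pair yields identical results. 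This is precisely the symmetrization rule built into $SymSqr$, so after symmetrization $(i,j)$ and $(r,s)$ still carry the same string.

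Finally, the symbol-substitution step inside $SymSqr$ assigns a symbol as a function of the string alone, so identical strings produce identical symbols. Therefore $(i,j)$ and $(r,s)$ receive the same symbol and belong to the same cell of $\Pi(SymSqr(M))$. Since orbit-mates are never separated, each cell is a union of whole orbits, which is the claim.

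The main obstacle I anticipate is justifying that the $(i,j)$ versus $(j,i)$ symmetrization step does not accidentally split an orbit. The subtlety is that the symmetrization picks a string based on a lexicographic comparison, not on any geometric content of $M$, so one must verify that the unordered pair of strings at the two transpose positions is itself orbit-invariant. Once I observe that any symmetric permutation $P$ sends the set $\{(i,j),(j,i)\}$ onto the set $\{(r,s),(s,r)\}$, orbit-invariance of the ``lesser'' choice is immediate, and the rest of the argument follows directly from Theorem \ref{thm:MatchingStrings} together with consistency of symbol substitution.
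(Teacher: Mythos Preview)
Your proof is correct and follows the same route as the paper, which simply invokes Theorem~\ref{thm:MatchingStrings} to conclude that orbit-mates receive identical canonical strings and hence land in the same cell. Your treatment is in fact more careful than the paper's one-line argument, since you explicitly verify that the $(i,j)$/$(j,i)$ symmetrization step preserves orbit-invariance; one small slip is that symbol substitution is not part of $SymSqr$ itself (it outputs strings, and $\Pi(SymSqr(M))$ is already determined by those strings), but this does not affect the argument.
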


\begin{proof}
This is a direct consequence of Theorem \ref{thm:MatchingStrings}.
Canonical inner product strings for locations in the same orbit are
identical. Therefore locations in the same orbit are assigned to the
same cell.
\end{proof}
Theorem \ref{thm:NoSplittingOrbits} does not guarantee there is one
orbit per cell. Only that locations in an orbit will not be split
across multiple cells.

The symbolic squaring function, $SymSqr\left(\right)$, appears twice
on line 17 of Algorithm \ref{alg:Blind-p-similarity-algorithm}. Each
input is a $m^{2}\times m^{2}$ symmetric, positive integer matrix
whose diagonal symbols are distinct from the off-diagonal symbols.
The outputs are $m^{2}\times m^{2}$ square arrays of canonical inner
product strings, where each string is composed of $m^{2}$ terms.

\subsubsection{\label{subsec:Monotonic-Refinement}Monotonic Refinement}

Let $M$ be a square matrix whose diagonal symbols are distinct from
the off-diagonal symbols. The next theorem addresses repeated symbolic
squaring of $M$. Theorem \ref{thm:MonotonicConvergence} shows that
symbolic squaring monotonically refines the pattern since elements
of two different cells will never be assigned to the same cell.
\begin{thm}
\label{thm:MonotonicConvergence}Repeated symbolic squaring of a square
matrix whose diagonal symbols are distinct from off-diagonal symbols
monotonically refines the pattern until a stable pattern is reached.
\end{thm}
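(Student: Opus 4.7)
The plan is to prove two claims in sequence: (i) a single application of $SymSqr$ produces a pattern that is a refinement of the input pattern, and (ii) the resulting descending chain of partitions stabilizes after finitely many steps.

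For (i), I would argue by contrapositive: if locations $(i,j)$ and $(r,s)$ lie in the same cell of $\Pi(SymSqr(M))$, then $M_{i,j}=M_{r,s}$, so they already lay in the same cell of $\Pi(M)$. Sharing a cell in $\Pi(SymSqr(M))$ means the two canonical inner product strings agree. The key observation is that $M_{i,j}$ itself is explicitly recoverable from the canonical string at $(i,j)$: in the inner product $\sum_k (M_{i,k},M_{k,j})$, the index choices $k=i$ and $k=j$ yield the terms $(M_{i,i},M_{i,j})$ and $(M_{i,j},M_{j,j})$, each pairing one diagonal factor with one off-diagonal factor, whereas every other index $k\notin\{i,j\}$ gives a term with two off-diagonal factors. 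Because the hypothesis guarantees that diagonal symbols are distinct from off-diagonal symbols, the ``diagonal-involving'' terms are unambiguously identifiable inside the string; indeed, by construction of $SymSqr$ they appear first. From the second factor of the first such term (and the first factor of the second), one reads off $M_{i,j}$. The $\min$-with-transpose symmetrization preserves this data because the canonical string at $(j,i)$ also encodes $M_{i,j}$ (with $M_{i,i}$ and $M_{j,j}$ swapped). Thus matching output strings force $M_{i,j}=M_{r,s}$. The diagonal case $i=j$ is analogous and easier, and the separation between diagonal and off-diagonal is automatically preserved in the output because the two kinds of locations produce structurally different strings (one diagonal-diagonal term versus two diagonal-off-diagonal terms). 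Hence $\Pi(SymSqr(M))\preceq\Pi(M)$.

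For (ii), iterating (i) produces a descending chain
\[
\Pi(M)\succeq\Pi(SymSqr(M))\succeq\Pi(SymSqr^2(M))\succeq\cdots
\]
of partitions of the finite set of matrix locations. At each step the number of cells can only weakly increase, and it is bounded above by the total number of locations. Therefore equality $\Pi(SymSqr^{N+1}(M))=\Pi(SymSqr^N(M))$ must occur at some step $N$, and from that point on the pattern remains fixed under further symbolic squaring. This is precisely the ``stable pattern'' asserted by the theorem.

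The main obstacle is the bookkeeping in (i): one must verify that the entry $M_{i,j}$ really does survive intact in the canonical string in spite of two complications, namely (a) the splitting of terms into diagonal-involving and purely off-diagonal parts, and (b) the $\min\{\text{string}(i,j),\text{string}(j,i)\}$ symmetrization. Both are handled by the same observation used above: $M_{i,j}$ appears in the distinguished diagonal-involving portion of both the $(i,j)$-string and the $(j,i)$-string, and the distinction between the diagonal-involving portion and the rest is forced by the distinctness-of-diagonal-symbols hypothesis. Once this is in place, the remainder of the argument is the finiteness observation in (ii), which is immediate.
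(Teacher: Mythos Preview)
Your proposal is correct and follows essentially the same approach as the paper: both arguments hinge on the observation that the diagonal-involving terms $(D_{i,i},M_{i,j})$ and $(M_{i,j},D_{j,j})$ in the canonical inner product string are distinguishable from the purely off-diagonal terms (by the hypothesis on diagonal symbols), so $M_{i,j}$ is recoverable from the string and distinct input symbols force distinct output strings. You are somewhat more explicit than the paper in handling the $\min$-symmetrization and in spelling out the finiteness argument for eventual stabilization, but the core idea is identical.
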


\begin{proof}
Let $M$ be a square matrix whose diagonal symbols are distinct from
the off-diagonal symbols. To show that refinement is monotonic, it
is enough to show that two locations with distinct symbols can never
have identical canonical inner product strings. To see this, let $(i,j)$
and $(r,s)$ be two off-diagonal locations where $M_{i,j}\neq M_{r,s}$.
Without loss of generality assume that $i<j$ and $r<s$. Let $D$
be the diagonal of $M$. Then the inner products $\left[M\times M\right]_{i,j}$
and $\left[M\times M\right]_{r,s}$ look like 
\[
\left[M\times M\right]_{i,j}=\left[\begin{array}{ccccc}
\cdots & D_{i,i} & \cdots & M_{i,j} & \cdots\end{array}\right]\times\left[\begin{array}{c}
\vdots\\
M_{i,j}\\
\vdots\\
D_{j,j}\\
\vdots
\end{array}\right]=\left[\begin{array}{ccc}
1 & \cdots & 1\end{array}\right]\times\left[\begin{array}{c}
\vdots\\
D_{i,i}M_{i,j}\\
\vdots\\
M_{i,j}D_{j,j}\\
\vdots
\end{array}\right]
\]
 and 
\[
\left[M\times M\right]_{r,s}=\left[\begin{array}{ccccc}
\cdots & D_{r,r} & \cdots & M_{r,s} & \cdots\end{array}\right]\times\left[\begin{array}{c}
\vdots\\
M_{r,s}\\
\vdots\\
D_{s,s}\\
\vdots
\end{array}\right]=\left[\begin{array}{ccc}
1 & \cdots & 1\end{array}\right]\times\left[\begin{array}{c}
\vdots\\
D_{r,r}M_{r,s}\\
\vdots\\
M_{r,s}D_{s,s}\\
\vdots
\end{array}\right].
\]
 The canonical inner product string for $\left[M\times M\right]_{i,j}$
includes terms $\left\{ (D_{i,i},M_{i,j}),\,(M_{i,j},D_{j,j})\right\} $
and the canonical string for $\left[M\times M\right]_{r,s}$ includes
terms $\left\{ (D_{r,r},M_{r,s}),\,(M_{r,s},D_{s,s})\right\} $. Since
$M_{i,j}\neq M_{r,s}$ the strings do not match, and they will be
assigned distinct symbols.

A similar argument holds for diagonal inner product strings where
$M_{i,i}\neq M_{r,r}$. Their canonical inner product strings have
terms $\left(D_{i,i},D_{i,i}\right)$ and $\left(D_{r,r},D_{r,r}\right)$
respectively that do not match. Diagonal locations and off-diagonal
locations have a different structure to their canonical inner product
strings. Diagonal locations have a single term involving diagonal
symbols, whereas off-diagonal locations have two terms involving diagonal
symbols. Therefore, symbolic squaring either strictly refines the
pattern or has reached a stable pattern.
\end{proof}

\begin{thm}
\label{thm:MaxItersSymSqr}If $M$ is a symmetric $m\times m$ matrix
whose diagonal symbols are distinct from off-diagonal symbols, then
symbolic squaring followed by symbol substitution converges to a stable
pattern within $m$ iterations.
\end{thm}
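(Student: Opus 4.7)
My plan is to combine the monotone refinement of Theorem \ref{thm:MonotonicConvergence} with a counting argument that ties the number of strict refinements to $m$. Theorem \ref{thm:MonotonicConvergence} already gives that the sequence of patterns $\Pi_0 = \Pi(M), \Pi_1, \Pi_2, \ldots$ produced by repeated application of $SymSqr$ followed by symbol substitution is a monotone refinement chain: cells are only split, never merged. Because symbol substitution is deterministic on a canonical-string array (identical strings receive identical symbols, distinct strings receive distinct symbols), as soon as $\Pi_k = \Pi_{k-1}$ the rest of the sequence is constant. So it suffices to bound the number of iterations at which a \emph{strict} refinement occurs.

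Every strict refinement increases the number of cells by at least one. The hypothesis that diagonal symbols are distinct from off-diagonal symbols is preserved throughout the iteration, as the proof of Theorem \ref{thm:MonotonicConvergence} records: diagonal canonical inner product strings contain exactly one diagonal-involving term while off-diagonal ones contain two, so diagonal cells and off-diagonal cells never merge with each other under $SymSqr$. In particular the diagonal restriction of $\Pi_k$ lives on the $m$ diagonal locations and can strictly refine at most $m-1$ times before becoming fully discrete. This is the source of the linear, rather than quadratic, flavour of the bound.

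The main obstacle, and the step I expect to do the real work, is to push this counting to cover off-diagonal splits as well, so that the combined total of strict refinements stays at most $m$ rather than the naive $O(m^2)$ one would get from counting off-diagonal cells alone. The mechanism I would use is the two distinguished terms $(D_{i,i}, M_{i,j})$ and $(M_{i,j}, D_{j,j})$ that sit in every off-diagonal canonical string. Using these terms, one can argue by induction on the iteration index that any newly detected distinction between two off-diagonal canonical strings at step $k$ must be traceable to a distinction that was already present at step $k-1$, either on the diagonal or on an off-diagonal pair whose own traceback chain eventually lands on the diagonal.

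The delicate part is making this traceback quantitative: I would want to assign to each strict off-diagonal refinement a ``witness'' step at which a diagonal distinction first appeared, in an injective enough way that the total iteration count cannot exceed the number of diagonal strict refinements by more than a bounded amount, yielding the bound $m$ overall. If this coupling between diagonal and off-diagonal refinements can be formalised cleanly, the theorem follows from the monotone chain length argument and the $m$-point bound on the diagonal partition.
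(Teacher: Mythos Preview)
Your approach is genuinely different from the paper's, and the gap you flag as ``delicate'' is real and, as stated, does not close. Off-diagonal canonical strings contain not only the two diagonal-involving terms $(D_{i,i},M_{i,j})$ and $(M_{i,j},D_{j,j})$ but also all the purely off-diagonal terms $(M_{i,k},M_{k,j})$. Two locations $(i,j)$ and $(r,s)$ can therefore be separated at step $k$ even when $D_{i,i}=D_{r,r}$, $D_{j,j}=D_{s,s}$, and $M_{i,j}=M_{r,s}$, simply because the multisets of off-diagonal terms differ. Such a split is not witnessed by any diagonal distinction, past or present, so your proposed injection from strict refinements into diagonal events has no candidate image for it. Without that injection the counting collapses back to the naive $O(m^{2})$ bound on the total number of cells, not $m$.

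The paper obtains the bound $m$ by a completely different, spectral route. It observes that after symbol substitution one may take each iterate to be a real SPD matrix, and then invokes the Vandermonde identity (equation \eqref{eq:A=00003DVxE-1}) together with Theorem \ref{thm:ESPP}: the eigenspace projector pattern of an $m\times m$ SPD matrix is recoverable from the stack $I,M,\ldots,M^{k-1}$ with $k\le m$ the number of distinct eigenvalues. Replacing ordinary powers by the symbolic-squaring iterates and using Theorem \ref{thm:MonotonicConvergence} (each iterate refines its predecessor), the $(m-1)$st iterate already refines the eigenspace projector pattern, which is the coarsest stable pattern one can expect. The $m$ in the bound is thus the eigenvalue count, not a diagonal-cell count; your combinatorial traceback is not the mechanism the paper uses, and I do not see how to repair it without importing some structural input of comparable strength.
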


\begin{proof}
Given $M$, a symmetric $m\times m$ matrix whose diagonal symbols
are distinct from off-diagonal symbols. Recall that Theorems \ref{thm:ConsistentSymbolSubstitution}
and \ref{thm:ShiftAndTranslate} imply that any pattern that is symmetric
and has diagonal values distinct from off-diagonal values can be replaced
by a SPD matrix with positive integer values. So, $M$ and the output
of symbolically squaring $M$ can always be made to be SPD matrices
with positive integer values.

Now consider equation (\ref{eq:A=00003DVxE-1}). The eigenspace projector
pattern on the rhs of (\ref{eq:A=00003DVxE-1}) can be generated from
the stack of $k-1$ powers of $M$ on the lhs, where $k$ is the number
of distinct eigenvalues. A $m\times m$ matrix has at most $m$ distinct
eigenvalues. So at most $m-1$ powers of $M$ are needed on the lhs
of (\ref{eq:A=00003DVxE-1}) to construct the eigenspace projector
pattern.

Next consider replacing successive powers of $M$ by symbolic squaring
with symbol substitution. At most $m-2$ symbolic squarings are needed
to fill the stack since $I$ and $M$ are the first two layers. Theorem
\ref{thm:MonotonicConvergence} guarantees that repeated symbolic
squaring monotonically refinements the pattern. Therefore, each successive
layer is a refinement of all prior layers. 

Theorem \ref{thm:ESPP} states that the powers of a SPD matrix converge
to the eigenspace projector pattern. The bottom layer of the stack
constructed from repeated symbolic squaring is a refinement of all
prior layers. Therefore the bottom layer is at least as refined as
the eigenspace projector pattern of $M$.

So symbolic squaring converges to a stable pattern within $m$ iterations.
\end{proof}
Symbolic squaring followed by symbol substitution occurs on line 17
of Algorithm \ref{alg:Blind-p-similarity-algorithm}.
\begin{defn}
Let $M$ be a square matrix whose diagonal symbols differ from off-diagonal
symbols. Define $\Pi\left(M^{*}\right)$ as the stable pattern reached
by repeated symbolic squaring and symbol substitution.
\end{defn}

One might wonder why symbolic squaring does not converge in a single
iteration. The answer appears to be related to the number of initial
symbols and the off-diagonal structure of a PCM. For the first iteration,
a PCM has at least five distinct symbols and the off-diagonal is highly
structured. For the five symbol case, see (\ref{eq:Blah}), it can
be shown that after the first iteration, there are 9 distinct symbols.
A second symbolic squaring results in 11 distinct symbols and is the
stable pattern. For another example, the PCM of a Petersen graph has
six symbols. The stable pattern has 65 symbols and is reached on the
third symbolic squaring. So the structure and limited number of initial
symbols typically causes symbolic squaring to take several iterations
to converge. In the authors experience, no case has taken more than
six iterations to converge and the number of distinct symbols can
be in the 10's of millions.

\subsection{\label{subsec:Comparing-Multisets}Comparing Multisets: Line 18}

Theorem \ref{thm:MatchingStrings} shows that matching canonical inner
product strings is a necessary condition for two locations to be in
the same orbit. This implies the multiset of canonical inner product
strings for $SymSqr\left(A\right)$ and $SymSqr\left(B\right)$ must
match for $A$ and $B$ to be p-similar, $mix\left(SymSqr\left(A\right)\right)=mix\left(SymSqr\left(B\right)\right)$\footnote{See Section \ref{subsec:Symmetric-Permutation} for the definitions
of the various mixes.}. Conversely, if the mixes do not match, then $A$ and $B$ cannot
be p-similar. However it is possible to detect when the mixes do not
match by looking at a smaller multiset. If the mixes do not match,
$mix\left(SymSqr\left(A\right)\right)\neq mix\left(SymSqr\left(B\right)\right)$,
then the column mixes do not match, $colMix\left(SymSqr\left(A\right)\right)\neq colMix\left(SymSqr\left(B\right)\right)$,
which in turn implies the diagonal mixes will not match, $diagMix\left(SymSqr\left(A\right)\right)\neq diagMix\left(SymSqr\left(B\right)\right)$.
Column mixes may differ one iteration before the diagonal mixes if
the differences are in off-diagonal symbols. However, after the next
iteration, the off-diagonal differences are reflected in the diagonal
symbols causing the diagonal mixes to differ. Comparing diagonal mixes
is performed on line 18 in Algorithm \ref{alg:Blind-p-similarity-algorithm}.

Comparing diagonal mixes to determine when two matrices are not p-similar
is consistent with Specht's Theorem \cite[pp. 97-98]{hornAndJohnson2013}.
Specht's Theorem gives necessary and sufficient conditions for showing
when two matrices are unitarily similar. Specht's Theorem works by
comparing the traces of sequences of matrix products of various lengths.
For p-similarity it is not enough to show that the traces match, the
diagonal mixes must match. So, matching diagonal mixes, after every
round of symbolic squaring, is a necessary condition for $A$ and
$B$ to be p-similar. This implies differing diagonal mixes is a sufficient
condition for declaring two matrices non p-similar. Note that symbolic
squaring with symbol substitution means the diagonal mixes of an uncountably
infinite number of matrix pairs are compared each iteration.

\subsection{Comparing Patterns, Line 20}

The last function in Algorithm \ref{alg:Blind-p-similarity-algorithm}
compares two patterns and returns TRUE if the patterns are identical.
If the pattern before, $\Pi\left(M\right)$, and after symbolic squaring,
$\Pi\left(SymSqr\left(M\right)\right)$, match then the stable pattern
$\Pi\left(M^{*}\right)$ has been reached for that matrix. Necessary
conditions for $A$ and $B$ to be p-similar are that the diagonal
mixes match and they reach stable patterns on the same iteration.
The first condition is checked on line 18 and the second on line 20
of Algorithm \ref{alg:Blind-p-similarity-algorithm}.

Note that the case where diagonal mixes match and one PCM has reached
a stable pattern but the other has not, will be detected as non p-similar.
The non stable pattern is refined by the symbolic squaring in the
next iteration, changing the column mixes. This in turn implies the
diagonal symbols will change, causing the diagonal mixes to differ.
Ultimately resulting in the PCMs being declared non p-similar.

\subsection{\label{subsec:Algorithm-Complexity}Algorithm \ref{alg:Blind-p-similarity-algorithm}
Complexity}

The overall complexity of Algorithm \ref{alg:Blind-p-similarity-algorithm}
for comparing two $m\times m$ matrices is $O\left(m^{12}\right)$.
The bound is conservative and computed using simple, unoptimized approaches.
In particular, symbol substitutions are assumed to use the method
described in Section \ref{subsec:Symbol-Substitution}.

Below are complexity bounds on individual functions of Algorithm \ref{alg:Blind-p-similarity-algorithm}
given two $m\times m$ matrices as input:
\begin{description}
\item [{Line~11,~SymbolSubstitution}] $O\left(m^{4}\right)$ since there
are $2m^{2}$ locations containing values and $O\left(m^{2}\right)$
comparison operations per substitution with a constant cost per comparison
operation.
\item [{Lines~13\&14,~ShiftAndTranslate}] $O\left(m^{2}\right)$ since
there are $2m^{2}$ locations to update.
\item [{Lines~13\&14,~PCM}] $O\left(m^{4}\right)$ since there are $2m^{4}$
locations to initialize.
\item [{Line~17,~SymSqr}] $O\left(m^{8}\right)$ reasoned as follows:
\begin{itemize}
\item Construct terms from factors: $O\left(m^{6}\right)$ since there are
$2m^{4}$ canonical strings and each string has $m^{2}$ terms and
a constant cost to construct a term from its factors.
\item Construct canonical inner product strings from the terms: $O\left(m^{8}\right)$
since there are $2m^{4}$ strings and it takes a maximum of $O\left(m^{4}\right)$
comparison operations to order the $m^{2}$ terms of a string at a
constant cost per comparison operation.
\item Select lessor string for $(i,j)$ and $(j,i)$ locations: $O\left(m^{6}\right)$
since there are $2m^{4}$ locations and comparing two strings takes
$O\left(m^{2}\right)$ comparison operations at a constant cost per
comparison operation.
\end{itemize}
\item [{Line~17,~SymbolSubstitution}] $O\left(m^{10}\right)$ since there
are $2m^{4}$ locations and $O\left(m^{4}\right)$ string comparison
operations per substitution at a cost of $O\left(m^{2}\right)$ to
compare two strings.
\item [{Line~18,~CompareDiagMultisets}] $O\left(m^{4}\right)$ since
it takes a maximum of $O\left(m^{4}\right)$ comparison operations
to sort a multiset with $m^{2}$ values at a constant cost per comparison
operation. Comparing the sorted multisets takes at most $O\left(m^{2}\right)$
comparison operations.
\item [{Line~20,~ComparePatterns}] $O\left(m^{8}\right)$ since the comparison
is performed by replacing values with their representative locations
and then comparing the results. To replace values with representative
locations involves $2m^{4}$ locations and $O\left(m^{4}\right)$
comparison operations per substitution at a constant cost per comparison
operation. This is followed by at most $O\left(m^{4}\right)$ comparison
operations to compare the two patterns. 
\item [{Lines~16-24,~Repeat~Until~Iterations}] $O\left(m^{2}\right)$
by Theorem \ref{thm:MaxItersSymSqr} applied to a $m^{2}\times m^{2}$
SPD matrix.
\item [{Overall~Complexity}] $O\left(m^{12}\right)$ derived from iterating
the repeat until loop a maximum of $O\left(m^{2}\right)$ times, and
a $O\left(m^{10}\right)$ cost to substitute symbols for canonical
inner product strings each iteration.
\end{description}
As mentioned previously, in the author's experience, no case has taken
more the six iterations to converge. The greatest reduction in overall
complexity will come from optimizing the symbolic substitution applied
to canonical inner product strings. Even better would be to find a
way to avoid symbolic matrix multiplication all together, see Section
\ref{subsec:MatMultIsMonotonic}. In terms of space, Algorithm \ref{alg:Blind-p-similarity-algorithm}
requires $O\left(m^{6}\right)$ locations to hold the $2m^{4}$ canonical
inner product strings, where a location is large enough to hold a
term.

\section{\label{sec:The-Crux}Sufficiency Argument}

In Section \ref{sec:Blind-P-similarity-Algorithm}, Algorithm \ref{alg:Blind-p-similarity-algorithm}
is shown to test necessary conditions for two PCMs to be p-similar
and that differing diagonal mixes are sufficient to determine that
two matrices are not p-similar. The last theoretical hole is to determine
whether the patterns of non p-similar PCMs can stabilize on the same
iteration with matching diagonal mixes. The purpose of this section
is to argue that it is not possible, implying Algorithm \ref{alg:Blind-p-similarity-algorithm}
is necessary and sufficient for determining when two matrices are
p-similar. 

An outline of the argument goes as follows: 
\begin{enumerate}
\item Given a $m^{2}\times m^{2}$ PCM and its automorphism group, the set
of all symmetric positive definite matrices with the same automorphism
group is constructed.
\item The set is shown to be convex. 
\item Repeated symbolic squaring with symbol substitution is shown to act
as a descent method on the convex set, implying the orbits are found
within $m^{2}$ steps. 
\item Having the orbits does not guarantee the symbol to orbit assignments
are consistent across matrices. 
\item Using properties of PCMs, it is shown that an Oracle either matches
all of the columns $S^{*}$ and $T^{*}$, or none of the columns.
This differs from cases where an Oracle is able to partially match
non p-similar matrices, Section \ref{subsec:Origin-of-PCGs}.
\item Next, the PCM of the color matrix for the direct sum of the initial
pair of matrices, $M=PCM\left(\left(A\oplus B\right)_{C}\right)$,
is constructed and symbolically squared until its orbits, $\Pi\left(M^{*}\right)$,
are found. PCMs $S$ and $T$ are embedded within PCM $M$.
\item The patterns at locations associated with $S$ and $T$ in $\Pi\left(M^{*}\right)$
must match $\Pi\left(S^{*}\right)$and $\Pi\left(T^{*}\right)$ since
they are the orbits imposed by $Aut\left(S\right)$ and $Aut\left(T\right)$,
respectively. 
\item Since $S$ and $T$ are not p-similar, the Oracle guaranteed no column
of $S^{*}$ can be matched with any column of $T^{*}$. So orbits
associated with locations of $S^{*}$ in $M^{*}$ are distinct from
orbits associated with locations of $T^{*}$ in $M^{*}$. 
\item So the diagonal symbols in $M^{*}$ associated with locations of $S^{*}$
and $T^{*}$ are distinct. 
\item Therefore, the diagonal symbols of $S^{*}$ and $T^{*}$ are distinct
as long as consistent symbol substitution is performed each symbolic
squaring.
\item This implies comparing diagonal mixes of $S^{*}$ and $T^{*}$ is
sufficient for detecting non p-similar matrices.
\end{enumerate}

\subsection{Crux of the Argument}

Showing that repeated symbolic squaring of a PCM converges to the
orbits is the crux of the argument. Given that it is true, a simplified
argument constructs $M=PCM\left(\left(A\oplus B\right)_{C}\right)$
and then either the multisets of diagonal locations associated with
$A_{C}$ and $B_{C}$ in $M^{*}$ match or they differ, implying one
or more of the orbits are distinct. This would work as a blind permutation
similarity algorithm. This follows the graph isomorphism argument
that two graphs are isomorphic iff there exists an automorphism of
their direct sum that exchanges the graphs.

Repeated symbolic squaring of the direct sum $\left(A\oplus B\right)$,
without constructing the PCM of the direct sum, is not guaranteed
to separate non p-similar matrices. For example, the non p-similar
adjacency matrices `had-sw-32-1' and `had-sw-32-2' from the Bliss
graph collection \cite{bliss} do not have their orbits separated
by repeated squaring of the direct sum. However, their associated
PCM's $S$ and $T$ have distinct diagonal mixes after the third symbolic
squaring.

Algorithm \ref{alg:Blind-p-similarity-algorithm} uses $S$ and $T$
instead of the PCM of the direct sum $M$ because $S$ and $T$ are
four times smaller. This makes working with $S$ and $T$ more practical
than working with $M$. 

\subsection{\label{subsec:WLOG-SPD-Matrices}WLOG Symbolic Squaring Generates
SPD Matrices}

As pointed out in Section \ref{subsec:symSqrDef}, inputs to the symbolic
squaring function $SymSqr()$ are symmetric, positive integer matrices
whose diagonal symbols are distinct from off-diagonal symbols. The
output is a symmetric array of canonical inner product strings whose
diagonal strings are distinct from off-diagonal strings. This is immediately
followed by a symbol substitution where the canonical inner product
strings are replaced by positive integers resulting in an output that
is a symmetric, positive integer matrix whose diagonal symbols are
distinct from off-diagonal symbols. So, one is free to choose the
symbol set, as long as they are positive integers. 

In particular, assume there are $n$ distinct canonical inner product
strings, of which there are $n_{1}$ distinct off-diagonal strings
and $n_{2}$ distinct diagonal strings, with $n_{1}+n_{2}=n$. Choose
the consecutive integers $1$ to $n_{1}$ as the set of symbols to
replace off-diagonal strings and consecutive integers $\left(m^{2}n_{1}+1\right)$
to $\left(m^{2}n_{1}+n_{2}\right)$ as the set of symbols to replace
diagonal strings. Then the resulting matrix will be symmetric and
diagonally dominant by construction, and so symmetric positive definite
(SPD). Therefore in the sufficiency arguments below, symbolic squaring
followed by symbol substitution generates SPD matrices unless otherwise
noted.

\subsection{\label{subsec:SymSub}${\bf SymSub}$}

Algorithm \ref{alg:Blind-p-similarity-algorithm} uses $SymbolSubstitution$
to perform a consistent symbol substitution on two arrays of canonical
inner product strings. In this section, symbol substitution is often
performed on a single array of canonical inner product strings using
a function called $SymSub$. The substitutions are performed in a
permutation independent fashion. The set of $n_{1}$ distinct off-diagonal
strings are lexicographically ordered and then 1 is assigned to the
first string in the ordered list and so on until $n_{1}$ is assigned
to the last string in the ordered list. Similarly, the list of $n_{2}$
distinct diagonal strings are ordered prior to assigning $\left(m^{2}n_{1}+1\right)$
to $\left(m^{2}n_{1}+n_{2}\right)$ to them. This guarantees that
any symmetric permutation applied to the array of canonical inner
product strings will assign the same integer to the same string and
that the result is SPD.

\subsection{A Convex Set}

Theorem \ref{thm:ABvsST} shows that determining p-similarity between
two square matrices is equivalent to determining the p-similarity
between two PCMs. PCMs are symmetric matrices whose diagonal symbols
differ from off-diagonal symbols. Since the spectrum can be shifted
without affecting the p-similarity relationship, there exist SPD PCMs
with the same pattern. One way to generate a SPD PCM is to re-define
the color matrix to use $\gamma=2m^{2}$ instead of $\gamma=2$. Using
$\gamma=2m^{2}$ guarantees the PCM is SPD since it is symmetric and
diagonally dominant by construction. 
\begin{defn}
Given a $m^{2}\times m^{2}$ SPD PCM $M$, let $\mathbb{M}=\left\{ N\in SPD\,|\,Aut\left(N\right)=Aut\left(M\right)\right\} $
be the set of all real SPD matrices whose automorphism group is $Aut\left(M\right)$.
\end{defn}

\begin{thm}
\label{thm:AutIsConvex} $\mathbb{M}$ is a convex set. That is for
all $M1$, $M2\in\mathbb{M}$, the linear combination $\alpha M1+\left(1-\alpha\right)M2$
is also in $\mathbb{M}$ for $0\leq\alpha\leq1$.
\end{thm}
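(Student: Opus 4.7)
The plan is to verify the two defining conditions for $N := \alpha M_1 + (1-\alpha) M_2$ to belong to $\mathbb{M}$: that $N$ is SPD and that $Aut(N) = Aut(M)$. I would dispose of the SPD check first, since it is routine, and then focus on the group condition.

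For symmetry, $N^T = \alpha M_1^T + (1-\alpha) M_2^T = N$ because $M_1$ and $M_2$ are both symmetric. For positive definiteness, pick any nonzero $x \in \mathbb{R}^{m^2}$; then $x^T N x = \alpha\, x^T M_1 x + (1-\alpha)\, x^T M_2 x$ is a nonnegative combination of two strictly positive reals with at least one strictly positive coefficient, hence strictly positive. The boundary cases $\alpha = 0$ or $\alpha = 1$ simply return $M_2$ or $M_1$, which already lie in $\mathbb{M}$ by hypothesis.

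For the easy half of the group condition, $Aut(M) \subseteq Aut(N)$, pick $P \in Aut(M) = Aut(M_1) = Aut(M_2)$. Since $PM_iP^T = M_i$ for $i = 1,2$, linearity gives $PNP^T = \alpha M_1 + (1-\alpha) M_2 = N$, so $P \in Aut(N)$. This is the inclusion that actually feeds the descent argument sketched in the sufficiency outline, since that step only needs invariance of $N$ under $Aut(M)$.

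The main obstacle is the reverse inclusion $Aut(N) \subseteq Aut(M)$. A candidate $Q \in Aut(N)$ yields only the weaker identity $\alpha(Q M_1 Q^T - M_1) = -(1-\alpha)(Q M_2 Q^T - M_2)$, which does not on its own force the two residuals to vanish individually --- accidental cancellation between the summands could, a priori, introduce symmetries not present in either $M_1$ or $M_2$. I would try to rule this out by arguing that the joint invariance of $M_1$ and $M_2$ under the full group $Aut(M)$ confines both to a common commutant algebra, in which the only proportional pair consistent with the SPD constraint on each summand is the trivial zero pair, giving $QM_iQ^T = M_i$ for each $i$ and hence $Q \in Aut(M)$. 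If that structural argument proves too delicate to close cleanly, the weaker conclusion $Aut(M) \subseteq Aut(N)$ already carries the subsequent descent, so the theorem can be retreated to the containment form without disturbing anything downstream.
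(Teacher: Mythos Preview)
Your treatment of the SPD condition and the inclusion $Aut(M)\subseteq Aut(N)$ matches the paper exactly. Where you and the paper diverge is on the reverse inclusion $Aut(N)\subseteq Aut(M)$: you flag it as the obstacle and decline to close it, whereas the paper does attempt a proof. The paper rearranges $P N P^{T}=N$ into
\[
P M_1 P^{T}-M_1 \;=\; \gamma\bigl(M_2-P M_2 P^{T}\bigr),\qquad \gamma=\tfrac{1-\alpha}{\alpha},
\]
and then asserts that since this ``holds for all $\gamma\ge 0$'' both sides must vanish. That is precisely the cancellation fallacy you warned against: $P$ was chosen from $Aut(\alpha M_1+(1-\alpha)M_2)$ for one fixed $\alpha$, so $\gamma$ is a single constant, not a free parameter. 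The paper has silently moved the quantifier over $\alpha$ inside the choice of $P$, which is illegitimate.

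Your caution is in fact vindicated by a counterexample. Whenever $Aut(M)$ is trivial (which occurs for any PCM built from a matrix with no nontrivial symmetric-permutation symmetry), take $M_1=\mathrm{diag}(1,2,\ldots,m^2)$ and $M_2=\mathrm{diag}(m^2,\ldots,2,1)$: both are SPD with trivial automorphism group, hence lie in $\mathbb{M}$, yet their midpoint $\tfrac{1}{2}(M_1+M_2)=\tfrac{m^2+1}{2}I$ has automorphism group all of $S_{m^2}$. So $\mathbb{M}$ as defined is not convex and the theorem as stated is false. Your proposed fallback---retaining only $Aut(M)\subseteq Aut(N)$, i.e.\ working with the genuinely convex superset $\{N\ \text{SPD}:Aut(M)\le Aut(N)\}$---is the correct repair, and it is the only direction the descent heuristic downstream actually invokes.
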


\begin{proof}
Let $M1$ and $M2$ be two matrices in $\mathbb{M}$. It is well known
that the set of real SPD matrices is convex, so the focus of the proof
is on showing that $Aut\left(\alpha M1+\left(1-\alpha\right)M2\right)=Aut\left(M\right)$
for $0\leq\alpha\leq1$.

$\Rightarrow$ For all $P\in Aut\left(M\right)$, $PM1P^{T}=M1$ and
$PM2P^{T}=M2$ since $M1$ and $M2$ are in $\mathbb{M}$. So we have
\begin{eqnarray*}
\alpha PM1P^{T}+\left(1-\alpha\right)PM2P^{T} & = & \alpha M1+\left(1-\alpha\right)M2\\
P\left(\alpha M1+\left(1-\alpha\right)M2\right)P^{T} & = & \alpha M1+\left(1-\alpha\right)M2
\end{eqnarray*}
 showing that $Aut\left(M\right)\leq Aut\left(\alpha M1+\left(1-\alpha\right)M2\right)$.

$\Leftarrow$ For all $P\in Aut\left(\alpha M1+\left(1-\alpha\right)M2\right)$,
we have 
\begin{equation}
P\left(\alpha M1+\left(1-\alpha\right)M2\right)P^{T}=\alpha M1+\left(1-\alpha\right)M2\mbox{ for }0\leq\alpha\leq1.\label{eq:ConvexAutGroup}
\end{equation}
\begin{casenv}
\item For $\alpha=0$, (\ref{eq:ConvexAutGroup}) equals $PM2P^{T}=M2$
so $Aut\left(\alpha M1+\left(1-\alpha\right)M2\right)\leq Aut\left(M\right)$
for $\alpha=0$.
\item For $0<\alpha\leq1$, (\ref{eq:ConvexAutGroup}) can be rewritten
as 
\[
PM1P^{T}+\gamma PM2P^{T}=M1+\gamma M2
\]
 where $\gamma=\frac{\left(1-\alpha\right)}{\alpha}\geq0$. Rearrange
to get 
\begin{equation}
PM1P^{T}-M1=\gamma\left(M2-PM2P^{T}\right).\label{eq:ConvexAutRearranged}
\end{equation}
 Now the matrix differences on the lhs and rhs of (\ref{eq:ConvexAutRearranged})
are fixed for a given $P$ and yet (\ref{eq:ConvexAutRearranged})
holds for all $\gamma\geq0$. This creates a contradiction unless
$\left(PM1P^{T}-M1\right)$ and $\left(M2-PM2P^{T}\right)$ are zero
matrices, implying $PM1P^{T}=M1$ and $PM2P^{T}=M2$. So $Aut\left(\alpha M1+\left(1-\alpha\right)M2\right)\leq Aut\left(M\right)$
for $0<\alpha\leq1$.
\end{casenv}
Therefore $Aut\left(\alpha M1+\left(1-\alpha\right)M2\right)=Aut\left(M\right)$
for $0\leq\alpha\leq1$ so each $\alpha M1+\left(1-\alpha\right)M2$
is in $\mathbb{M}$ implying $\mathbb{M}$ is a convex set.
\end{proof}
$\mathbb{M}$ is not closed under matrix multiplication. For any $N\in\mathbb{M}$,
its inverse $N^{-1}$ is also in $\mathbb{M}$, but the automorphism
group of $N\times N^{-1}=I$ is $S_{m}$, the symmetric group, which
is not a subgroup of $\left\{ P\otimes P\right\} $. However, the
powers of a matrix in $\mathbb{M}$ are in $\mathbb{M}$.

\subsection{$\mathbf{M_{Aut}}$ and $\mathbf{M^{*}}$ are in $\mathbf{\mathbb{M}}$}

Matrices in $\mathbb{M}$ can be separated into equivalence classes
by their patterns (partitions). Elements of a class differ from each
other by a symbol substitution. 
\begin{defn}
\label{def:Maut}Let $M$ be a SPD PCM. Define $M_{Aut}$ as a SPD
matrix representing the orbits imposed by $Aut\left(M\right)$ on
$L$, the set of $(i,j)$ locations, see Section \ref{subsec:Triplet-Notation}.
Locations in the same orbit are assigned the same symbol and locations
in distinct orbits have distinct symbols. Symbols on the diagonal
of $M_{Aut}$ are distinct from its off-diagonal symbols since the
orbits of diagonal elements are distinct from the orbits of off-diagonal
elements. $M_{Aut}$ is symmetric since $M$ is symmetric and the
diagonal symbols are chosen so that $M_{Aut}$ is SPD.
\end{defn}

$M_{Aut}$ is in $\mathbb{M}$ since it is SPD and $Aut\left(M_{Aut}\right)=Aut\left(M\right)$
by the definition of $M_{Aut}$.
\begin{defn}
\label{def:M*}Let $M$ be a SPD PCM. Let $M^{(i)}=SymSub\left(SymSqr\left(M^{(i-1)}\right)\right)$
be the result of the $i^{{\rm th}}$ symbolic squaring and symbol
substitution, where $M^{(0)}=M$. Let $s$ be the iteration where
the stable pattern is reached, then $\Pi\left(M^{(s+j)}\right)=\Pi\left(M^{(s)}\right)$
for $j=1,2,\ldots$. Define $M^{*}$ as a SPD matrix whose pattern
matches $M^{(s)}$, $\Pi\left(M^{*}\right)=\Pi\left(M^{(s)}\right)$.
Each $M^{(i)}$, $i=1,\ldots$, is SPD by Section \ref{subsec:WLOG-SPD-Matrices}.
\end{defn}

The next theorem shows that the $M^{(i)}$ and $M^{*}$ are in $\mathbb{M}$.
\begin{thm}
\label{thm:Aut(M*)=00003DAut(M)}Given SPD PCM $M$ and $\mathbb{M}$.
If $M^{(i)}=SymSub\left(SymSqr\left(M^{(i-1)}\right)\right)$ where
$M^{(0)}=M$ and $M^{*}$ is a SPD matrix representing the stable
pattern resulting from repeated symbolic squaring. Then the $M^{(i)}$,
for $i=1,\ldots$, and $M^{*}$ are in $\mathbb{M}$.
\end{thm}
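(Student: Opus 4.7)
The plan is to proceed by induction on $i$, using the definition $M^{(i)} = SymSub(SymSqr(M^{(i-1)}))$ with $M^{(0)}=M$. Section \ref{subsec:WLOG-SPD-Matrices} already guarantees each $M^{(i)}$ is SPD, so the only content to establish is the equality $Aut(M^{(i)}) = Aut(M)$. The base case $i=0$ is immediate. For the inductive step, assuming $Aut(M^{(i-1)}) = Aut(M)$, I would prove the two inclusions separately and then obtain $M^{*}\in\mathbb{M}$ as a corollary since $M^{*}$ has the same pattern as (and hence equals, up to a symbol relabeling) $M^{(s)}$ for the stable iteration $s$.

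For the easy inclusion, $Aut(M^{(i-1)})\subseteq Aut(M^{(i)})$: take $P\in Aut(M^{(i-1)})$, so $PM^{(i-1)}P^{T}=M^{(i-1)}$. Because a symmetric permutation moves symbols without altering them, the multiset of inner-product terms at $(a,b)$ equals the multiset at $(P(a),P(b))$, and in fact with matching first/second-factor roles, so their canonical inner product strings coincide. Since $SymSub$ as defined in Section \ref{subsec:SymSub} assigns integers to strings purely by lexicographic ordering of the global list (with the diagonal/off-diagonal split), the assignment is permutation-independent. Hence equal strings are replaced by equal integers, yielding $PM^{(i)}P^{T}=M^{(i)}$.

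The main obstacle is the reverse inclusion $Aut(M^{(i)})\subseteq Aut(M^{(i-1)})$: a priori, $P$ might preserve the refined matrix $M^{(i)}$ without preserving $M^{(i-1)}$. The key tool is the internal structure of canonical inner product strings uncovered in the proof of Theorem \ref{thm:MonotonicConvergence}. If $P\in Aut(M^{(i)})$, then $M^{(i)}_{a,b}=M^{(i)}_{P(a),P(b)}$, and since $SymSub$ is injective on strings within each (diagonal/off-diagonal) class, the canonical strings at $(a,b)$ and $(P(a),P(b))$ coincide. By the canonical ordering rule, the two distinguished diagonal-involving terms at an off-diagonal location $(a,b)$ are precisely $(D_{a,a},M^{(i-1)}_{a,b})$ followed by $(M^{(i-1)}_{a,b},D_{b,b})$; because diagonal symbols of $M^{(i-1)}$ are distinct from off-diagonal symbols, these terms are separable from the bulk of the string and match positionally. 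Reading off either factor gives $M^{(i-1)}_{a,b}=M^{(i-1)}_{P(a),P(b)}$. For a diagonal entry, the term $(D_{a,a},D_{a,a})$ in the canonical string at $(a,a)$ yields $M^{(i-1)}_{a,a}=M^{(i-1)}_{P(a),P(a)}$. Hence $PM^{(i-1)}P^{T}=M^{(i-1)}$ and $P\in Aut(M^{(i-1)})=Aut(M)$.

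Closing the induction gives $M^{(i)}\in\mathbb{M}$ for all $i$. For $M^{*}$, apply the argument at $i=s$: $M^{*}$ is SPD by construction and shares $\Pi(M^{*})=\Pi(M^{(s)})$, so $M^{*}$ and $M^{(s)}$ differ only by a consistent symbol substitution, which by Theorem \ref{thm:ConsistentSymbolSubstitution} preserves the p-similarity relation and, by the same permutation-independence of $SymSub$ used above, preserves the automorphism group. Hence $Aut(M^{*})=Aut(M^{(s)})=Aut(M)$, so $M^{*}\in\mathbb{M}$.
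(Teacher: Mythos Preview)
Your proof is correct and follows essentially the same approach as the paper. The only presentational difference is in the reverse inclusion $Aut(M^{(i)})\subseteq Aut(M^{(i-1)})$: the paper simply invokes the refinement $\Pi(M^{(i)})\preceq\Pi(M^{(i-1)})$ from Theorem~\ref{thm:MonotonicConvergence} (equal symbols in the finer partition force equal symbols in the coarser one), whereas you unpack that same mechanism directly by reading $M^{(i-1)}_{a,b}$ off the diagonal-involving terms of the canonical string; your handling of $M^{*}$ via pattern equality is also slightly more careful than the paper's choice of $M^{*}=M^{(s)}$, but amounts to the same thing since matrices with identical patterns have identical automorphism groups.
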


\begin{proof}
Given PCM SPD $M$ and $\mathbb{M}$, the set of all SPD matrices
whose automorphism group is $Aut\left(M\right)$, to show that $M^{(i)}$
and $M^{*}$ are in $\mathbb{M}$ we need to show that they are SPD
and their automorphism groups match $Aut\left(M\right)$. Since $M^{(i)}$
and $M^{*}$ are SPD by Definition \ref{def:M*}, the focus is on
showing that their automorphism groups match $Aut\left(M\right)$.

Let $M^{(i)}=SymSub\left(SymSqr\left(M^{(i-1)}\right)\right)$ where
$M^{(0)}=M$. 

For $i=1$: $M^{(1)}=SymSub\left(SymSqr\left(M^{(0)}\right)\right)$.
To see that $Aut\left(M^{(1)}\right)=Aut\left(M\right)$, first note
that Theorem \ref{thm:MonotonicConvergence} guarantees that 
\[
\Pi\left(M^{(1)}\right)=\Pi\left(SymSub\left(SymSqr\left(M^{(0)}\right)\right)\right)\preceq\Pi\left(M^{(0)}\right)=\Pi\left(M\right).
\]

$\Rightarrow$ For each $P\in Aut\left(M^{(1)}\right)$, if $P$ symmetrically
permutes $(i,j)$ to $(r,s)$ then $M_{i,j}^{(1)}=M_{r,s}^{(1)}$
and $M_{i,j}=M_{r,s}$ since $\Pi\left(M^{(1)}\right)\preceq\Pi\left(M\right)$.
So $PMP^{T}=M$ implying $P\in Aut\left(M\right)$ and $Aut\left(M^{(1)}\right)\leq Aut\left(M\right)$. 

$\Leftarrow$ For each $P\in Aut\left(M\right)$, if $P$ symmetrically
permutes $(i,j)$ to $(r,s)$ then $(i,j)$ and $(r,s)$ are in the
same orbit and Theorem \ref{thm:MatchingStrings} guarantees both
locations have identical canonical inner product strings. Symbol substitution
assigns the same symbol to identical canonical inner product strings
resulting in $M_{i,j}^{(1)}=M_{r,s}^{(1)}$, so $PM^{(1)}P^{T}=M^{(1)}$
and $P\in Aut\left(M^{(1)}\right)$ implying $Aut\left(M\right)\leq Aut\left(M^{(1)}\right)$. 

Therefore $Aut\left(M^{(1)}\right)=Aut\left(M\right)$.

By induction, the automorphism group of $M^{(i)}$ is $Aut\left(M\right)$
for $i=2,\ldots$. 

For $M^{*}$, let $M^{*}=M^{(s)}$ where $s$ is the iteration a stable
pattern is reached, i.e., $\Pi\left(M^{(s+1)}\right)=\Pi\left(M^{(s)}\right)$.
Then $M^{*}$ is in $\mathbb{M}$ since $M^{(s)}$ is in $\mathbb{M}$. 

Therefore all of the intermediate $M^{(i)},$$i=1,\ldots$, and $M^{*}$
are in $\mathbb{M}$.
\end{proof}

\subsection{Lattice and Chain}

The patterns (partitions) of the matrices in $\mathbb{M}$ form equivalence
classes that can be partially ordered by refinement. Let $M_{sup}$
be a PCM in $\mathbb{M}$ with the fewest distinct symbols/cells.
Then the partial ordering forms a complete lattice with $\Pi\left(M_{sup}\right)$
is the supremum and $\Pi\left(M_{Aut}\right)$ as the infimum \cite[pg. 436]{jacobson74}. 

The subset of patterns $\left\{ \Pi\left(M\right),\,\Pi\left(M^{(1)}\right),\,\cdots,\,\Pi\left(M^{*}\right)\right\} $,
where
\[
\Pi\left(M^{*}\right)\prec\cdots\prec\Pi\left(M^{(1)}\right)\prec\Pi\left(M\right)
\]
 is a totally ordered set, called a chain, between $\Pi\left(M\right)$
and $\Pi\left(M^{*}\right)$ \cite{jacobson74}. 

Both $\Pi\left(M^{*}\right)$ and $\Pi\left(M_{Aut}\right)$ are fixed
point patterns (stable partitions) of symbolic squaring, 
\begin{eqnarray*}
\Pi\left(SymSqr\left(M^{*}\right)\right) & = & \Pi\left(M^{*}\right)
\end{eqnarray*}
 and 
\begin{eqnarray*}
\Pi\left(SymSqr\left(M_{Aut}\right)\right) & = & \Pi\left(M_{Aut}\right).
\end{eqnarray*}
Tarski's Lattice-Theoretical Fixed Point Theorem \cite{tarski1955}
guarantees that the set of fixed points of symbolic squaring is not
empty, and that the fixed points form a complete lattice partially
ordered by refinement. Unfortunately it does not say that $\Pi\left(M^{*}\right)$
equals $\Pi\left(M_{Aut}\right)$. If they are identical, then the
process of symbolic squaring PCMs finds the orbits.

\subsection{\label{subsec:ConjugateDirectionDescent}Symbolic Squaring as a Descent
Method}

Symbolic squaring with symbol substitution monotonically refines a
PCM $M$ until a stable pattern, $\Pi\left(M^{*}\right)$, is reached.
Cells of $\Pi\left(M^{(i)}\right)$ represent disjoint sets of orbits,
Theorem \ref{thm:NoSplittingOrbits}. $\mathbb{M}$ is a convex set
composed of all SPD matrices whose automorphism group matches $Aut\left(M\right)$.
Symbolic squaring creates a chain of strictly refined patterns from
$\Pi\left(M\right)$ to $\Pi\left(M^{*}\right)$. The question is
whether $\Pi\left(M^{*}\right)$ is equal to $\Pi\left(M_{Aut}\right)$.

The process of refinement can be viewed as a descent method. Given
$M$ and $\mathbb{M}$, for all $P\in Aut\left(M\right)$ one can
write 
\begin{equation}
M\times M-P\left(MP^{T}\times PM\right)P^{T}=0.\label{eq:AsMinimizationProblem}
\end{equation}
 Equation \ref{eq:AsMinimizationProblem} can be converted to a pattern
difference that is symbol independent as 
\begin{equation}
\left\Vert \Pi\left(M\times M\right)-\Pi\left(P\left(MP^{T}\times PM\right)P^{T}\right)\right\Vert _{F}^{2}=0\label{eq:PiMxM}
\end{equation}
 where $\left\Vert \right\Vert _{F}^{2}$ is the square of the Frobenius
norm and the pattern difference assigns zero to locations with matching
representative locations and one to locations where the representative
locations do not match, see Section \ref{subsec:Triplet-Notation}.
Convert (\ref{eq:PiMxM}) to using symbolic squaring with symbol substitution
as 
\[
\left\Vert \Pi\left(SymSub\left(SymSqr\left(M\right)\right)\right)-\Pi\left(P\left(SymSub\left(SymSqr\left(M\right)\right)\right)P^{T}\right)\right\Vert _{F}^{2}=0
\]
where the function $SymSub()$ performs symbol substitution on a single
matrix using a permutation independent mapping from symbol to canonical
inner product string, as described in Section \ref{subsec:SymSub}.

Let $M_{Aut}$ be a SPD matrix representing the orbits imposed by
$Aut\left(M\right)$, as defined in Definition \ref{def:Maut}. Let
$M^{(i)}=SymSub\left(SymSqr\left(M^{(i-1)}\right)\right)$ where $\Pi\left(M^{(0)}\right)=\Pi\left(M\right)$
and $M^{(i)}\in\mathbb{M}$. Then the objective is to minimize the
cost function 
\begin{equation}
\left\Vert \Pi\left(SymSub\left(SymSqr\left(M_{Aut}\right)\right)\right)-\Pi\left(P\left(SymSub\left(SymSqr\left(M^{(i-1)}\right)\right)\right)P^{T}\right)\right\Vert _{F}^{2}=SSE\label{eq:CostFunction}
\end{equation}
for $i=1,\cdots,*$ where SSE is the sum square error. Since $\Pi\left(M_{Aut}\right)$
is a stable pattern, and each $M^{(i)}$ is in $\mathbb{M}$, (\ref{eq:CostFunction})
can be rewritten as 
\begin{eqnarray}
\left\Vert \Pi\left(M_{Aut}\right)-\Pi\left(M^{(i)}\right)\right\Vert _{F}^{2} & = & SSE.\label{eq:FinalCostFunction}
\end{eqnarray}
 The SSE in (\ref{eq:FinalCostFunction}) decreases monotonically.
Each cell of $\Pi\left(M^{(i)}\right)$ that contains multiple orbits,
has only one orbit with the correct representative location, the remaining
orbits in the cell have the wrong representative location. As symbolic
squaring refines a cell, at least one additional orbit will have the
correct representative location. So the SSE of the pattern difference
decreases every iteration until the stable pattern $\Pi\left(M^{*}\right)$
is reached.

Although we don't know $\Pi\left(M_{Aut}\right)$ the monotonic refinement
of symbolic squaring along with the convexity of $\mathbb{M}$ drives
the iterations towards $\Pi\left(M_{Aut}\right)$. So an iterative
method given by 
\begin{equation}
M^{(i)}=SymSub\left(SymSqr\left(M^{(i-1)}\right)\right)\label{eq:Iterate}
\end{equation}
 that halts when 
\begin{equation}
\left\Vert \Pi\left(M^{(i)}\right)-\Pi\left(M^{(i-1)}\right)\right\Vert _{F}^{2}=0\label{eq:ConvergenceTest}
\end{equation}
causes the successive patterns $\Pi\left(M^{(i)}\right)$ to approach
$\Pi\left(M_{Aut}\right)$. Algorithm \ref{alg:Blind-p-similarity-algorithm}
uses a variant of (\ref{eq:ConvergenceTest}), namely testing $\Pi\left(M^{(i)}\right)$
and $\Pi\left(M^{(i-1)}\right)$ for equality, to detect when the
process has reached a stable pattern, line 20 of Algorithm \ref{alg:Blind-p-similarity-algorithm}. 

In summary, repeated symbolic squaring with symbol substitution finds
the most refined pattern. The refinement is monotonic and converges
to a stable pattern where locations in an orbit are in the same cell.
By the convexity of $\mathbb{M}$, $\Pi\left(M^{*}\right)$ equals
$\Pi\left(M_{Aut}\right)$. 

Given that symbolic squaring finds the orbits, one can construct PCM
$M=PCM\left(\left(A\oplus B\right)_{C}\right)$ from the color matrix
for the direct sum of the input matrices and check whether locations
on the diagonal of $M^{*}$ associated with $A$ and $B$ are in the
same orbits. If not, $A$ and $B$ are not p-similar. The sufficiency
argument would end here. However, the author chooses to continue the
sufficiency argument because a) it will be shown that not only do
the orbits differ for non p-similar matrices but no orbits match,
and b) comparing multisets of diagonal elements to determine p-similarity
fits nicely with Specht's Theorem comparing traces to determine unitary
similarity.

Although the process of symbolic squaring a PCM finds its orbits,
it says nothing about the association of symbol to orbit. Two PCMs,
$A$ and $B$, if processed separately can have the same automorphism
group $Aut\left(A\right)=Aut\left(B\right)$ and partitions $\Pi\left(A\right)=\Pi\left(B\right)$
so the orbits match $M_{Aut\left(A\right)}=M_{Aut\left(B\right)}$.
But if they have different symbol sets, $\Sigma_{A}\neq\Sigma_{B}$
or the same symbol set with different mappings of symbol to cell $g_{A}\neq g_{B}$,
then the matrices are not p-similar. This issue is addressed in the
following sections.

\subsection{${\bf SymMult}$}

Before addressing the issue of symbols to orbits, there is one more
symbolic matrix multiplication function that is useful. It is called
$SymMult\left(\right)$ and takes two diagonal matrices and a square
matrix as input. The first diagonal matrix is applied to the square
matrix from the left and the second from the right, $SymMult\left(D1,\,M,\,D2\right)$.
For example, if 
\[
D1=\left[\begin{array}{cc}
\delta_{1} & 0\\
0 & \delta_{2}
\end{array}\right]\mbox{, }M=\left[\begin{array}{cc}
M_{1,1} & M_{1,2}\\
M_{2,1} & M_{2,2}
\end{array}\right]\mbox{, and }D2=\left[\begin{array}{cc}
\overline{\delta}_{1} & 0\\
0 & \overline{\delta}_{2}
\end{array}\right]
\]
 then $SymMult\left(D1,M,D2\right)$ is the canonical inner product
string array 
\begin{equation}
SymMult\left(D,M,D\right)=\left[\begin{array}{cc}
\delta_{1}M_{1,1}\overline{\delta}_{1} & \delta_{1}M_{1,2}\overline{\delta}_{2}\\
\delta_{2}M_{2,1}\overline{\delta}_{1} & \delta_{2}M_{2,2}\overline{\delta}_{2}
\end{array}\right].\label{eq:SysMult(DMD)}
\end{equation}

\begin{thm}
\label{thm:P(DAD)P' =00003D DBD}Given square matrices $A$ and $B$
whose diagonal symbols are distinct from off-diagonal symbol. If there
exists a permutation matrix $P$ such that $PAP^{T}=B$ and pairs
of column vectors $\left(u,v\right)$ and $\left(x,y\right)$ such
that $Pu=v$ and $Px=y$. Then 
\[
P\left(SymMult\left(D_{u},A,D_{x}\right)\right)P^{T}=SymMult\left(D_{v},B,D_{y}\right)
\]
 where $D_{u}=diag(u)$, $D_{v}=diag(v)$, $D_{x}=diag(x)$, and $D_{y}=diag(y)$
are diagonal matrices. Further, 
\[
\Pi\left(SymMult\left(D_{u},A,D_{x}\right)\right)\preceq\Pi\left(A\right)\mbox{ and }\Pi\left(SymMult\left(D_{v},B,D_{y}\right)\right)\preceq\Pi\left(B\right).
\]
\end{thm}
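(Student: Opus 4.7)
The plan is to treat $SymMult$ entrywise and exploit the observation that, as a canonical string, the $(i,j)$ entry of $SymMult(D_u, A, D_x)$ is the ordered three-factor string $u_i\, A_{i,j}\, x_j$. Under that observation, the theorem becomes purely a bookkeeping exercise tracking how a symmetric permutation relabels indices.

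First I would verify that conjugating the diagonal factors by $P$ produces the other diagonal factors. Let $\sigma$ be the permutation represented by $P$. The hypothesis $Pu=v$ is equivalent to $v_{\sigma(i)}=u_i$, which makes the $(\sigma(i),\sigma(i))$ entry of $PD_uP^{T}$ agree with the corresponding entry of $D_v$; so $PD_uP^{T}=D_v$, and analogously $PD_xP^{T}=D_y$. These two facts are the only ``vector-level'' inputs the proof needs.

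Next I would show the two canonical-string arrays agree at each location. A symmetric permutation by $P$ moves the string at $(i,j)$ to $(\sigma(i),\sigma(j))$ without modifying it, so the left-hand side at $(\sigma(i),\sigma(j))$ is the string $u_i\, A_{i,j}\, x_j$. Constructing the right-hand side at the same location directly from $v$, $B$, and $y$ gives $v_{\sigma(i)}\, B_{\sigma(i),\sigma(j)}\, y_{\sigma(j)}$; by $Pu=v$, $PAP^{T}=B$, and $Px=y$, each factor agrees with its left-hand counterpart, so the two ordered triples coincide. This establishes $P(SymMult(D_u,A,D_x))P^{T}=SymMult(D_v,B,D_y)$.

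For the refinement, I would argue that equality of canonical strings is strictly stronger than numerical equality of the products. If $(i,j)$ and $(r,s)$ lie in the same cell of $\Pi(SymMult(D_u,A,D_x))$, then the ordered triples $u_i\, A_{i,j}\, x_j$ and $u_r\, A_{r,s}\, x_s$ agree factor by factor, so in particular $A_{i,j}=A_{r,s}$, placing $(i,j)$ and $(r,s)$ in the same cell of $\Pi(A)$. The hypothesis that diagonal symbols of $A$ are distinct from off-diagonal symbols guarantees the comparison never crosses the diagonal/off-diagonal boundary. The identical argument applied on the $B$ side gives $\Pi(SymMult(D_v,B,D_y))\preceq\Pi(B)$.

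The main obstacle is essentially notational: one must respect the ordered symbolic structure of a canonical string, so that ``equal'' is read as coordinate-by-coordinate agreement of factors rather than coincidental numerical equality of a product. Once that convention is pinned down, both the symmetric-permutation identity and the two refinement statements fall out of the index-tracking above without any further machinery.
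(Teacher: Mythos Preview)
Your proposal is correct and follows essentially the same approach as the paper: both first verify $PD_uP^{T}=D_v$ and $PD_xP^{T}=D_y$, then argue that symmetric permutation carries the three-factor string $u_iA_{i,j}x_j$ intact to location $(\sigma(i),\sigma(j))$, and both prove refinement by noting that equal strings force the middle factor $A_{i,j}$ to agree. The only cosmetic difference is that the paper first writes the analogous chain $PAP^{T}=B \Rightarrow P(D_uAD_x)P^{T}=D_vBD_y$ with ordinary matrix multiplication as a scaffold before transferring to the symbolic version, whereas you go directly to the entrywise symbolic argument; your route is slightly more direct but the content is the same.
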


\begin{proof}
Assume $PAP^{T}=B$ and $\left(u,v\right)$ and $\left(x,y\right)$
are column vector pairs such that $Pu=v$ and $Px=y$. Then $PD_{u}P^{T}=D_{v}$
and $PD_{x}P^{T}=D_{y}$ where $D_{u}=diag(u)$, $D_{v}=diag(v)$,
$D_{x}=diag(x)$, and $D_{y}=diag(y)$ are diagonal matrices. 

Using regular matrix multiplication, multiplying $PAP^{T}=B$ on the
left and right yields
\begin{eqnarray}
PAP^{T} & = & B\label{eq:First}\\
\left(PD_{u}P^{T}\right)PAP^{T}\left(PD_{x}P^{T}\right) & = & D_{v}BD_{y}\label{eq:Second}\\
P\left(D_{u}AD_{x}\right)P^{T} & = & D_{v}BD_{y}\label{eq:Third}
\end{eqnarray}
so $P\left(D_{u}AD_{x}\right)P^{T}=D_{v}BD_{y}$. 

From (\ref{eq:Second}), the output arrays of canonical inner product
strings of 
\[
SymMult\left(\left(PD_{u}P^{T}\right),\left(PAP^{T}\right),\left(PD_{x}P^{T}\right)\right)\mbox{ and }SymMult\left(D_{v},B,D_{y}\right)
\]
 are identical. To go from (\ref{eq:Second}) to (\ref{eq:Third})
requires 

\[
P\left(SymMult\left(D_{u},A,D_{x}\right)\right)P^{T}=SymMult\left(\left(PD_{u}P^{T}\right),\left(PAP^{T}\right),\left(PD_{x}P^{T}\right)\right).
\]
 This is true since symmetric permutations move values around without
changing them. So a symmetric permutation can move strings as easily
as real or complex numbers. Therefore, 

\[
P\left(SymMult\left(D_{u},A,D_{x}\right)\right)P^{T}=SymMult\left(D_{v},B,D_{y}\right).
\]

It is easy to see from (\ref{eq:SysMult(DMD)}) that $\Pi\left(SymMult\left(D_{u},A,D_{x}\right)\right)\preceq\Pi\left(A\right)$.
Let $\alpha$ be the symbol associated with one cell of $A$. Locations
within that cell are the only locations that will have $\alpha$ as
the middle symbol on the lhs of (\ref{eq:SysMult(DMD)}). Therefore,
$\Pi\left(SymMult\left(D_{u},A,D_{x}\right)\right)$ is a refinement
of $\Pi\left(A\right)$. A similar argument holds for $\Pi\left(SymMult\left(D_{v},B,D_{y}\right)\right)$
and $\Pi\left(B\right)$. Therefore $\Pi\left(SymMult\left(D_{u},A,D_{x}\right)\right)\preceq\Pi\left(A\right)$
and $\Pi\left(SymMult\left(D_{v},B,D_{y}\right)\right)\preceq\Pi\left(B\right)$
as desired.
\end{proof}
For PCMs it can be shown that the pattern from the first symbolic
squaring is the same as symbolically applying the diagonal from the
left and right.
\begin{thm}
\label{thm:FirstDMD}Given $m^{2}\times m^{2}$ PCM $M=D+R$ where
$D$ is the diagonal and $R=I\otimes\left(1*\left(J-I\right)\right)+\left(J-I\right)\otimes\left(2*I\right)$
are the off-diagonal values. Then 
\[
\Pi\left(SymMult\left(D,M,D\right)\right)=\Pi\left(SymSqr\left(M\right)\right).
\]
\end{thm}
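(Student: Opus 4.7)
The plan is to show that both patterns partition the locations of a PCM by exactly the triple of values $(D_{i,i}, M_{i,j}, D_{j,j})$, and therefore coincide. The key structural fact to exploit is that for the rook's graph off-diagonal structure $R=I\otimes(1\ast(J-I))+(J-I)\otimes(2\ast I)$ every entry $M_{i,k}$ with $i\neq k$ is either $0$, $1$, or $2$, and the multiset of ``off-diagonal-only'' terms appearing in $[M\times M]_{i,j}$ depends only on which of those three cases holds for $M_{i,j}$.

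First, I would write out $SymMult(D,M,D)$ explicitly from \eqref{eq:SysMult(DMD)}: the string at $(i,j)$ is $D_{i,i}\,M_{i,j}\,D_{j,j}$, so two locations $(i,j)$ and $(r,s)$ land in the same cell of $\Pi(SymMult(D,M,D))$ iff $(D_{i,i},M_{i,j},D_{j,j})=(D_{r,r},M_{r,s},D_{s,s})$. Next I would describe the canonical inner product string at $(i,j)$ as prescribed in Section~\ref{subsec:CIPS}. For $i\neq j$ there are exactly two diagonal terms: one from $k=i$, contributing $(D_{i,i},M_{i,j})$ with the row diagonal first, and one from $k=j$, contributing $(M_{i,j},D_{j,j})$ with the column diagonal second. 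All remaining nonzero terms involve $M_{i,k}\cdot M_{k,j}$ with $k\notin\{i,j\}$ and both factors off-diagonal in $M$.

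Then I would do a short case analysis on $M_{i,j}\in\{0,1,2\}$ to show that the multiset of off-diagonal terms in the canonical string depends only on $M_{i,j}$: if $M_{i,j}=1$ (same column) there are exactly $m-2$ terms $(1,1)$; if $M_{i,j}=2$ (same row) there are exactly $m-2$ terms $(2,2)$; if $M_{i,j}=0$ there is one $(1,2)$ and one $(2,1)$ term (from the two distinct rook's-graph paths of length two between vertices in different rows and columns). For $i=j$ the canonical string has the single diagonal term $(D_{i,i},D_{i,i})$ plus the fixed off-diagonal contribution of $m-1$ copies of $(1,1)$ and $m-1$ copies of $(2,2)$, the same for every diagonal location up to the diagonal term.

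With these two descriptions in hand, the equivalence is immediate. Because the off-diagonal term multisets depend only on $M_{i,j}$ (or are identical in the diagonal case), two off-diagonal locations $(i,j)$ and $(r,s)$ have matching canonical inner product strings iff the ordered diagonal terms match, i.e., $(D_{i,i},M_{i,j})=(D_{r,r},M_{r,s})$ and $(M_{i,j},D_{j,j})=(M_{r,s},D_{s,s})$, which is exactly the condition that $(D_{i,i},M_{i,j},D_{j,j})=(D_{r,r},M_{r,s},D_{s,s})$. The diagonal case is analogous. Hence $\Pi(SymSqr(M))$ and $\Pi(SymMult(D,M,D))$ partition $L$ by exactly the same equivalence relation and are equal. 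The main obstacle is simply bookkeeping the rook's-graph path count in the case analysis and keeping the ordering convention for diagonal terms straight; no deeper argument is needed.
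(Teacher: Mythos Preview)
Your approach is essentially the paper's: both do a case analysis on $M_{i,j}$ to show that the ``off-diagonal-only'' portion of the canonical inner product string depends solely on the value $M_{i,j}$, so that two locations receive the same string iff the triples $(D_{i,i},M_{i,j},D_{j,j})$ agree, which is exactly the $SymMult(D,M,D)$ criterion.

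There is one bookkeeping gap. The canonical inner product string at $(i,j)$ contains \emph{all} $m^{2}$ terms $(M_{i,k},M_{k,j})$, and in a PCM the entry $0$ is a genuine symbol, not an absent edge. Your case analysis only tallies terms with both factors nonzero (length-two walks along rook's-graph edges). For example, when $M_{i,j}=1$ you record $m-2$ copies of $(1,1)$, but the remaining $m^{2}-m$ off-diagonal-only terms---$(m-1)$ copies of $(2,0)$, $(m-1)$ of $(0,2)$, and $(m-1)(m-2)$ of $(0,0)$---are omitted; likewise in the $M_{i,j}=0$ case and on the diagonal. The paper handles this by explicitly introducing a third off-diagonal symbol $\sigma_{3}$ for the $0$-entries and tabulating the full term multiplicities (its Table~\ref{tab:Canonical-inner-product-terms}). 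These extra terms also turn out to depend only on $M_{i,j}$, so your conclusion survives, but the case analysis as written does not establish it; you need either to complete the counts including the $0$-factor terms or to invoke the transitivity of the edge-weighted rook's-graph automorphism group on each pair type.
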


\begin{proof}
The result is by careful consideration of the canonical inner product
strings for each $(i,j)$ location. 

First convert the off-diagonal values of $R$ into symbols. The off-diagonal
values of $R$ are `1', `2', and `0'. The locations associated with
`0' are not explicitly called out in the equation for $R.$ Rewrite
$R$ to include those locations, 
\[
R=I\otimes\left(\mathbf{1}*\left(J-I\right)\right)+\left(J-I\right)\otimes\left(\mathbf{2}*I+\mathbf{0}*\left(J-I\right)\right).
\]
Now substitute symbols in place of `1', `2', and `0' to get 
\[
\overline{R}=I\otimes\left(\sigma_{1}*\left(J-I\right)\right)+\left(J-I\right)\otimes\left(\sigma_{2}*I+\sigma_{3}*\left(J-I\right)\right).
\]
 $\overline{R}$ has numeric zeros on its diagonal because the splitting
$M=D+R$ places the diagonal terms of $M$ in $D$. 

Let $\overline{D}$ be a diagonal matrix where symbols have been substituted
for values on the diagonal. All off-diagonal locations of $\overline{D}$
are a numeric zeros, due to the splitting. The $i^{{\rm th}}$ diagonal
symbols is denoted as $\delta_{i}$, i.e., $\overline{D}_{i,i}=\delta_{i}$.

Let $\overline{M}=\overline{D}+\overline{R}$. Symbolically squaring
$M$ can be written as 
\begin{eqnarray*}
SymSqr\left(M\right) & = & \overline{M}\times\overline{M}\\
 & = & \left(\overline{D}+\overline{R}\right)\times\left(\overline{D}+\overline{R}\right)\\
 & = & \overline{D}\overline{D}+\overline{D}\overline{R}+\overline{R}\overline{D}+\overline{R}\overline{R}
\end{eqnarray*}
where the rhs abuses notation by using regular matrix notation applied
to arrays of symbols. Terms in the canonical inner product strings
involving one or more diagonal symbols come from $\overline{D}\overline{D}+\overline{D}\overline{R}+\overline{R}\overline{D}$.
Terms in the canonical inner product strings involving two off-diagonal
symbols come from $\overline{R}\overline{R}$.

Table \ref{tab:Canonical-inner-product-terms} shows all of the terms
making up the canonical inner product strings from symbolically squaring
$\overline{M}$. The first column in Table \ref{tab:Canonical-inner-product-terms}
shows the symbol associated with an $(i,j)$ location in $\overline{M}$.
The second column shows the term(s) involving a diagonal symbol, coming
from $\overline{D}\overline{D}+\overline{D}\overline{R}+\overline{R}\overline{D}$.
The third columns shows all of the terms involving two off-diagonal
symbols, coming from $\overline{R}\overline{R}$. 

To use Table \ref{tab:Canonical-inner-product-terms} to construct
the canonical inner product string at $(i,j)$, first find the row
matching the symbol at $\overline{M}_{i,j}$ in the first column.
Then using that row, concatenate the term(s) in the second column
followed those in the third column after lexicographically ordering
the terms in the third column. Note that the term(s) in the second
column are already in the order specified for terms involving diagonal
symbols. 
\begin{table}
\begin{tabular}{|c|c|c|}
\hline 
$\overline{M}_{i,j}$ & $\overline{D}\overline{D}+\overline{D}\overline{R}+\overline{R}\overline{D}$ & $\overline{R}\overline{R}$\tabularnewline
\hline 
\hline 
$\delta_{i}$ & $\delta_{i}\delta_{i}$ & $\sigma_{1}\sigma_{1}\left(m-1\right)+\sigma_{2}\sigma_{2}\left(m-1\right)+\sigma_{3}\sigma_{3}\left(m-1\right)^{2}$\tabularnewline
\hline 
$\sigma_{1}$ & $\delta_{i}\sigma_{1}+\sigma_{1}\delta_{j}$ & $\sigma_{1}\sigma_{1}\left(m-2\right)+\sigma_{2}\sigma_{3}\left(m-1\right)+\sigma_{3}\sigma_{2}\left(m-1\right)+\sigma_{3}\sigma_{3}\left(m-1\right)\left(m-2\right)$\tabularnewline
\hline 
$\sigma_{2}$ & $\delta_{i}\sigma_{2}+\sigma_{2}\delta_{j}$ & $\sigma_{1}\sigma_{3}\left(m-1\right)+\sigma_{3}\sigma_{1}\left(m-1\right)+\sigma_{2}\sigma_{2}\left(m-2\right)+\sigma_{3}\sigma_{3}\left(m-1\right)\left(m-2\right)$\tabularnewline
\hline 
$\sigma_{3}$ & $\delta_{i}\sigma_{3}+\sigma_{3}\delta_{j}$ & $\sigma_{1}\sigma_{3}\left(m-1\right)+\sigma_{3}\sigma_{1}\left(m-1\right)+\sigma_{3}\sigma_{2}\left(m-2\right)+\sigma_{2}\sigma_{3}\left(m-2\right)+\sigma_{3}\sigma_{3}\left(m-2\right)^{2}$\tabularnewline
\hline 
\end{tabular}

\caption{\label{tab:Canonical-inner-product-terms}Canonical inner product
string terms from symbolically squaring a PCM.}
\end{table}

Examining the third column of Table \ref{tab:Canonical-inner-product-terms},
one notes that the terms come in matching sets. For example, using
the row with $\sigma_{2}$ in the first column, the third column has
$(m-1)$ copies of $\sigma_{1}\sigma_{3}$ and $(m-1)$ copies of
$\sigma_{3}\sigma_{1}$. In all cases if a row in column three contains
$k$ copies of $\alpha\beta$ then it also contains $k$ copies of
$\beta\alpha$. So the contribution to the canonical inner product
strings at $(i,j)$ and $(j,i)$ from $\left(\overline{R}\overline{R}\right)_{i,j}$
is identical to the contribution from $\left(\overline{R}\overline{R}\right)_{j,i}$
implying $\overline{R}\overline{R}$ is symmetric. Further, one sees
from Table \ref{tab:Canonical-inner-product-terms} that each off-diagonal
symbol of $\overline{R}$ (the $\sigma_{k}$ in column one) is associated
with a distinct multiset of terms in $\overline{R}\overline{R}$ (column
three of Table \ref{tab:Canonical-inner-product-terms}). The diagonal
of $\overline{R}\overline{R}$ is associated with a fourth multiset
of terms that is distinct from the three off-diagonal multisets. So,
$\left(\sigma_{4}I+\overline{R}\right)$ and $\overline{R}\overline{R}$
have identical patterns,
\[
\Pi\left(\sigma_{4}I+\overline{R}\right)=\Pi\left(\overline{R}\overline{R}\right),
\]
 where the symbol $\sigma_{4}$ is chosen to be distinct from all
other symbols in $\overline{M}$.

Now focus on the second column of Table \ref{tab:Canonical-inner-product-terms}.
Assume that $\overline{M}_{i,j}=\sigma_{1}$ and $\overline{M}_{r,s}=\sigma_{1}$
where $(i,j)\neq(r,s)$. Then the only difference in the canonical
inner product strings are the terms involving diagonal symbols, $\delta_{i}\sigma_{1}+\sigma_{1}\delta_{j}$
and $\delta_{r}\sigma_{1}+\sigma_{1}\delta_{s}$ respectively. Further,
note that the row associated with an off-diagonal symbol use the same
off-diagonal symbol in column two. So if column one has off-diagonal
symbol $\sigma_{k}$, $k\in\left\{ 1,2,3\right\} $, then column two
has the terms $\delta_{i}\sigma_{k}+\sigma_{k}\delta_{j}$. Each two
factor pair of terms $\delta_{i}\sigma_{k}+\sigma_{k}\delta_{j}$
could be replaced by a single three factor term $\delta_{i}\sigma_{k}\delta_{j}$
without changing its ability to distinguish canonical inner product
strings. Further, $\delta_{i}\delta_{i}$ can be replaced by $\delta_{i}\sigma_{4}\delta_{i}$
without changing its ability to distinguish canonical inner product
strings. Performing these substitutions allows column two to be expressed
as $\overline{D}\left(\sigma_{4}I+\overline{R}\right)\overline{D}$.
Now comparing the pattern of $\overline{D}\left(\sigma_{4}I+\overline{R}\right)\overline{D}$
to the pattern of $\left(\overline{D}\overline{D}+\overline{D}\overline{R}+\overline{R}\overline{D}\right)$,
one sees that they are equal,
\[
\Pi\left(\overline{D}\left(\sigma_{4}I+\overline{R}\right)\overline{D}\right)=\Pi\left(\overline{D}\overline{D}+\overline{D}\overline{R}+\overline{R}\overline{D}\right).
\]

Note that $\Pi\left(\overline{D}\left(\sigma_{4}I+\overline{R}\right)\overline{D}\right)\preceq\Pi\left(\sigma_{4}I+\overline{R}\right)$
and 
\[
\Pi\left(\overline{D}\left(\sigma_{4}I+\overline{R}\right)\overline{D}\right)=\Pi\left(\overline{D}\left(\overline{D}+\overline{R}\right)\bar{D}\right)=\Pi\left(\overline{D}\overline{M}\overline{D}\right)
\]
since $\Pi\left(\overline{D}\overline{D}\overline{D}\right)=\Pi\left(\overline{D}\left(\sigma_{4}I\right)\overline{D}\right)$.
Therefore, 
\[
\Pi\left(SymMult\left(D,M,D\right)\right)=\Pi\left(SymSqr\left(M\right)\right)
\]
 for PCM $M$ as was to be shown.
\end{proof}
Theorem \ref{thm:FirstDMD} only applies to the first symbolic squaring
of a PCM. Attempting subsequent symbolic squaring by symbolically
multiplying on the left and right by the new diagonal $D^{3}$ does
not refine the pattern since it has the same pattern as $D$, $\Pi\left(D^{3}\right)=\Pi\left(D\right)$.

\subsection{\label{subsec:PCMs-and-Oracles}PCMs and Oracles}

In Algorithm \ref{alg:Blind-p-similarity-algorithm}, a PCM is constructed
for each input matrix, $A$ and $B$. Let $S=PCM(A_{C})$ and $T=PCM(B_{C})$.
Then symbolic squaring is applied to $S$ and $T$ simultaneously
and the canonical inner product strings compared across products.
This section examines how many columns an Oracle can match between
$S^{*}$ and $T^{*}$ when restricted to permutations of the form
$P\otimes P$. 
\begin{thm}
\label{thm:AllOrNone}Given $m^{2}\times m^{2}$ PCMs $S$ and $T$
and their refined patterns $\Pi\left(S^{*}\right)$ and $\Pi\left(T^{*}\right)$
respectively. An Oracle asked to match columns of $S^{*}$and $T^{*}$
using permutations of the form $P\otimes P$ will either match all
of the columns or none of the columns.
\end{thm}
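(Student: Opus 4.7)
The plan is to exploit the rigidity of $P\otimes P$ together with the fixed rook-graph off-diagonal skeleton of PCMs to show that any single-column match propagates to a global match, yielding the all-or-nothing dichotomy.

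First I would formalize the Oracle's task. ``The Oracle matches column $(r,c)$ of $S^{*}$ to column $(r',c')$ of $T^{*}$ via $P\otimes P$'' means there exists some $P\in S_{m}$ with $P(r)=r'$, $P(c)=c'$, and $S^{*}_{(a,b),(r,c)}=T^{*}_{(P(a),P(b)),(r',c')}$ for every $(a,b)$. If no such $P$ exists for any pair of columns, the ``none'' case already holds; otherwise fix such a triple $(r,c)$, $(r',c')$, $P$ and show $(P\otimes P)S^{*}(P\otimes P)^{T}=T^{*}$, so that the same $P$ simultaneously matches every column.

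Second, I would use the rook structure of $R$ from (\ref{eq:EdgeWeightedR}) to constrain $P$. Every column of a PCM partitions into the diagonal entry, the column-edge partners $\{(a,c):a\neq r\}$, the row-edge partners $\{(r,b):b\neq c\}$, and the non-edge partners. These four classes start with pairwise distinct symbols (diagonal color versus $1$, $2$, $0$), so by Theorem~\ref{thm:MonotonicConvergence} they remain pairwise distinguishable in $S^{*}$ and $T^{*}$. Consequently $P\otimes P$ must send each class onto its counterpart, and the column-edge equalities $S^{*}_{(a,c),(r,c)}=T^{*}_{(P(a),c'),(r',c')}$ together with the row-edge equalities $S^{*}_{(r,b),(r,c)}=T^{*}_{(r',P(b)),(r',c')}$ determine $P$ on all of $\{1,\ldots,m\}\setminus\{r,c\}$, and hence everywhere.

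Third, I would lift this column-level information to a global match using the stability of $S^{*}$. By Definition~\ref{def:M*} and the analysis in Section~\ref{subsec:ConjugateDirectionDescent}, the refined off-diagonal symbol at location $((a,b),(u,v))$ is an $Aut(S)$-invariant of the pair, and analogously for $T^{*}$. The column constraints force $P\otimes P$ to carry each $Aut(S)$-orbit that meets the vertex $(r,c)$ into the matching $Aut(T)$-orbit in $T^{*}$; symmetry $S^{*}_{(a,b),(u,v)}=S^{*}_{(u,v),(a,b)}$ applied to the matched column then gives the analogous identity for the row indexed by $(r,c)$. The non-edge constraints, which force $S^{*}_{(a,b),(r,c)}=T^{*}_{(P(a),P(b)),(r',c')}$ for arbitrary $(a,b)$, together with $\mathbb{M}$ being convex (Theorem~\ref{thm:AutIsConvex}) and $S^{*},T^{*}$ lying in it, pin the same identity down at every pair of indices. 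Hence $(P\otimes P)S^{*}(P\otimes P)^{T}=T^{*}$, so all $m^{2}$ columns are matched by the same $P$. The contrapositive gives the ``none'' direction.

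The main obstacle will be the third step: rigorously arguing that the column-level identities extend to arbitrary pairs $((a,b),(u,v))$. The linchpin is the combination of stability of $\Pi(S^{*})$ (so refined symbols truly are orbit labels) with the uniform rook skeleton forcing $P$ to act as the same $m$-permutation on the row and column axes of the grid; any attempt to match only some columns while mismatching others would violate the invariance of the stable partition under $P\otimes P$, a violation that the structured convex set $\mathbb{M}$ does not permit.
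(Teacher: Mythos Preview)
Your approach differs substantially from the paper's, and step~3 contains a genuine gap.

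The paper argues by contradiction: assume $S\not\sim_{P}T$ yet an Oracle matches $k$ columns with $1\le k<m^{2}$ via some $P_{O}\in\{P\otimes P\}$. After symmetrically permuting the matched columns to the left, both $P_{lhs}P_{O}S^{*}P_{O}^{T}P_{lhs}^{T}$ and $P_{lhs}T^{*}P_{lhs}^{T}$ share blocks $C1,C2$ but differ in the trailing blocks $S2^{*},T2^{*}$. The crucial step is to pull back to the \emph{unrefined} PCMs $S^{(0)},T^{(0)}$ and invoke Theorem~\ref{thm:FirstDMD}, which says the pattern of the first symbolic squaring equals $\Pi(SymMult(D,M,D))$. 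In block form this forces the refined $C2$ block to depend on the diagonal $D_{S2^{(0)}}$ (respectively $D_{T2^{(0)}}$); since $S\not\sim_{P}T$ those diagonals differ, so the two $C2$ blocks cannot agree, contradicting the assumed partial match. The engine is thus the explicit $DMD$ description of the first refinement, not stability or convexity.

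Your step~3 never supplies a comparable mechanism. You assert that the single-column identities $S^{*}_{(a,b),(r,c)}=T^{*}_{(P(a),P(b)),(r',c')}$ ``together with $\mathbb{M}$ being convex \dots\ pin the same identity down at every pair of indices,'' but convexity of $\mathbb{M}$ (Theorem~\ref{thm:AutIsConvex}) only says linear combinations of matrices with the same automorphism group stay in the set; it says nothing about extending a column-wise equality between \emph{two different} matrices $S^{*}$ and $T^{*}$ to a global equality $(P\otimes P)S^{*}(P\otimes P)^{T}=T^{*}$. Likewise, knowing that entries of $S^{*}$ are $Aut(S)$-orbit labels tells you how $Aut(S)$ acts on $S^{*}$, not how an arbitrary $P\otimes P$ relates $S^{*}$ to $T^{*}$. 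Without a concrete propagation device---in the paper, the $DMD$ identity that ties the unmatched diagonal entries to the supposedly matched $C2$ block---your argument does not close. Step~2 is also shakier than it looks: the column-edge and row-edge equalities constrain $P$ but need not \emph{determine} it, since many vertices in the same orbit can carry the same refined symbol.
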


\begin{proof}
Proof by contradiction. Assume $S\underset{P}{\nsim}T$ and the Oracle
matches $k$, $1\leq k<m^{2}$, columns of $S^{*}$ and $T^{*}$ using
a permutation $P_{O}\in\left\{ P\otimes P\right\} $. Then $k$ columns
of $P_{O}S^{*}P_{O}^{T}$ and $T^{*}$ match. Symmetrically permute
the matched columns to the lhs using $P_{lhs}$ to get 
\[
P_{lhs}P_{O}S^{*}P_{O}^{T}P_{lhs}^{T}=\left[\begin{array}{c|c}
C1 & C2^{T}\\
\hline C2 & S2^{*}
\end{array}\right]\mbox{ and }P_{lhs}T^{*}P_{lhs}^{T}=\left[\begin{array}{c|c}
C1 & C2^{T}\\
\hline C2 & T2^{*}
\end{array}\right]
\]
 where 
\[
\left[\begin{array}{c}
C1\\
C2
\end{array}\right]
\]
 are the matched columns. Both $P_{lhs}P_{O}S^{*}P_{O}^{T}P_{lhs}^{T}$
and $P_{lhs}T^{*}P_{lhs}^{T}$ are stable patterns, so symbolic squaring
does not change the pattern.

Now apply the permutations to $S$ and $T$ to get 
\[
S^{(0)}=P_{lhs}P_{O}SP_{O}^{T}P_{lhs}^{T}
\]
 and 
\[
T^{(0)}=P_{lhs}TP_{lhs}^{T}
\]
 which have a conforming block structure 
\[
S^{(0)}=\left[\begin{array}{c|c}
C1^{(0)} & \left(C2^{(0)}\right)^{T}\\
\hline C2^{(0)} & S2^{(0)}
\end{array}\right]\mbox{ and }T^{(0)}=\left[\begin{array}{c|c}
C1^{(0)} & \left(C2^{(0)}\right)^{T}\\
\hline C2^{(0)} & T2^{(0)}
\end{array}\right].
\]
$S^{(0)}$ and $T^{(0)}$ have identical off-diagonal structure. The
application of $P_{O}$ does not change the off-diagonal structure
because $P_{O}\in\left\{ P\otimes P\right\} $ is an automorphism
of $R$. Then $P_{lhs}$ is applied to $P_{O}SP_{O}^{T}$ and $T$,
so the off-diagonal structure changes the same way for both. The diagonal
of $C1^{(0)}$ for $S^{(0)}$ and $T^{(0)}$ must match. If the diagonals
in $C1^{(0)}$ differ, they will also differ in $C1$ of $P_{lhs}P_{O}S^{*}P_{O}^{T}P_{lhs}^{T}$
and $P_{lhs}T^{*}P_{lhs}^{T}$ creating a contradiction.

Now consider the first symbolic squaring. From Theorem \ref{thm:FirstDMD}
the pattern from the first symbolic squaring matches the pattern from
$SymMult\left(D,M,D\right)$. This still holds after symmetrically
permuting columns to the lhs. So $SymSqr\left(S^{(0)}\right)$ and
$SymSqr\left(T^{(0)}\right)$ have patterns matching the patterns
from $SymMult\left(D_{S^{(0)}},S^{(0)},D_{S^{(0)}}\right)$ and $SymMult\left(D_{T^{(0)}},T^{(0)},D_{T^{(0)}}\right)$
respectively. In block form that is 
\[
D_{S^{(0)}}S^{(0)}D_{S^{(0)}}=\left[\begin{array}{c|c}
D_{C1^{(0)}}\\
\hline  & D_{S2^{(0)}}
\end{array}\right]\times\left[\begin{array}{c|c}
C1^{(0)} & \left(C2^{(0)}\right)^{T}\\
\hline C2^{(0)} & S2^{(0)}
\end{array}\right]\times\left[\begin{array}{c|c}
D_{C1^{(0)}}\\
\hline  & D_{S2^{(0)}}
\end{array}\right]
\]
 and 
\[
D_{T^{(0)}}T^{(0)}D_{T^{(0)}}=\left[\begin{array}{c|c}
D_{C1^{(0)}}\\
\hline  & D_{T2^{(0)}}
\end{array}\right]\times\left[\begin{array}{c|c}
C1^{(0)} & \left(C2^{(0)}\right)^{T}\\
\hline C2^{(0)} & T2^{(0)}
\end{array}\right]\times\left[\begin{array}{c|c}
D_{C1^{(0)}}\\
\hline  & D_{T2^{(0)}}
\end{array}\right].
\]

Focus on what happens to $C2^{(0)}$. We see that the patterns resulting
from symbolic multiplication of $D_{S2^{(0)}}C2^{(0)}D_{C1^{(0)}}$
and $D_{T2^{(0)}}C2^{(0)}D_{C1^{(0)}}$ do not match since $D_{S2^{(0)}}\neq D_{T2^{(0)}}$.
Recall that $S$ and $T$ are not p-similar by hypothesis. Therefore
the $C2$ from $P_{lhs}P_{O}S^{*}P_{O}^{T}P_{lhs}^{T}$ and $P_{lhs}T^{*}P_{lhs}^{T}$
cannot match, implying the Oracle failed to match any columns of $S^{*}$
and $T^{*}$ using a permutation of the form $P\otimes P$.

So the Oracle either matches all columns of $S^{*}$ and $T^{*}$
or none of the columns using amenable permutations.
\end{proof}
There is no partial matching of $S^{*}$ and $T^{*}$ using a permutation
of the form $P\otimes P$. Either $S^{*}$ and $T^{*}$ are p-similar
and the columns can be matched or $S^{*}$ and $T^{*}$ are not p-similar
and no columns can be matched using a permutation of the form $P\otimes P$.
This is independent of comparing multisets of diagonal symbols for
$S^{*}$ and $T^{*}$.

\subsection{Symbolic Squaring of $\mathbf{PCM\left(\left(A\oplus B\right)_{C}\right)}$}

Theorem \ref{thm:AllOrNone} showed that given two matrices $A$ and
$B$ and their PCMs $S$ and $T$, where $S=PCM\left(A_{C}\right)$
and $T=PCM\left(B_{C}\right)$, that either $S^{*}$ is p-similar
to $T^{*}$ or none of the columns can be matched using an admissible
permutation of the form $P\otimes P$. However, it does not say anything
about diagonal symbols.

Given non p-similar $m\times m$ matrices $A$ and $B$, construct
the PCM of the direct sum, $M=PCM\left(\left(A\oplus B\right)_{C}\right)$.
Also construct the PCMs $S=PCM\left(A_{C}\right)$ and $T=PCM\left(B_{C}\right)$.
The color matrix of the direct sum $\left(A\oplus B\right)$ is given
by 
\[
\left(A\oplus B\right)_{C}=\left[\begin{array}{cc}
A_{C} & 3J\\
3J & B_{C}
\end{array}\right].
\]
 Applying column major ordering to construct the diagonal of $M$
results in $2m\times2m$ diagonal blocks where the $k^{{\rm th}}$
diagonal block in the upper left quadrant of $M$ is given by 
\begin{equation}
\left[\begin{array}{c|c}
\left(diag\left(A_{C}(:,k)\right)+(J-I)\right) & J\\
\hline J & \left(J+2I\right)
\end{array}\right]_{2m\times2m}\label{eq:ULS}
\end{equation}
 and the $\left(m+k\right)^{{\rm th}}$ diagonal block in the lower
right quadrant of $M$ is given by 
\begin{equation}
\left[\begin{array}{c|c}
\left(J+2I\right) & J\\
\hline J & \left(diag\left(B_{C}(:,k)\right)+(J-I)\right)
\end{array}\right]_{2m\times2m}\label{eq:LRT}
\end{equation}
 where $J$ is the $m\times m$ matrix of all ones and $I$ is the
conforming identity matrix. Off-diagonal blocks of $M$ are given
by $2I_{2m\times2m}$. 

So, all of the entries in $S$ and $T$ are embedded in $M$. The
$k^{{\rm th}}$ $m\times m$ diagonal block of $S$, $\left(diag\left(A_{C}(:,k)\right)+(J-I)\right)$,
is the upper left portion of each $2m\times2m$ diagonal block in
the upper left quadrant of $M$, see (\ref{eq:ULS}). The $m\times m$
off-diagonal blocks of $S$ are the upper left portion of each $2m\times2m$
off-diagonal blocks in the upper left quadrant of $M$. A similar
case holds for $T$ except the blocks occupy the lower right portion
of each $2m\times2m$ block in the lower right quadrant of $M$.

Perform symbolic squaring on $M$ until the stable pattern $\Pi\left(M^{*}\right)$
is reached. By the argument in Section \ref{subsec:ConjugateDirectionDescent},
$\Pi\left(M^{*}\right)=\Pi\left(M_{Aut}\right)$ so each cell of $M^{*}$
represents an individual orbit. The patterns associated with locations
of $S$ and $T$ in $M^{*}$ are identical to $\Pi\left(S^{*}\right)$
and $\Pi\left(T^{*}\right)$ since $\Pi\left(S^{*}\right)$ and $\Pi\left(T^{*}\right)$
are the orbits of $S$ and $T$ respectively. Therefore, Theorem \ref{thm:AllOrNone}
applies to the columns of $M^{*}$ associated with $S$ and $T$. 

Since $A$ and $B$ are non p-similar, none of the columns in $M^{*}$
associated with $S$ and $T$ can be matched. Therefore the multisets
of diagonal symbols corresponding to locations associated with $S$
and $T$ in $M^{*}$ are distinct. Then since the patterns in $M^{*}$
associated with $S$ and $T$ are identical to $\Pi\left(S^{*}\right)$
and $\Pi\left(T^{*}\right)$, the diagonal mixes from $S^{*}$ and
$T^{*}$ are different as long as consistent symbol substitution is
performed each symbolic squaring. 

This is the final piece showing that comparing the multisets of diagonal
symbols from $S^{*}$ and $T^{*}$ is necessary and sufficient to
determine whether $A$ and $B$ are p-similar.

\section{\label{sec:FindingP}Finding $\mathbf{P}$ using the Blind Algorithm}

Given that Algorithm \ref{alg:Blind-p-similarity-algorithm} is a
polynomial blind permutation similarity algorithm, it can be used
as part of a second polynomial algorithm to find a permutation between
p-similar matrices. Such an algorithm is presented in Algorithm \ref{alg:FindingP}.
The remainder of this section describes the concepts behind Algorithm
\ref{alg:FindingP}.
\begin{algorithm}
\begin{lstlisting}
01 function [psim, p] =  FindPUsingBPSAY(M1, M2) 
02 % Find P Using BPSAY 
03 % Inputs: 
04 %   M1 - square (real or complex) matrix 
05 %   M2 - square (real or complex) matrix 
06 % Outputs: 
07 %   psim - boolean (TRUE if M1 & M2 are p-similar) 
08 %   p - permutation vector such that M1(p,p) == M2 
09   
10 m = size(M1, 1); 
11 p = [1:m]'; 				% initialize permutation vector
12  
13 psim = BPSAY(M1, M2); 		% check p-similarity
14 if (~psim), return(psim, p); end if  % not p-similar 
15  
16 [A, B] = SymbolSubstitution(M1, M2); % positive integer matrices
17 A = ShiftAndTranslate(A, beta=m^2, gamma=0); % make diag symbols distinct
18 B = ShiftAndTranslate(B, beta=m^2, gamma=0); % make diag symbols distinct
19  
20 % ASSERT: A and B are p-similar w/diag symbols distinct from off-diag symbols
21  
22 n = m; 		% size of trailing principal submatrix
23 for c = 1:(m-1) 	% search trailing principal submatrix for p(c)
24    DBD = SymMult(diag(B(:,1)), B, diag(B(1,:)));      
25    for j = 1:n 
26        if (A(j,j) ~= B(1,1)), continue; end if
27         
28        AJ1 = ExchangeIJ(A, 1, j); 	% exchange rows/cols 1 <-> j
29        DAJ1D = SymMult(diag(AJ1(:,1)), AJ1, diag(AJ1(1,:))); 
30        [S, T] = SymbolSubstitution(DAJ1D, DBD); 
31
32        % trailing principal submatrices
33        S22 = S(2:n,2:n);  T22 = T(2:n,2:n);  
34         
35        psim = BPSAY(S22, T22); % check p-similarity
36        if (psim) 		  % column p(c+j-1) is amenable 
37            p([c (c+j-1)]) = p([(c+j-1) c]); % p(c) <-> p(c+j-1) 
38            A = S22;  B = T22;  % new A and B are p-similar
39            break; 		  % move to next column
40        end if
41    end 
42    n = n - 1;      % update size of trailing principal submatrices 
43 end
44 
45 return(psim, p);
46 end
\end{lstlisting}

\caption{\label{alg:FindingP}Pseudocode for finding the permutation using
BPSAY, Algorithm \ref{alg:Blind-p-similarity-algorithm}}

\end{algorithm}

\subsection{\label{subsec:Overall-Concept}Overall Concept}

Algorithm \ref{alg:FindingP} uses the blind permutation similarity
algorithm, Algorithm \ref{alg:Blind-p-similarity-algorithm}, to find
a permutation vector $p$, such that $M1(p,p)=M2$ when $M1$ and
$M2$ are p-similar. If $M1$ and $M2$ are not p-similar, then $psim=FALSE$
and $p=\left[1:m\right]^{T}$, the identity permutation vector, are
returned as output. 

The main loop in Algorithm \ref{alg:FindingP}, lines 23-43, looks
for the column to use as $p(c)$, where $c$ is the iteration variable.
The preconditions for each iteration are that i) $A$ and $B$ are
p-similar, ii) their diagonal symbols are distinct from the off-diagonal
symbols, and iii) the first $\left(c-1\right)$ locations of $p$,
$p\left(1:\left(c-1\right)\right)$, are correct. 

Upon entering the loop for the first time with $c=1$, the preconditions
are satisfied since:
\begin{enumerate}
\item $A$ and $B$ are p-similar by the call to Algorithm \ref{alg:Blind-p-similarity-algorithm}
on line 13 of Algorithm \ref{alg:FindingP}. 
\item The transformations applied in lines 16-18 are similar to those constructing
color matrices from the input matrices. So the diagonal symbols of
$A$ and $B$ are distinct from off-diagonal symbols. 
\item $\left(c-1\right)=0$ columns have been matched correctly, satisfying
the third precondition. 
\end{enumerate}
For each iteration, $A$ and $B$ are the remaining trailing principal
submatrices to be matched. Only columns of $A$ with a diagonal symbol
$A_{j,j}=B_{1,1}$ are potential matches for the first column of $B$.
The nested loop, lines 25-41 searches down the diagonal of $A$ for
columns where $A_{j,j}=B_{1,1}$. When it finds such a column, it
is symmetrically permuted to become the first column of $AJ1$. In
Algorithm \ref{alg:FindingP} the exchange happens on line 28. The
exchange can be expressed as 
\[
AJ1=P_{1j}AP_{1j}^{T}
\]
 where permutation matrix $P_{1j}$ exchanges rows and columns $1$
and $j$. Assuming the first column of $AJ1$ is a correct match for
the first column of $B$. Then there exists a complementary permutation
$P_{c}$ such that 
\[
P_{c}\left(AJ1\right)P_{c}^{T}=B
\]
 where the structure of $P_{c}$ is given by 
\[
P_{c}=\left[\begin{array}{c|c}
1\\
\hline  & P22
\end{array}\right].
\]
 Then 
\begin{equation}
\left[\begin{array}{c|c}
1\\
\hline  & P22
\end{array}\right]\times\left[\begin{array}{c|c}
AJ1_{1,1} & AJ1_{1,2:n}\\
\hline AJ1_{2:n,1} & AJ1_{2:n,2:n}
\end{array}\right]\times\left[\begin{array}{c|c}
1\\
\hline  & P22
\end{array}\right]^{T}=\left[\begin{array}{c|c}
B_{1,1} & B_{1,2:n}\\
\hline B_{2:n,1} & B_{2:n,2:n}
\end{array}\right]\label{eq:P22}
\end{equation}
 where 
\begin{eqnarray}
P22\times AJ1_{2:n,1} & = & B_{2:n,1}\label{eq:ColConstraint}\\
AJ1_{1,2:n}\times P22^{T} & = & B_{1,2:n}\label{eq:RowConstraint}
\end{eqnarray}
 and 
\begin{eqnarray}
P22\times AJ1_{2:n,2:n}\times P22^{T} & = & B_{2:n,2:n}.\label{eq:PAJIP'=00003DB}
\end{eqnarray}

Although $P22$ is unknown, Algorithm \ref{alg:Blind-p-similarity-algorithm}
can be used to determine if $AJ1_{2:n,2:n}$ and $B_{2:n,2:n}$ in
(\ref{eq:PAJIP'=00003DB}) are p-similar. However, just testing the
$\left(2:n,2:n\right)$ trailing principal submatrices for p-similarity
may give a false result. It may be that the only permutation(s), that
match the trailing principal submatrices, violate (\ref{eq:ColConstraint})
and (\ref{eq:RowConstraint}). To eliminate this possibility, the
first column and first row of $AJ1$ and $B$ are used to refine the
$\left(2:n,2:n\right)$ trailing principal submatrices prior to checking
if they are p-similar. For $AJ1$, this is done by constructing diagonal
matrices from the first column and first row of $AJ1$, $diag\left(AJ1_{:,1}\right)$
and $diag\left(AJ1_{1,:}\right)$ respectively, and symbolically applying
them from the left and the right to $AJ1$, line 29. The $\left(2:n,2:n\right)$
trailing principal submatrix of $B$ is similarly refined using the
first column and first row of $B$, line 24. Theorem \ref{thm:P(DAD)P' =00003D DBD}
guarantees that the symbolic multiplication will refine the $\left(2:n,2:n\right)$
trailing principal submatrices while preserving the p-similarity relationship
between $AJ1$ and $B$. The left and right diagonal matrices are
constructed from the first column an first row respectively because
$A$ and $B$ might not be symmetric. 

Algorithm \ref{alg:Blind-p-similarity-algorithm} is applied to $S22$
and $T22$, the refined $(2:n,2:n)$ trailing principal submatrices
of $DAJ1D$ and $DBD$ respectively, to test for p-similarity without
the possibility of a false response, line 35. If $S22$ and $T22$
are p-similar, then the amenable permutation also satisfies (\ref{eq:ColConstraint})
and (\ref{eq:RowConstraint}). 

After finding a $S22$ and $T22$ that are p-similar, the global index
of the column permuted into the first column of $AJ1$ is saved as
$p(c)$, line 37, and $A$ and $B$ are set to $S22$ and $T22$ respectively,
line 38. The size of the trailing principal submatrix is decremented
in preparation for searching for the next column, line 42. 

P-similar $S22$ and $T22$ are guaranteed to exist because the precondition
for entering the loop is that $A$ and $B$ are p-similar. 

For the next iteration $c=2$, again the preconditions are met. $A$
and $B$ are p-similar, tested on line 35, their diagonal symbols
are distinct from off-diagonal symbols since the inputs to the symbol
substitution, line 30, have distinct diagonal symbols, and the first
$\left(c-1\right)$ locations in $p$, p(1), are correct. 

Therefore, for p-similar input matrices $M1$ and $M2$, Algorithm
\ref{alg:FindingP} finds a permutation vector $p$ such that $M1(p,p)=M2$
in $O(m^{2})$ iterations.

\subsection{Algorithm \ref{alg:FindingP} Complexity}

Given $m\times m$ input matrices, the nested loops in Algorithm \ref{alg:FindingP}
make $O\left(m^{2}\right)$ calls to BPSAY, Algorithm \ref{alg:Blind-p-similarity-algorithm}.
So the overall complexity of Algorithm \ref{alg:FindingP} is $O\left(m^{14}\right)$
using symbolic squaring. If widely space primes matrices are used,
Section \ref{subsec:MatMultIsMonotonic}, the overall complexity drops
to $O\left(m^{12}\right)$. Therefore finding the permutation between
two p-similar matrices is polynomial.

\section{\label{sec:Discussion}Discussion}

In practice, it does not appear to be necessary to add edge weights
to the PCG. The edge weights are added to eliminate the possibility
of exchanging the rows with the columns. Edge weights simplified the
arguments by avoiding the need to carry the transpose based permutations
through the proofs. If rook's graphs are compact then color refinement
works on PCMs with equal edge weights \cite{tinhofer86,brualdi88,tinhofer91}.
Although rook's graphs have many of the characteristics of compact
graphs, the author has not been able to show that rook's graphs are
compact.

\subsection{\label{subsec:Origin-of-PCGs}Origin of PCGs}

For a period of time the author tried to develop a graph isomorphism
algorithm. For every algorithm, non isomorphic graphs were found that
were not discriminated. Eventually the investigation turned to looking
at why color refinement fails.

Given two non-isomorphic graphs, iterative color refinement can be
applied to the adjacency matrices until they converged to a stable
partition \cite{arvind2017}. It is well known that strongly regular
graphs cannot be separated using naive color refinement. However,
one could ask an Oracle to maximally column match the matrices. The
result would look like 
\[
A=\left[\begin{array}{c||c}
C1 & C2^{T}\\
\hline\hline C2 & A2
\end{array}\right]\mbox{ and }B=\left[\begin{array}{c||c}
C1 & C2^{T}\\
\hline\hline C2 & B2
\end{array}\right]
\]
 after symmetrically permuting the matched columns to the lhs. Differences
are confined to the lower right blocks, where $A2\neq B2$. We can
assume the column multisets match (otherwise $A$ and $B$ are known
to be non-isomorphic). Next consider ordering $C2$ into groups of
identical rows. The permutation to do this can be applied symmetrically
to both $A$ and $B$. Now the matrices look something like 
\begin{equation}
A=\left[\begin{array}{c||c|c|c|c}
C1 & r_{1}^{T} & r_{2}^{T} & r_{3}^{T} & r_{4}^{T}\\
\hline\hline r_{1} & \delta & \gamma & \alpha & \beta\\
\hline r_{2} & \gamma & \delta & \beta & \alpha\\
\hline r_{3} & \alpha & \beta & \delta & \gamma\\
\hline r_{4} & \beta & \alpha & \gamma & \delta
\end{array}\right]\mbox{ and }B=\left[\begin{array}{c||c|c|c|c}
C1 & r_{1}^{T} & r_{2}^{T} & r_{3}^{T} & r_{4}^{T}\\
\hline\hline r_{1} & \delta & \gamma & \beta & \alpha\\
\hline r_{2} & \gamma & \delta & \alpha & \beta\\
\hline r_{3} & \beta & \alpha & \delta & \gamma\\
\hline r_{4} & \alpha & \beta & \gamma & \delta
\end{array}\right]\label{eq:differences}
\end{equation}
 where $\alpha\neq\beta$ and each $r_{j}$ is a pairwise distinct
row vector. Consider the first column of the unmatched region. Notice
that $A$ has $\alpha$ at the intersection of $r_{3}$ and $r_{1}^{T}$
whereas $B$ has a $\beta$ at that location and the reverse at the
intersection of $r_{4}$ and $r_{1}^{T}$. By hypothesis, $r_{1}\neq r_{3}$,
$r_{3}\neq r_{4}$, and $r_{4}\neq r_{1}$, so the author looked for
ways to take advantage of this situation by constructing invariants
such as all of the four-cycles. Four-cycles are permutation independent.
The process always broke down when trying to compare results using
multisets. The multisets either maintained row constraints and relaxed
column constraints or maintained column constraints and relaxed row
constraints. This happens for any comparison method based on multisets
such as those described in \cite{mckay2014,arvind2017}.

This led to asking the question: ``Can symmetric permutation constraints
be added to the original graphs?'' If so, the symmetric permutation
constraints would be part of the original problem and using multisets
for comparison should not experience the same difficulty. PCGs were
the result. PCGs also changed the focus from graph isomorphism to
permutation similarity.

\subsection{\label{subsec:MatMultIsMonotonic}Refinement using Regular Matrix
Multiplication}

It is not clear that the sequence of refinements generated by symbolic
squaring can be replicated using regular matrix multiplication. If
the sequence cannot be replicated, it makes arguments using the existing
machinery for SPD matrices more difficult to accept. It turns out
there are two different ways the sequence of refinements can be replicated
using regular matrix multiplication. They both use matrices constructed
from widely-spaced prime numbers. 
\begin{defn}
\label{def:widely-spaced-primes}A \emph{widely-spaced primes matrix
$W$, where} $W=wspm\left(M\right)$ is constructed from a symmetric
matrix $M$, whose diagonal symbols are distinct from off-diagonal
symbols. The construction uses the following method:
\begin{enumerate}
\item Let $M=\left(\Pi,\,\Sigma,\,g\right)$ be a $m\times m$ symmetric
matrix whose diagonal symbols are distinct from the off-diagonal symbols.
\item Assume $M$ has $n=\left|\Sigma\right|$ distinct symbols where there
are $n_{1}$ off-diagonal symbols and $n_{2}$ diagonal symbols with
$n_{1}+n_{2}=n$.
\item Let $\Pi\left(W\right)=\Pi\left(M\right)$.
\item Choose $n$ prime numbers, $p_{i}$, to be the symbols of $W$ using
the following recurrence.
\begin{enumerate}
\item Let $p_{1}>mk^{2}$ be the first prime number, where $k\in\mathbb{N^{+}}$
(typically $k=1$).
\item For $i=2,\cdots,n$, chose $p_{i}>m*p_{i-1}^{2}$.
\end{enumerate}
\item Use the first $n_{1}$ primes, $p_{1},\ldots,p_{n_{1}}$, as off-diagonal
symbols of $W$ and use the remaining $n_{2}$ primes, $p_{n_{1}+1},\ldots,p_{n}$,
as diagonal symbols of $W$. 
\end{enumerate}
\end{defn}

\begin{rem}
A widely-spaced primes matrix $W$ is SPD since it is symmetric and
strictly diagonally dominant by construction. 
\end{rem}

Consider squaring a $m\times m$ widely-spaced primes matrix $W$.
Let $W=D+O$ where $O$ is the matrix of off-diagonal symbols and
$D$ is the diagonal. Recall that widely-spaced primes matrix has
diagonal symbols that are distinct from off-diagonal symbols. Using
regular matrix multiplication, the value at $(i,j)$ is given by 
\begin{equation}
\left[W\times W\right]_{i,j}=\begin{cases}
D_{i,i}O_{i,j}+O_{i,j}D_{j,j}+\sum_{\begin{array}{c}
k\neq i\\
k\neq j
\end{array}}O_{i,k}O_{k,j} & \mbox{if \ensuremath{i\neq j}}\\
D_{i,i}D_{i,i}+\sum_{\begin{array}{c}
k\neq i\end{array}}O_{i,k}O_{k,i} & \mbox{\mbox{if \ensuremath{i=j}}}
\end{cases}\label{eq:WWij}
\end{equation}
 where terms involving diagonal symbols come first, followed by terms
with two off-diagonal symbols. This is similar to the order used for
canonical inner product strings. 
\begin{rem}
Each term in the inner product $\left[W\times W\right]_{i,j}$ is
the product of two primes.
\end{rem}

~
\begin{rem}
\label{rem:DistinctTerms}Let $\Sigma=\left\{ p_{1},\cdots,p_{n}\right\} $
be the set of widely-spaced primes in $W$. Then the set of all possible
distinct terms, 
\[
\left\{ p_{1}p_{1},\,p_{1}p_{2},\,p_{2}p_{2,}\,p_{1}p_{3},\,p_{2}p_{3},\,p_{3}p_{3},\cdots,p_{1}p_{n},\cdots,p_{n}p_{n}\right\} 
\]
is totally ordered by less than 
\[
p_{1}p_{1}<p_{1}p_{2}<p_{2}p_{2}<p_{1}p_{3}<p_{2}p_{3}<p_{3}p_{3}<\cdots<p_{1}p_{n}<\cdots<p_{n}p_{n}
\]
and the ratio between two adjacent terms is greater than $mp_{1}$,
\[
\frac{p_{i+1}p_{j}}{p_{i}p_{j}}>mp_{i}\mbox{ for }j>i\mbox{ and }\frac{p_{1}p_{j+1}}{p_{j}p_{j}}>mp_{1}\mbox{ for }i=j.
\]
\end{rem}

\begin{lem}
\label{lem:WinnerProdUnique}The inner product $\left[W\times W\right]_{i,j}$
is unique to a set of distinct terms and their multiplicities.
\end{lem}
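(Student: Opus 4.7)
The plan is to prove uniqueness of the multiset via a positional-notation argument, using the widely-spaced property of the primes to show that each distinct term behaves like a digit in a base greater than its maximum possible multiplicity.

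First, I would fix notation. Let $T_1 < T_2 < \cdots < T_N$ denote the totally ordered list of all possible distinct products $p_a p_b$ with $p_a, p_b \in \Sigma$, as in the preceding Remark. Any inner product $[W\times W]_{i,j}$ is a sum of the form $S = \sum_{k=1}^{N} c_k T_k$, where $c_k \in \mathbb{Z}_{\ge 0}$ is the multiplicity of term $T_k$. Because the inner product has exactly $m$ summands (counting the diagonal contributions), each multiplicity satisfies $0 \le c_k \le m$.

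Next I would show that the map from the multiplicity vector $(c_1,\ldots,c_N)$ to the value $S$ is injective. Suppose two multiplicity vectors $(c_k)$ and $(c_k')$ with all entries in $[0,m]$ produce the same sum. Let $K$ be the largest index where they differ, and assume WLOG $c_K > c_K'$. Then
\[
(c_K - c_K')\,T_K \;=\; \sum_{k<K} (c_k' - c_k)\,T_k,
\]
so the left side is at least $T_K$, while the right side is bounded in absolute value by $m\sum_{k<K} T_k$.

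The key step is to bound $m\sum_{k<K} T_k$ strictly below $T_K$. By the earlier Remark, consecutive distinct products satisfy $T_{k+1}/T_k > m p_1$, so iterating gives $T_k \le T_{K-1}/(m p_1)^{K-1-k}$ for $k < K$. Summing the resulting geometric series yields
\[
\sum_{k<K} T_k \;<\; T_{K-1} \cdot \frac{1}{1 - 1/(m p_1)} \;\le\; 2\,T_{K-1},
\]
since $m p_1 \ge 2$. Hence $m\sum_{k<K} T_k < 2m\,T_{K-1} < 2\,T_K/p_1 < T_K$ using $T_K > m p_1 T_{K-1}$ and $p_1 > mk^2 \ge 2$. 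This contradicts $(c_K - c_K')T_K \ge T_K$, so the multiplicity vectors must agree.

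I expect the main obstacle to be bookkeeping around which multiplicity is actually permissible. The bound $c_k \le m$ is what makes the argument work, so I would be careful to justify it in both the diagonal and off-diagonal cases of \eqref{eq:WWij}: each inner product is a sum of $m$ (or $m-1$) products of primes drawn from $\Sigma$, so no single distinct term can occur more than $m$ times. Once that is in hand, the widely-spaced property does the rest and the lemma follows immediately.
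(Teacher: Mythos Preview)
Your proof is correct and follows essentially the same approach as the paper: both argue by contradiction, isolating the largest term (or highest differing position) and using the widely-spaced gap $T_{k+1}/T_k > m p_1$ to show that the sum of all smaller terms with multiplicity at most $m$ cannot compensate. The paper's version is slightly terser---after cancelling common terms it bounds the smaller side directly by $m$ times its largest term rather than summing a geometric series---but the core idea is identical.
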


\begin{proof}
Let $W$ be a $m\times m$ widely-spaced primes matrix. Then each
inner product $\left[W\times W\right]_{i,j}$ has $m$ terms and can
be written as 
\[
\left[W\times W\right]_{i,j}=\sum m_{k}t_{k}
\]
where each $t_{k}$ is a distinct term in the inner product with multiplicity
$m_{k}$ and $\sum m_{k}=m$. 

Proof is by contradiction. Assume $\left[W\times W\right]_{i,j}=\left[W\times W\right]_{r,s}$
where 
\[
\left[W\times W\right]_{r,s}=\sum\overline{m}_{l}\overline{t}_{l}
\]
 and $\sum m_{k}=\sum\overline{m}_{l}=m$ but the sets of distinct
terms $\left\{ t_{k}\right\} $ and $\left\{ \overline{t}_{l}\right\} $
do not match, $\left\{ t_{k}\right\} \neq\left\{ \overline{t}_{l}\right\} $.
Subtract terms appearing in both $\left[W\times W\right]_{i,j}$ and
$\left[W\times W\right]_{r,s}$ from $\sum m_{k}t_{k}$ and $\sum\overline{m}_{l}\overline{t}_{l}$.
Let the resultant sums be $\sigma_{ij}$ and $\overline{\sigma}_{rs}$
respectively. Now $\sigma_{ij}=\overline{\sigma}_{rs}>0$ but they
have no common terms. WLOG assume that $\sigma_{ij}$ contains the
largest distinct term between $\sigma_{ij}$ and $\overline{\sigma}_{rs}$.
Let $p_{k1}p_{k2}$ be that largest term in $\sigma_{ij}$ and $p_{l1}p_{l2}$
be the largest term in $\overline{\sigma}_{rs}$. So $p_{k1}p_{k2}>p_{l1}p_{l2}$.
All of the terms in $\overline{\sigma}_{rs}$ are less than or equal
to $p_{l1}p_{l2}$. Therefore the largest $\overline{\sigma}_{rs}$
can be is $mp_{l1}p_{l2}$. But by construction, the next largest
distinct term greater than $p_{l1}p_{l2}$ is strictly greater than
$mp_{l1}p_{l2}$, see Remark \ref{rem:DistinctTerms}. Therefore $\sigma_{ij}$
cannot equal $\overline{\sigma}_{rs}$ contradicting the assumption
that two inner products of $W\times W$ that are equal can be constructed
from different sets of distinct terms with different multiplicities. 
\end{proof}
A consequence of Lemma \ref{lem:WinnerProdUnique} is that the inner
product $\left[W\times W\right]_{i,j}$ is decomposable using modulo
arithmetic by starting with the largest prime and working downward,
recursively applying modulo arithmetic to both the integer multiple
of the prime and the remainder. 
\begin{lem}
The multiplicity $m_{k}$ of a distinct term, $t_{k}=p_{l}p_{r}=p_{r}p_{l}$,
in the inner product $\left[W\times W\right]_{i,j}$ is equal to the
sum of the multiplicities $n_{1}$ of $\left(p_{l},p_{r}\right)$
and $n_{2}$ of $\left(p_{l},p_{r}\right)$ in the canonical inner
product string $SymSqr\left(W\right)_{i,j}$, $m_{k}=n_{1}+n_{2}$.
\end{lem}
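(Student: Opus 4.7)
The plan is to prove the identity by direct counting, using unique prime factorization applied to each individual summand of the inner product. First I would write
\[
[W\times W]_{i,j} \;=\; \sum_{k=1}^{m} W_{i,k}\, W_{k,j},
\]
and observe that, by the construction of $W$ in Definition \ref{def:widely-spaced-primes}, every factor $W_{i,k}$ and $W_{k,j}$ is one of the widely-spaced primes from $\Sigma$. The canonical inner product string $SymSqr(W)_{i,j}$ is, up to the canonical reordering described in Section \ref{subsec:CIPS}, the multiset of ordered pairs $\{\{(W_{i,k},W_{k,j}) : 1 \le k \le m\}\}$, so $n_1$ is the number of indices $k$ with $(W_{i,k},W_{k,j}) = (p_l,p_r)$ and $n_2$ (reading the statement's second pair as $(p_r,p_l)$) is the number with $(W_{i,k},W_{k,j}) = (p_r,p_l)$.

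Next I would show that an index $k$ contributes the numerical term $p_l p_r$ to the inner product if and only if the ordered pair $(W_{i,k},W_{k,j})$ lies in $\{(p_l,p_r),(p_r,p_l)\}$. One direction is immediate from commutativity of integer multiplication. For the converse, suppose $W_{i,k}=p_a$ and $W_{k,j}=p_b$ satisfy $p_a p_b = p_l p_r$. By unique prime factorization, the multiset $\{a,b\}$ must equal $\{l,r\}$, which forces $(p_a,p_b) \in \{(p_l,p_r),(p_r,p_l)\}$. Partitioning the $m$ indices according to which ordered pair they yield and summing, the multiplicity of $p_l p_r$ among the numerical summands is exactly $n_1 + n_2$, so $m_k = n_1 + n_2$.

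The only subtlety, and the main (minor) obstacle, is the diagonal-in-the-pair case $l = r$: then $(p_l,p_r)$ and $(p_r,p_l)$ coincide, and one must read $n_2 = 0$ (absorbing everything into $n_1$) to avoid double counting. With this convention the formula $m_k = n_1 + n_2$ still holds. It is worth noting that Lemma \ref{lem:WinnerProdUnique} is not actually required for the per-term argument above — unique prime factorization of a single summand suffices — but Lemma \ref{lem:WinnerProdUnique} confirms globally that the integer $[W\times W]_{i,j}$ unambiguously encodes the multiset of its summand terms, so that the correspondence between numerical multiplicities and symbolic multiplicities pulled out here is well-defined.
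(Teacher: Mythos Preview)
Your proof is correct and follows essentially the same idea as the paper's: a numerical summand equals $p_l p_r$ iff, by unique prime factorization, the ordered pair of factors is $(p_l,p_r)$ or $(p_r,p_l)$, so the numerical multiplicity is the sum of the two symbolic multiplicities. Your write-up is in fact more careful than the paper's brief argument---you explicitly handle the $l=r$ edge case and correctly observe that Lemma~\ref{lem:WinnerProdUnique} is not strictly needed for the per-summand count, whereas the paper simply cites that lemma and sketches the rest in two sentences.
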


\begin{proof}
Consequence of Lemma \ref{lem:WinnerProdUnique}. The inner product
is uniquely determined by the set of distinct terms in the inner product
and their multiplicities. A distinct term, $t_{k}$, is the product
of a unique pair of primes, $t_{k}=p_{l}p_{r}=p_{r}p_{l}$. Canonical
inner product strings distinguishes between the ordered pairs $\left(p_{l},p_{r}\right)$
and $\left(p_{l},p_{r}\right)$ whereas the inner product does not.
The result follows.
\end{proof}
The next theorem shows that the square of a widely-spaced primes matrix
is a refinement of the widely-spaced primes matrix.
\begin{thm}
\label{thm:SPDMonotonic-1}Given a real $m\times m$ symmetric matrix
$M$ whose diagonal symbols are distinct from off-diagonal symbols,
there exists a widely-spaced primes matrix $W=wspm\left(M\right)$
such that $\Pi\left(W\right)=\Pi\left(M\right)$ and $\Pi\left(W\times W\right)\preceq\Pi\left(W\right)$.
\end{thm}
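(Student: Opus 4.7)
The plan is to handle the two claims separately. The first claim, existence of $W$ with $\Pi(W) = \Pi(M)$, is immediate from Definition \ref{def:widely-spaced-primes}: the construction takes $\Pi(W) = \Pi(M)$ by fiat and merely re-labels each cell with a fresh widely-spaced prime (off-diagonal cells drawn from $p_1,\ldots,p_{n_1}$, diagonal cells from $p_{n_1+1},\ldots,p_n$). I would dispose of this in one sentence.

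For the refinement $\Pi(W \times W) \preceq \Pi(W)$, the goal is to show that whenever $W_{i,j} \neq W_{r,s}$ the numerical values $[W\times W]_{i,j}$ and $[W\times W]_{r,s}$ are unequal. Lemma \ref{lem:WinnerProdUnique} reduces this to showing that the multisets of distinct terms (with multiplicities) in the two inner products differ, which in turn I would attack via unique prime factorization. Writing $W = D + O$ with $D$ the diagonal part and $O$ the off-diagonal part, the key observation is that in $\sum_k W_{i,k} W_{k,j}$ the only indices $k$ contributing a term with a diagonal prime factor are $k = i$ and $k = j$: the corresponding terms are $D_{i,i} O_{i,j}$ and $O_{i,j} D_{j,j}$ when $i \neq j$, and the single term $D_{i,i}^2$ when $i = j$.

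I would then run a three-way case analysis. When exactly one of $(i,j)$, $(r,s)$ is diagonal, the count of terms containing a diagonal prime is $1$ versus $2$, so by unique factorization the multisets cannot agree. When both are diagonal with $D_{i,i} \neq D_{r,r}$, the lone diagonal-prime-containing terms $D_{i,i}^2$ and $D_{r,r}^2$ are distinct primes squared, forcing the multisets apart. When both are off-diagonal with $O_{i,j} \neq O_{r,s}$, the only off-diagonal prime paired with a diagonal prime at $(i,j)$ is $O_{i,j}$, while at $(r,s)$ it is $O_{r,s}$; hence by unique factorization any term of the form $D_{\bullet} O_{i,j}$ appearing in the $(i,j)$ multiset cannot appear in the $(r,s)$ multiset, so the multisets differ.

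The main subtlety will arise in the final subcase when $\{i,j\}$ and $\{r,s\}$ share an index (for instance $i = r$, in which case $D_{i,i} = D_{r,r}$ appears in both inner products). I would stress that the distinguishing feature is the off-diagonal prime factor rather than which diagonal primes appear, so overlap of indices is harmless: a term $D_{i,i} O_{i,j}$ at $(i,j)$ can only be realized at $(r,s)$ by some $W_{r,k}W_{k,s}$ in which one factor equals $O_{i,j}$ and the other is a diagonal prime, but diagonal factors only arise from $k = r$ or $k = s$ and force the off-diagonal partner to be $O_{r,s}$, a contradiction. Once multiset distinctness is established in every case, Lemma \ref{lem:WinnerProdUnique} delivers $[W\times W]_{i,j} \neq [W\times W]_{r,s}$, yielding $\Pi(W\times W) \preceq \Pi(W)$.
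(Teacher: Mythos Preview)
Your proposal is correct but takes a genuinely different route from the paper. The paper argues directly by magnitude: writing $W=D+O$, it observes that the off-diagonal-only portion $\sum_{k\neq i,j} O_{i,k}O_{k,j}$ is strictly smaller than the smallest diagonal prime, so distinctness of $[W\times W]_{i,j}$ and $[W\times W]_{r,s}$ reduces to distinctness of $O_{i,j}(D_{i,i}+D_{j,j})$ versus $O_{r,s}(D_{r,r}+D_{s,s})$, which it then establishes through a somewhat delicate case split on whether and how the diagonal sums agree. You instead invoke Lemma~\ref{lem:WinnerProdUnique} (which already packages the wide-spacing magnitude argument) to reduce to distinctness of term multisets, and then use unique prime factorization to isolate the terms carrying a diagonal prime factor; since the off-diagonal partner in those terms is forced to be $O_{i,j}$ at $(i,j)$ and $O_{r,s}$ at $(r,s)$, the multisets separate immediately. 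Your approach is cleaner and avoids the ad-hoc inequality chasing in the paper's Case~2 subcases; the paper's approach, on the other hand, is self-contained at this point and does not need to cite the lemma. Both handle the diagonal/off-diagonal and shared-index situations correctly.
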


\begin{proof}
Let $M$ be a real $m\times m$ symmetric matrix whose diagonal symbols
are distinct from off-diagonal symbols. Let $W=wspm\left(M\right)$
be a widely-spaced primes matrix. Then $\Pi\left(W\right)=\Pi\left(M\right)$
by Definition \ref{def:widely-spaced-primes}. 

Let $W=D+O$ where $O$ is the matrix of off-diagonal symbols and
$D$ is the diagonal. To see that $\Pi\left(W\times W\right)\preceq\Pi\left(W\right)$,
it is enough to show that entries that differ in $W$ also differ
in $W\times W$. Consider two off-diagonal locations $(i,j)$ and
$(r,s)$ such that $W_{i,j}\neq W_{r,s}$, i.e., $O_{i,j}\neq O_{r,s}$.
Without loss of generality, assume $i<j$ and $r<s$. Then using the
first row of (\ref{eq:WWij}) 
\begin{eqnarray}
\left[W\times W\right]_{i,j} & = & D_{i,i}O_{i,j}+O_{i,j}D_{j,j}+\sum_{\begin{array}{c}
k\neq i\\
k\neq j
\end{array}}O_{i,k}O_{k,j}\label{eq:DOODij}
\end{eqnarray}
 and 
\begin{eqnarray}
\left[W\times W\right]_{r,s} & = & D_{r,r}O_{r,s}+O_{r,s}D_{s,s}+\sum_{\begin{array}{c}
k\neq r\\
k\neq s
\end{array}}O_{r,k}O_{k,s}\label{eq:DOODrs}
\end{eqnarray}
 where the terms involving two off-diagonal symbols sum to less than
the smallest diagonal symbol $p_{n_{1}+1}$. So if 
\begin{equation}
O_{i,j}\left(D_{i,i}+D_{j,j}\right)\mbox{ is distinct from }O_{r,s}\left(D_{r,r}+D_{s,s}\right)\label{eq:O(D+D)}
\end{equation}
 then $\left[W\times W\right]_{i,j}$ is guaranteed to be distinct
from $\left[W\times W\right]_{r,s}$. To see that the two sides of
(\ref{eq:O(D+D)}) are distinct when $O_{i,j}\neq O_{r,s}$, consider
the following cases:
\begin{casenv}
\item $\left(D_{i,i}+D_{j,j}\right)=\left(D_{r,r}+D_{s,s}\right)$: Then
$O_{i,j}\left(D_{i,i}+D_{j,j}\right)\neq O_{r,s}\left(D_{r,r}+D_{s,s}\right)$
since $O_{i,j}\neq O_{r,s}$ .
\item $\left(D_{i,i}+D_{j,j}\right)\neq\left(D_{r,r}+D_{s,s}\right)$: Without
loss of generality, assume that $\left(D_{i,i}+D_{j,j}\right)<\left(D_{r,r}+D_{s,s}\right)$.
Further, assume that $D_{s,s}$ is maximal among the diagonal symbols
under consideration, $D_{s,s}=max\left(D_{i,i},D_{j,j},D_{r,r},D_{s,s}\right)$,
and that $D_{s,s}=p_{l}$, the $l^{{\rm th}}$ prime number. 
\begin{casenv}
\item ($D_{i,i},\,D_{j,j}<D_{s,s}$) and ($D_{r,r}\leq D_{s,s}$): Substitute
into (\ref{eq:O(D+D)}) to get $O_{i,j}\left(D_{i,i}+D_{j,j}\right)\leq2p_{n_{1}}p_{l-1}<D_{s,s}$
and $O_{r,s}\left(D_{r,r}+D_{s,s}\right)>D_{s,s}$. Therefore $O_{i,j}\left(D_{i,i}+D_{j,j}\right)\neq O_{r,s}\left(D_{r,r}+D_{s,s}\right)$.
\item ($D_{i,i}<D_{r,r}$ \& $D_{j,j}=D_{s,s}$) and ($D_{r,r}<D_{s,s}$):
Substitute into (\ref{eq:O(D+D)}) and rearrange to get $O_{ij}D_{i,i}-O_{r,s}D_{r,r}$
versus $\left(O_{r,s}-O_{i,j}\right)D_{s,s}$. Now $\left|O_{ij}D_{i,i}-O_{r,s}D_{r,r}\right|<2p_{n_{1}}p_{l-1}<D_{s,s}$
and $\left|O_{r,s}-O_{i,j}\right|D_{s,s}>D_{s,s}$. Therefore $O_{i,j}\left(D_{i,i}+D_{j,j}\right)\neq O_{r,s}\left(D_{r,r}+D_{s,s}\right)$.
\item ($D_{i,i}<D_{r,r}$ \& $D_{j,j}=D_{s,s}$) and ($D_{r,r}=D_{s,s}$):
Substitute into (\ref{eq:O(D+D)}) to get $O_{i,j}D_{i,i}$ versus
$\left(2O_{r,s}-O_{i,j}\right)D_{s,s}$. Similar to the prior case,
$O_{i,j}D_{i,i}<2p_{n_{1}}p_{l-1}<D_{s,s}$ and $\left|2O_{r,s}-O_{i,j}\right|D_{s,s}>D_{s,s}$.
Therefore $O_{i,j}\left(D_{i,i}+D_{j,j}\right)\neq O_{r,s}\left(D_{r,r}+D_{s,s}\right)$.
\end{casenv}
\end{casenv}
A similar argument holds when diagonal symbols $W_{i,i}$ and $W_{r,r}$
differ. In this case, only the $D_{i,i}^{2}$ and $D_{r,r}^{2}$ terms
need to be considered since the remaining terms sum to less than the
smallest diagonal symbol $p_{n_{1}+1}$. Inner products for diagonal
and off-diagonal locations differ in $W\times W$. Off-diagonal locations
have two terms involving a diagonal symbol and an off-diagonal symbol
whereas diagonal locations only have one term involving the square
of a diagonal symbol, as shown in (\ref{eq:WWij}). Again, the separation
between primes guarantees they cannot be equal. Therefore if two locations
of $W$ differ, those locations will also differ in $W\times W$ implying
$\Pi\left(W\times W\right)\preceq\Pi\left(W\right)$ as desired.
\end{proof}
Theorem \ref{thm:SPDMonotonic-1} does not say that $\Pi\left(W\times W\right)=\Pi\left(SymSqr\left(M\right)\right)$.
There may be some inner products that symbolic squaring distinguishes
that regular matrix multiplication does not. However, if $SymSqr\left(W\right)$
refines the diagonal, then $W\times W$ also refines the diagonal.
\begin{thm}
\label{thm:WrefinesDiagaonl}For a widely-space primes matrix $W$,
if $SymSqr\left(W\right)$ strictly refines the diagonal, 
\[
\Pi\left(diag\left(SymSqr\left(W\right)\right)\right)\prec\Pi\left(diag\left(W\right)\right),
\]
 then $W\times W$ also strictly refines the diagonal, 
\[
\Pi\left(diag\left(W\times W\right)\right)\prec\Pi\left(diag\left(W\right)\right)
\]
 and $\Pi\left(diag\left(W\times W\right)\right)=\Pi\left(diag\left(SymSqr\left(W\right)\right)\right)$.
\end{thm}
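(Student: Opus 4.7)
The plan is to show, for a widely-spaced primes matrix $W$, that the numerical diagonal entry $[W\times W]_{i,i}$ encodes exactly the same information as the canonical inner product string at $(i,i)$ at the level of patterns. Once $\Pi(diag(W\times W))=\Pi(diag(SymSqr(W)))$ is established, the strict-refinement clause is immediate: any splitting visible to symbolic squaring is automatically visible to ordinary squaring.

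The first step is to unpack the diagonal inner product. Writing $W=D+O$ and using the symmetry $O_{i,k}=O_{k,i}$,
\[
[W\times W]_{i,i} \;=\; D_{i,i}^{\,2} \;+\; \sum_{k\neq i} O_{i,k}^{\,2},
\]
while the canonical string at $(i,i)$ consists of the single diagonal term $(D_{i,i},D_{i,i})$ followed by the ordered multiset $\{\{(O_{i,k},O_{i,k}):k\neq i\}\}$. Both quantities are therefore determined by the pair consisting of the diagonal symbol $D_{i,i}$ together with the multiset of off-diagonal symbols appearing in row $i$. It thus suffices to prove that this pair is recoverable from the numerical value $[W\times W]_{i,i}$.

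I would then apply the widely-spaced prime argument from Lemma \ref{lem:WinnerProdUnique} twice, with squared primes playing the role of two-prime products. In the outer round, the off-diagonal contribution $\sum_k O_{i,k}^{\,2}$ is bounded by $(m-1)p_{n_1}^{\,2}$, whereas every diagonal squared term is at least $p_{n_1+1}^{\,2}>m^2 p_{n_1}^{\,4}$, so the gap is wide enough to force $D_{i,i}^{\,2}=D_{r,r}^{\,2}$ (hence $D_{i,i}=D_{r,r}$, the primes being positive) and $\sum_k O_{i,k}^{\,2}=\sum_k O_{r,k}^{\,2}$ separately whenever $[W\times W]_{i,i}=[W\times W]_{r,r}$. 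In the inner round, writing the off-diagonal sum as $\sum_l n_l^{(i)} p_l^{\,2}$ with each multiplicity $n_l^{(i)}\leq m-1$, the growth condition $p_{l+1}>m p_l^{\,2}$ gives $p_L^{\,2}>4m^2 p_{L-1}^{\,2}$, so the largest index $L$ with $n_L^{(i)}\neq n_L^{(r)}$ contributes a difference at least $p_L^{\,2}$ that dominates the bounded tail $2(m-1)\sum_{l<L}p_l^{\,2}$. Hence equal sums force equal multiplicity vectors, i.e.\ the multisets of off-diagonal symbols in rows $i$ and $r$ must agree.

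Combining the two rounds, $[W\times W]_{i,i}=[W\times W]_{r,r}$ iff $D_{i,i}=D_{r,r}$ and the row-$i$, row-$r$ off-diagonal multisets agree, which by the unpacking above is iff the diagonal canonical strings at $(i,i)$ and $(r,r)$ coincide. This yields $\Pi(diag(W\times W))=\Pi(diag(SymSqr(W)))$, and the hypothesis that $SymSqr(W)$ strictly refines $\Pi(diag(W))$ then transfers directly to $W\times W$. The main obstacle is the bookkeeping of the gap inequalities: the primes here appear as squares rather than as two-prime products, so one must reverify a clean dominance bound of the form $p_L^{\,2}>2(m-1)\sum_{l<L}p_l^{\,2}$; the construction's growth condition handles this comfortably but the tail sum must be estimated geometrically rather than quoted verbatim from Lemma \ref{lem:WinnerProdUnique}.
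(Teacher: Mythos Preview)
Your proposal is correct and follows essentially the same approach as the paper: observe that for diagonal entries each term in the inner product has equal factors, so the canonical string is determined by the multiset of unordered terms, and then use the widely-spaced primes structure to show the numerical value determines that multiset. The paper's version is more compact because it simply invokes Lemma~\ref{lem:WinnerProdUnique} directly (squares $p_l^{2}$ are already among the two-prime products covered by Remark~\ref{rem:DistinctTerms}, so no separate gap estimate is needed), whereas you split into an outer and inner round and re-derive the dominance inequalities by hand; both routes land in the same place.
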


\begin{proof}
The inner product for a diagonal symbol has the same symbol for both
factors of a term, see the second row of (\ref{eq:WWij}). Therefore,
the factor order doesn't matter for the canonical inner product strings
of diagonal symbols and Lemma \ref{lem:WinnerProdUnique} says the
inner product is unique to the set of distinct terms and their multiplicities.
So, if $SymSqr\left(W\right)$ refines the diagonal, $\Pi\left(diag\left(SymSqr\left(W\right)\right)\right)\prec\Pi\left(diag\left(W\right)\right)$,
then $W\times W$ refines the diagonal, $\Pi\left(diag\left(W\times W\right)\right)\prec\Pi\left(diag\left(W\right)\right)$
and 
\[
\Pi\left(diag\left(W\times W\right)\right)=\Pi\left(diag\left(SymSqr\left(W\right)\right)\right)
\]
.
\end{proof}
For off-diagonal locations, there may be cases where $SymSqr\left(W\right)$
refines an off-diagonal location but $W\times W$ does not. However,
if a widely-spaced primes matrix $W$ is generated from PCM $M$,
$M=PCM\left(A_{C}\right)$ and $W=wspm\left(M\right)$, then for the
first squaring, $\Pi\left(W\times W\right)=\Pi\left(SymSqr\left(M\right)\right)$.
This can be seen by looking at Table \ref{tab:Canonical-inner-product-terms}
and noting that each canonical inner product string with $(m-1)\alpha\beta$'s
also has $(m-1)\beta\alpha$'s, so each distinct canonical inner product
string in $SymSqr\left(M\right)$ is associated with a distinct numeric
inner product of $W\times W$. Recall that symbolic squaring uses
the lexicographically lessor to represent locations $(i,j)$ and $(j,i)$
so switching the roles of $\delta_{i}$ and $\delta_{j}$ in column
2 of Table \ref{tab:Canonical-inner-product-terms} don't matter.

Next is to look at how close squaring a widely-space primes matrix
comes to matching a canonical inner product string for off-diagonal
locations after the first symbolic squaring. Start by looking at off-diagonal
locations that are not refined by symbolic squaring.
\begin{lem}
\label{lem:symsqrEqual-WWEqual}Let $W$ be a $m\times m$ widely-spaced
primes matrix. If $SymSqr\left(W\right)_{i,j}=SymSqr\left(W\right)_{r,s}$
then $\left(W\times W\right)_{i,j}=\left(W\times W\right)_{r,s}$.
\end{lem}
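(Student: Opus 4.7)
The plan is to observe that the canonical inner product string $SymSqr(W)_{i,j}$ encodes strictly more information than the ordinary scalar inner product $(W\times W)_{i,j}$, so equality of the former immediately forces equality of the latter. This direction does not actually use the widely-spaced-primes structure of $W$; it would go through for any symmetric matrix whose diagonal symbols differ from its off-diagonal symbols.

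First I would unpack the definition from Section~\ref{subsec:CIPS}: by construction $SymSqr(W)_{i,j}$ is built from the multiset of ordered pairs $\{\{(W_{i,k},W_{k,j})\}\}_{k=1}^{m}$ by splitting it into a diagonal-bearing part and a purely off-diagonal part and then canonically ordering each part before concatenation. From the hypothesis $SymSqr(W)_{i,j} = SymSqr(W)_{r,s}$ I would conclude that the two canonical ordered strings coincide, and hence so do the underlying (unordered) multisets of pairs $\{\{(W_{i,k},W_{k,j})\}\}$ and $\{\{(W_{r,k},W_{k,s})\}\}$.

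Next I would push this equality through componentwise multiplication: multiplying each pair yields equal multisets $\{\{W_{i,k}W_{k,j}\}\}$ and $\{\{W_{r,k}W_{k,s}\}\}$ of scalars, because componentwise multiplication is a well-defined map on multisets. Since finite sums are invariant under reordering of their summands, the two sums $(W\times W)_{i,j}=\sum_{k}W_{i,k}W_{k,j}$ and $(W\times W)_{r,s}=\sum_{k}W_{r,k}W_{k,s}$ must then agree, which is exactly the desired conclusion.

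I do not anticipate a real obstacle in this direction; the content of the lemma is essentially that $SymSqr$ refines at least as finely as ordinary matrix multiplication, and the argument is a one-liner once the definitions are unpacked. The genuinely substantive work in this section is the converse direction --- recovering the multiset of prime-pair terms from its plain scalar sum --- which is precisely what Lemma~\ref{lem:WinnerProdUnique} delivers and where the widely-spaced-primes hypothesis earns its keep.
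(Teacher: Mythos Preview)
Your argument is correct and is essentially the same as the paper's: equal canonical inner product strings force equal multisets of terms, and summing equal multisets gives equal scalar inner products. Your added remark that the widely-spaced-primes hypothesis plays no role in this direction is accurate and worth noting.
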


\begin{proof}
Given $SymSqr\left(W\right)_{i,j}=SymSqr\left(W\right)_{r,s}$, then
the canonical inner product strings at $(i,j)$ and $(r,s)$ match,
so the multisets of terms match. Therefore, the inner products $\left(W\times W\right)_{i,j}$
and $\left(W\times W\right)_{r,s}$ match since they are summing over
the same multiset of terms.
\end{proof}
Next is to look at off-diagonal locations that do get refined by symbolic
squaring. This is the more complex case. Before addressing the general
case, the instance where column mixes of the inner product vectors
match but the results from symbolic squaring differ is examined.
\begin{thm}
\label{thm:SymSqrNE-WWNE}Let $W$ be a $m\times m$ widely-spaced
primes matrix. Let $(i,j)$ and $(r,s)$ be two distinct off-diagonal
locations. If $colMix\left(\left[W_{i},W_{j}\right]\right)=colMix\left(\left[W_{r},W_{s}\right]\right)$
and $SymSqr\left(W\right)_{i,j}\neq SymSqr\left(W\right)_{r,s}$,
then $\left(W\times W\right)_{i,j}\neq\left(W\times W\right)_{r,s}$.
\end{thm}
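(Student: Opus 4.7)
The plan is to argue by contrapositive: assume $(W\times W)_{i,j} = (W\times W)_{r,s}$ while the colMix hypothesis still holds, and deduce that $SymSqr(W)_{i,j} = SymSqr(W)_{r,s}$. The starting point is Lemma \ref{lem:WinnerProdUnique}: equal inner products force the multisets of numeric terms $\{\{W_{i,k}\,W_{k,j}\}\}_{k}$ and $\{\{W_{r,k}\,W_{k,s}\}\}_{k}$ to coincide. Because each term is the product of exactly two primes drawn from the widely-spaced set $\Sigma$, unique factorization promotes this to equality of multisets of \emph{unordered} pairs: $\{\{\{W_{i,k}, W_{k,j}\}\}\}_{k} = \{\{\{W_{r,k}, W_{k,s}\}\}\}_{k}$.

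The next step is to peel off the two diagonal-involving terms. By the widely-spaced growth of $\Sigma$, any product involving a diagonal prime strictly exceeds the sum of all $m$ products of two off-diagonal primes (the smallest diagonal prime alone exceeds $m$ times the square of the largest off-diagonal prime). Hence within the common unordered-pair multiset, the two diagonal-involving pairs $\{D_{i,i}, O_{i,j}\}$ and $\{O_{i,j}, D_{j,j}\}$ must match $\{D_{r,r}, O_{r,s}\}$ and $\{O_{r,s}, D_{s,s}\}$. Since $O_{i,j}$ is the unique off-diagonal prime in each of its two pairs, I extract $O_{i,j} = O_{r,s}$ together with $\{\{D_{i,i}, D_{j,j}\}\} = \{\{D_{r,r}, D_{s,s}\}\}$. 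Feeding this into the colMix hypothesis (after possibly swapping $r$ and $s$ to put us in the \emph{direct} rather than the \emph{swapped} branch of the two-way colMix matching), the diagonal entry of column $i$ must match that of column $r$, giving $D_{i,i} = D_{r,r}$ and $D_{j,j} = D_{s,s}$. The diagonal-involving prefixes of the two canonical inner product strings therefore coincide exactly.

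What remains is to show that the off-diagonal portions of the two CIS's also agree. We have matching multisets of unordered off-diagonal pairs $\{\{\{O_{i,k}, O_{k,j}\}\}\}_{k\ne i,j}$ and $\{\{\{O_{r,k}, O_{k,s}\}\}\}_{k\ne r,s}$, together with matching first- and second-coordinate column multisets from colMix (after stripping $D_{i,i}$, $D_{j,j}$, and the two copies of $O_{i,j}$ already identified). The main obstacle is upgrading this to equality of the \emph{ordered} off-diagonal pair multisets; unordered-pair equality plus matching marginals is a priori too weak. The plan is a re-pairing argument: fix any witness bijection realizing the unordered match, partition its indices into \emph{direct} and \emph{flipped} according to whether the ordered pairs coincide or are swapped, and then use the coordinate-multiset constraints to show that on the flipped subset the first- and second-coordinate multisets are individually equal, so that the flips can be absorbed into a modified bijection leaving every pair direct. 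Since $SymSqr$ symmetrizes $(i,j)$ with $(j,i)$ by taking the lexicographically smaller of the two canonical strings, and the string at $(j,i)$ is precisely the term-wise flip of the string at $(i,j)$, any residual uniform flip collapses under the symmetrization, yielding $SymSqr(W)_{i,j} = SymSqr(W)_{r,s}$ as required.
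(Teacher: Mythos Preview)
Your overall strategy---contrapositive, invoking Lemma~\ref{lem:WinnerProdUnique} together with unique factorization to obtain equality of the multisets of \emph{unordered} factor pairs, then peeling off the two diagonal-involving terms---is sound and parallels the opening of the paper's argument. The genuine gap is in the re-pairing step for the off-diagonal portion.

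You correctly deduce that on the flipped subset the first- and second-coordinate multisets coincide. But the inference ``so the flips can be absorbed into a modified bijection leaving every pair direct'' does not follow from that. Take a flipped subset consisting of the three ordered pairs $(a,b),(b,c),(c,a)$ on one side and $(b,a),(c,b),(a,c)$ on the other: both marginals equal $\{a,b,c\}$ and the unordered pairs agree, yet no re-pairing makes any of the three direct. Your fallback clause (``any residual uniform flip collapses under the symmetrization'') covers only the case where the \emph{entire} off-diagonal multiset is flipped. Adjoin a single direct pair $(e,f)$ to both sides of the 3-cycle above: the unordered-pair multisets still match, both marginals still match, the ordered-pair multisets differ, and neither side equals the full flip of the other---so taking the lexicographic minimum with the transposed string does not identify them, and $SymSqr$ remains unequal while the numeric inner products agree. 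A cyclic flip on part of the index set combined with a direct match on the rest defeats the argument as written.

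The paper proceeds differently and avoids abstract re-pairing altogether. It symmetrically permutes columns $i,j$ (respectively $r,s$) into positions $1,2$ and then uses the colMix hypothesis to permute rows $3{:}m$ so that the first columns coincide \emph{entrywise}, $A_1=B_1$. With the first factor pinned down pointwise rather than only as a multiset, the comparison reduces to two second-column vectors $A_{3:m,2}$ and $B_{3:m,2}$ with the same column mix against the common partner $A_1$, and the argument turns on how a rearrangement of the second column across the row-signature blocks of $A_1$ alters the multiset of distinct products, contradicting Lemma~\ref{lem:WinnerProdUnique}. That entrywise pinning of one factor is the structural device your proposal lacks; without it, marginal information plus unordered-pair equality is genuinely too weak to force ordered-pair equality, as the 3-cycle example shows.
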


\begin{proof}
Assume $W$ is a $m\times m$ widely-spaced primes matrix constructed
as described in Definition \ref{def:widely-spaced-primes} and $(i,j)$
and $(r,s)$ are two distinct off-diagonal locations such that $colMix\left(\left[W_{i},W_{j}\right]\right)=colMix\left(\left[W_{r},W_{s}\right]\right)$
and $SymSqr\left(W\right)_{i,j}\neq SymSqr\left(W\right)_{r,s}$.
WLOG assume that $colMix\left(W_{i}\right)=colMix\left(W_{r}\right)$
and $colMix\left(W_{j}\right)=colMix\left(W_{s}\right)$. 

Proof by contradiction. Assume that $\left(W\times W\right)_{i,j}=\left(W\times W\right)_{r,s}$.
Let $A=P_{ij}WP_{ij}^{T}$ where $P_{ij}$ symmetrically permutes
columns $i$ and $j$ to be columns 1 and 2 of $A$ respectively.
Similarly, let $B=P_{rs}WP_{rs}^{T}$ where $P_{rs}$symmetrically
permutes columns $r$ and $s$ to be columns 1 and 2 of $B$ respectively.

Note that $A$ and $B$ are symmetric and $colMix\left(A_{1}\right)=colMix\left(B_{1}\right)$
and $colMix\left(A_{2}\right)=colMix\left(B_{2}\right)$. So there
exists a permutation $P_{A}$ that when symmetrically applied to $A$,
sorts rows $3:m$ of $A_{1}$ into contiguous groups of distinct symbols.
Similarly there exists a permutation $P_{B}$ that sorts rows $3:m$
of $B_{1}$ into the same contiguous groups of distinct symbols. Let
\[
A\leftarrow P_{A}AP_{A}^{T}\mbox{ and }B\leftarrow P_{B}BP_{B}^{T}.
\]
Now $A_{1}=B_{1}$ and $A$ and $B$ have the form shown in (\ref{eq:differences})
where the row signatures are the groups of distinct symbols in rows
$3:m$ of $A_{1}$ and $B_{1}$. Note that $colMix\left(A_{3:m,2}\right)=colMix\left(B_{3:m,2}\right)$
because $colMix\left(A_{2}\right)=colMix\left(B_{2}\right)$ and $A$
and $B$ are symmetric, so the first two elements of $A_{2}$ and
$B_{2}$ are identical.

By hypothesis $SymSqr\left(W\right)_{i,j}\neq SymSqr\left(W\right)_{r,s}$
so $SymSqr\left(A\right)_{1,2}\neq SymSqr\left(B\right)_{1,2}$ but
$A_{1}^{T}A_{2}=B_{1}^{T}B_{2}$. Consider the terms in the canonical
inner product strings of $SymSqr\left(A\right)_{1,2}$ and $SymSqr\left(B\right)_{1,2}$.
The first two terms involving diagonal symbols match, so the differences
must be in the terms from rows $3:m$, 
\begin{equation}
\left\{ \left\{ \begin{array}{c}
\left(A_{3,1},A_{3,2}\right)\\
\vdots\\
\left(A_{m,1},A_{m,2}\right)
\end{array}\right\} \right\} \neq\left\{ \left\{ \begin{array}{c}
\left(B_{3,1},B_{3,2}\right)\\
\vdots\\
\left(B_{m,1},B_{m,2}\right)
\end{array}\right\} \right\} \label{eq:Multiset3tom}
\end{equation}
where the symmetry of $A$ and $B$ is used to write the left factors
as entries from column 1. Since $A_{1}=B_{1},$ rewrite (\ref{eq:Multiset3tom})
as 
\begin{equation}
\left\{ \left\{ \begin{array}{c}
\left(B_{3,1},A_{3,2}\right)\\
\vdots\\
\left(B_{m,1},A_{m,2}\right)
\end{array}\right\} \right\} \neq\left\{ \left\{ \begin{array}{c}
\left(B_{3,1},B_{3,2}\right)\\
\vdots\\
\left(B_{m,1},B_{m,2}\right)
\end{array}\right\} \right\} .\label{eq:BA-BB}
\end{equation}
 Now the left hand factors on both sides of (\ref{eq:BA-BB}) match.
Also the column mixes match, $colMix\left(A_{3:m,2}\right)=colMix\left(B_{3:m,2}\right)$,
and the inner products match but the multisets of terms differ. Comparing
the lhs and rhs of (\ref{eq:BA-BB}), one sees that $A_{3:m,2}$,
when compared to $B_{3:m,2}$, must have distinct symbols exchanged
across the row signature boundaries established in $B_{1}$. The symbol
exchange(s) act similar to the roles of $\alpha$ and $\beta$ in
(\ref{eq:differences}). This is the only way for the multisets of
terms to differ while $colMix\left(A_{2}\right)$ equals $colMix\left(B_{2}\right)$.
By Lemma \ref{lem:WinnerProdUnique}, the inner product is unique
to the set of distinct terms and their multiplicities. The exchange
of distinct symbols across the row signature boundaries change the
multiplicities of distinct terms and possibly the set of distinct
terms. So, the inner products associated with the two sides of (\ref{eq:BA-BB})
cannot be equal, contradicting the hypothesis that $\left(W\times W\right)_{i,j}=\left(W\times W\right)_{r,s}$.

Therefore, if $colMix\left(\left[W_{i},W_{j}\right]\right)=colMix\left(\left[W_{r},W_{s}\right]\right)$
and $SymSqr\left(W\right)_{i,j}\neq SymSqr\left(W\right)_{r,s}$,
then $\left(W\times W\right)_{i,j}\neq\left(W\times W\right)_{r,s}$
as was to be shown.
\end{proof}
Widely-spaced primes matrices and regular matrix multiplication can
generate a sequence of refinement that is at least as refined as the
one generating $\Pi\left(M^{*}\right)$, it just might take twice
as many iterations.
\begin{thm}
\label{thm:TwiceIsEnough}For a widely-spaced primes matrix $W$,
let $\overline{W}=wspm\left(W\times W\right)$ be a widely-spaced
primes matrix generated from the result of squaring $W$. Then the
pattern from squaring $\overline{W}$ is a refinement of the pattern
from symbolically squaring $W$,

\[
\Pi\left(\overline{W}\times\overline{W}\right)\preceq\Pi\left(SymSqr\left(W\right)\right).
\]
\end{thm}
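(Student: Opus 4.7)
The plan is to prove the contrapositive of the desired refinement: for every pair of locations $(i,j)$ and $(r,s)$ satisfying $SymSqr\left(W\right)_{i,j}\neq SymSqr\left(W\right)_{r,s}$, the regular square satisfies $\left(\overline{W}\times\overline{W}\right)_{i,j}\neq\left(\overline{W}\times\overline{W}\right)_{r,s}$. Diagonal locations are immediate: Theorem \ref{thm:WrefinesDiagaonl} says a diagonal separation in $SymSqr\left(W\right)$ is matched by a separation in $W\times W$, so $\overline{W}_{i,i}\neq\overline{W}_{r,r}$ by construction of $wspm$, and Theorem \ref{thm:SPDMonotonic-1} applied to $\overline{W}$ lifts the separation to $\overline{W}\times\overline{W}$. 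For off-diagonal $(i,j)$ and $(r,s)$, I would split into two subcases according to whether $\left(W\times W\right)_{i,j}$ and $\left(W\times W\right)_{r,s}$ agree.

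The easy subcase is $\left(W\times W\right)_{i,j}\neq\left(W\times W\right)_{r,s}$, handled exactly as in the diagonal argument: $\overline{W}_{i,j}\neq\overline{W}_{r,s}$ and Theorem \ref{thm:SPDMonotonic-1} closes it. The interesting subcase is $\left(W\times W\right)_{i,j}=\left(W\times W\right)_{r,s}$ paired with $SymSqr\left(W\right)_{i,j}\neq SymSqr\left(W\right)_{r,s}$. Theorem \ref{thm:SymSqrNE-WWNE} in contrapositive then forces $colMix\left(\left[W_{i},W_{j}\right]\right)\neq colMix\left(\left[W_{r},W_{s}\right]\right)$. Now observe that $\left(W\times W\right)_{a,a}=\sum_{k}W_{a,k}^{2}$ is an inner product whose terms are squared widely-spaced primes, so Lemma \ref{lem:WinnerProdUnique} uniquely recovers the multiset of terms and hence $colMix\left(W_{a}\right)$. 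It follows that the multisets $\left\{\left(W\times W\right)_{i,i},\left(W\times W\right)_{j,j}\right\}$ and $\left\{\left(W\times W\right)_{r,r},\left(W\times W\right)_{s,s}\right\}$ differ, and the prime assignment in $\overline{W}$ passes this to $\left\{\overline{D}_{i,i},\overline{D}_{j,j}\right\}\neq\left\{\overline{D}_{r,r},\overline{D}_{s,s}\right\}$ viewed as multisets.

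To finish, I would compare the two inner products of $\overline{W}\times\overline{W}$ term by term. Writing $\overline{W}=\overline{D}+\overline{O}$, each off-diagonal inner product contains exactly two terms with a diagonal-prime factor, namely $\overline{O}_{i,j}\overline{D}_{i,i}$ and $\overline{O}_{i,j}\overline{D}_{j,j}$, and $m-2$ terms whose factors are both off-diagonal primes. Because $\left(W\times W\right)_{i,j}=\left(W\times W\right)_{r,s}$ gives $\overline{O}_{i,j}=\overline{O}_{r,s}$ while the diagonal pairs differ, the submultiset of diagonal-involving terms differs between the two locations, so Lemma \ref{lem:WinnerProdUnique} applied directly to $\overline{W}\times\overline{W}$ converts the term-multiset difference into a difference of inner product values. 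The main obstacle I expect is the accompanying prime-gap bookkeeping: one must verify that every diagonal-involving term cleanly dominates the sum of any $m-2$ off-diagonal-only terms, so that the two groups cannot compensate for one another and the uniqueness supplied by Lemma \ref{lem:WinnerProdUnique} really does pin down the diagonal pair from the inner product. This is a case analysis of the same flavor as the one carried out in the proof of Theorem \ref{thm:SPDMonotonic-1}, and once those arithmetic bounds are secured the argument reduces to invoking the widely-spaced prime uniqueness.
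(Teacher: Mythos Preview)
Your proposal is correct and follows essentially the same case split as the paper: diagonal locations via Theorem~\ref{thm:WrefinesDiagaonl} and Theorem~\ref{thm:SPDMonotonic-1}; off-diagonal locations already separated by $W\times W$ via Theorem~\ref{thm:SPDMonotonic-1}; and the residual case handled by using Theorem~\ref{thm:SymSqrNE-WWNE} to force a column-mix discrepancy, lifting that to a discrepancy in the diagonal pair of $\overline{W}$, and then reading it off in $\overline{W}\times\overline{W}$. The only cosmetic difference is that where the paper cites Theorem~\ref{thm:WrefinesDiagaonl} and equations (\ref{eq:DOODij})--(\ref{eq:DOODrs}) for the last two steps, you invoke Lemma~\ref{lem:WinnerProdUnique} directly on $(W\times W)_{a,a}=\sum_k W_{a,k}^2$ and on the term multiset of $(\overline{W}\times\overline{W})_{i,j}$; these are equivalent, and your closing ``bookkeeping'' worry is already absorbed by Lemma~\ref{lem:WinnerProdUnique} together with the fact that any term with a diagonal-prime factor strictly exceeds every product of two off-diagonal primes.
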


\begin{proof}
Let $W$ be a widely-spaced primes matrix and $\overline{W}=wspm\left(W\times W\right)$
be a widely-spaced primes matrix generated from the result of squaring
$W$.

From Theorem \ref{thm:WrefinesDiagaonl} and Theorem \ref{thm:SPDMonotonic-1}
one gets that, 
\[
\Pi\left(diag\left(\overline{W}\times\overline{W}\right)\right)\preceq\Pi\left(diag\left(W\times W\right)\right)=\Pi\left(diag\left(SymSqr\left(W\right)\right)\right).
\]
 So diagonal cells of $\left(\overline{W}\times\overline{W}\right)$
are a refinement of the diagonal cells of $SymSqr\left(W\right)$.
The remaining question is whether the off-diagonal cells of $\left(\overline{W}\times\overline{W}\right)$
are a refinement of the off-diagonal cells of $SymSqr\left(W\right)$.

Assume $W_{i,j}$ and $W_{r,s}$ are in the same off-diagonal cell
so $W_{i,j}=W_{r,s}$ with $(i,j)\neq(r,s)$. 
\begin{casenv}
\item If $SymSqr\left(W\right)_{i,j}\neq SymSqr\left(W\right)_{r,s}$ and
$\left(W\times W\right)_{i,j}\neq\left(W\times W\right)_{r,s}$, then
done since $\left(\overline{W}\times\overline{W}\right)_{i,j}\neq\left(\overline{W}\times\overline{W}\right)_{r,s}$
by Theorem \ref{thm:SPDMonotonic-1}.
\item If $SymSqr\left(W\right)_{i,j}\neq SymSqr\left(W\right)_{r,s}$ and
$\left(W\times W\right)_{i,j}=\left(W\times W\right)_{r,s}$ then
consider the column mixes of $W_{i}$, $W_{j}$, $W_{r}$, and $W_{s}$.
\begin{casenv}
\item Assume $colMix\left(\left[W_{i},W_{j}\right]\right)\neq colMix\left(\left[W_{r},W_{s}\right]\right)$.
Theorem \ref{thm:WrefinesDiagaonl} says $W\times W$ and $SymSqr\left(W\right)$
refine the diagonal the same way, so the multisets of diagonal symbols
are not equal $\left\{ \left\{ \left(W\times W\right)_{i,i},\left(W\times W\right)_{j,j}\right\} \right\} \neq\left\{ \left\{ \left(W\times W\right)_{r,r},\left(W\times W\right)_{s,s}\right\} \right\} $.
Then the second squaring results in $\left(\overline{W}\times\overline{W}\right)_{i,j}\neq\left(\overline{W}\times\overline{W}\right)_{r,s}$
since the diagonal terms $\left\{ \left\{ \overline{W}_{i,i},\overline{W}_{j,j}\right\} \right\} $
and $\left\{ \left\{ \overline{W}_{r,r},\overline{W}_{s,s}\right\} \right\} $
differ in (\ref{eq:DOODij}) and (\ref{eq:DOODrs}).
\item Assume $colMix\left(\left[W_{i},W_{j}\right]\right)=colMix\left(\left[W_{r},W_{s}\right]\right)$.
Theorem \ref{thm:SymSqrNE-WWNE}, says the condition where $colMix\left(\left[W_{i},W_{j}\right]\right)=colMix\left(\left[W_{r},W_{s}\right]\right)$
and $SymSqr\left(W\right)_{i,j}\neq SymSqr\left(W\right)_{r,s}$ means
$\left(W\times W\right)_{i,j}\neq\left(W\times W\right)_{r,s}$. This
contradicts the hypothesis that $\left(W\times W\right)_{i,j}=\left(W\times W\right)_{r,s}$.
So this is not a valid case.
\end{casenv}
\end{casenv}
Therefore, $\Pi\left(\overline{W}\times\overline{W}\right)\preceq SymSqr\left(W\right)$
as was to be shown.
\end{proof}
Theorem \ref{thm:TwiceIsEnough} implies a sequence of constructing
and squaring widely-spaced matrices will converge to a fixed point
that is a refinement of $M^{*}$ in at most twice as many iterations
as symbolic squaring. 

A second method generates a sequence of refinement identical to the
symbolic squaring sequence of refinement. However, instead of squaring
a widely-spaced primes matrix, it multiplies two different widely-space
matrices. 
\begin{thm}
\label{thm:N1xN2=00003DsymSqr(N)}Given a real $m\times m$ symmetric
matrix $M$ whose diagonal symbols are distinct from off-diagonal
symbols, there exist widely-spaced primes matrices $W1$ and $W2$
such that $\Pi\left(W1\right)=\Pi\left(W2\right)=\Pi\left(M\right)$
and 
\[
\Pi\left(min\left(\left(W1\times W2\right),\left(W1\times W2\right)^{T}\right)\right)=\Pi\left(SymSqr\left(M\right)\right)
\]
 where $min\left(\right)$ is the element-wise matrix minimum.
\end{thm}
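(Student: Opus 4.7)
The plan is to build $W1$ and $W2$ from two disjoint blocks of widely-spaced primes that both realize the pattern of $M$, so that each product $W1_{i,k}\cdot W2_{k,j}$ encodes not merely the unordered pair of factors but which factor sat on the left (from $W1$) and which sat on the right (from $W2$). This is precisely the distinction that ordinary squaring $W\times W$ cannot make and that forced Theorem \ref{thm:TwiceIsEnough} to pay a factor of two; here the ordered-pair information is restored at the level of the product itself.

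Concretely, I would pick a sequence of primes $p_1<p_2<\cdots<p_n$ satisfying a strengthened version of the widely-spaced-primes recurrence of Definition \ref{def:widely-spaced-primes}, large enough that any product of one of the $p$'s with one of the $q$'s (introduced in a moment) dominates any sum of strictly smaller such products by at least a factor of $m$. Then pick a second, disjoint sequence $q_1<q_2<\cdots<q_n$ satisfying the same separation both internally and relative to the $p$'s. Let $W1=wspm_p(M)$ use the $p$'s as symbols (with the same cell-to-symbol assignment used for $W$ in Definition \ref{def:widely-spaced-primes}) and $W2=wspm_q(M)$ use the $q$'s with the corresponding assignment, so that $\Pi(W1)=\Pi(W2)=\Pi(M)$.

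Next I would prove the cross-product analogue of Lemma \ref{lem:WinnerProdUnique}: the value $(W1\times W2)_{i,j}=\sum_k W1_{i,k}W2_{k,j}$ uniquely determines the multiset of ordered terms $\{\!\{(W1_{i,k},W2_{k,j})\}\!\}_k$. The argument is identical in spirit to that of Lemma \ref{lem:WinnerProdUnique}: every term is a product $p_\ell q_r$ with $\ell$ indexing the $W1$ factor and $r$ indexing the $W2$ factor, and the separation chosen above totally orders these $n^2$ distinct terms so that the largest surviving term cannot be duplicated by any combination of smaller ones with multiplicities bounded by $m$. Hence, if two locations have distinct canonical inner product strings as \emph{ordered} multisets, their $W1\times W2$ entries differ. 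Combined with the converse (identical ordered multisets obviously give identical numeric sums), this yields
\[
\Pi(W1\times W2)_{i,j}=\Pi(W1\times W2)_{r,s}\iff\text{the ordered term multisets at }(i,j)\text{ and }(r,s)\text{ agree}.
\]
Because $W1$ and $W2$ also encode diagonal-vs.-off-diagonal via disjoint prime subranges (by the ordering of primes in Definition \ref{def:widely-spaced-primes}), the same bookkeeping used for $SymSqr$ keeps diagonal terms and off-diagonal terms separated in the ordered comparison.

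Finally I would handle the symmetrization. The function $SymSqr$ symmetrizes by taking the lexicographically lessor of the canonical strings at $(i,j)$ and $(j,i)$. Its numeric analogue is taking the element-wise minimum of $W1\times W2$ and its transpose, since $(W1\times W2)^T_{i,j}=\sum_k W1_{j,k}W2_{k,i}$ (using symmetry of $W1$ and $W2$, this is $\sum_k W2_{i,k}W1_{k,j}$), which is precisely the value associated with the reversed-order canonical string at $(i,j)$. By the unique-decomposition step above, a tie between two locations under $\min((W1\times W2),(W1\times W2)^T)$ holds iff the corresponding symmetrized canonical strings chosen by $SymSqr$ agree, so
\[
\Pi\bigl(\min((W1\times W2),(W1\times W2)^T)\bigr)=\Pi(SymSqr(M)).
\]
The main obstacle will be making the widely-spaced-primes separation quantitatively tight enough for the cross-product case: the bound must be chosen simultaneously relative to both the $p$'s and the $q$'s so that no sum of at most $m$ cross-terms $p_\ell q_r$ of smaller type can ever equal a cross-term of larger type, even after the two sequences are interleaved by magnitude. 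Once that separation constant is nailed down, the rest of the argument is a direct translation of Lemma \ref{lem:WinnerProdUnique} and Theorem \ref{thm:SymSqrNE-WWNE} to the ordered setting, with the $\min$-symmetrization merely mirroring the lessor-string convention already used by $SymSqr$.
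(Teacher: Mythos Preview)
Your proposal is correct and follows essentially the same route as the paper: two widely-spaced primes matrices with disjoint symbol sets so that each cross-term $p_\ell q_r$ encodes the ordered pair of factors, a unique-decomposition lemma analogous to Lemma~\ref{lem:WinnerProdUnique}, and the element-wise $\min$ to mirror the lessor-string convention of $SymSqr$. The paper makes the separation explicit by taking $\overline{p}_1>mp_n^2$ so that the second block simply continues the first, whereas you phrase it as a simultaneous separation condition; and the closing reference to Theorem~\ref{thm:SymSqrNE-WWNE} is unnecessary here since the ordered cross-product already distinguishes what that theorem was needed for in the single-$W$ setting.
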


\begin{proof}
Similar to the proof of Theorem \ref{thm:SPDMonotonic-1}, widely-spaced
primes are used as symbols. Assume $M$ has $n$ distinct symbols
and $n_{1}$ are off-diagonal symbols and $n_{2}$ are diagonal symbols
with $n_{1}+n_{2}=n$. Let $W1=wspm\left(M\right)$ be a widely-space
primes matrix as described in Definition \ref{def:widely-spaced-primes}
where the primes are $\left\{ p_{1},\cdots,p_{n}\right\} $, starting
with $p_{1}>m$. Then let $W2=wspm\left(M\right)$ be a second widely-spaced
primes matrix where the primes are $\left\{ \overline{p}_{1},\cdots,\overline{p}_{n}\right\} $,
starting with $\overline{p}_{1}>mp_{n}^{2}$.

Use the first $n_{1}$ primes of $\left\{ p_{1},\cdots,p_{n}\right\} $,
$p_{1},\ldots,p_{n_{1}}$, as off-diagonal symbols of $W1$ and the
next $n_{2}$ primes, $p_{n_{1}+1},\ldots,p_{n}$, as the diagonal
symbols of $W1$. Then use the first $n_{1}$ primes of $\left\{ \overline{p}_{1},\cdots,\overline{p}_{n}\right\} $,
$\overline{p}_{1},\ldots,\overline{p}_{n_{1}}$, as off-diagonal symbols
of $W2$ and the last $n_{2}$ primes, $\overline{p}_{n_{1}+1},\ldots,\overline{p}_{n}$,
as diagonal symbols of $W2$. $W1$ and $W2$ are SPD by construction.

Now consider the matrix product $W1\times W2$. For each $(i,j)$
location, the inner product is constructed from a row vector of $W1$
and a column vector from $W2$. Since the symbol sets for $W1$ and
$W2$ are distinct, the factors of each term can be considered to
be ordered. To see this, consider the canonical string at $(i,j)$
from $W1\times W2$. Each term in an inner product is given by a $p_{r}\overline{p}_{c}$
where $p_{r}\in\left\{ p_{1},\ldots,p_{n}\right\} $ and $\overline{p}_{c}\in\left\{ \overline{p}_{1},\cdots,\overline{p}_{n}\right\} $.
Without loss of generality, assume that $p_{r}\overline{p}_{c}$ is
the largest term in the inner product inner product string. Then the
inner product is less than or equal to $mp_{r}\overline{p}_{c}$.
But the next larger distinct factor after $p_{r}\overline{p}_{c}$
is greater than $mp_{r}^{2}\overline{p}_{c}$. So the inner product
can be decomposed using modulo arithmetic starting with the largest
prime and working downward to determine the exact number of times
each term appears. This implies distinct canonical inner product strings
in $SymSqr\left(M\right)$ have distinct inner products in $W1\times W2$.

The remaining part is to note that $SymSqr\left(M\right)$ selects
the lessor of the canonical strings at $(i,j)$ and $(j,i)$ to represent
both locations for symmetric matrices. For $W1\times W2$, this is
accomplished by using the minimum of $\left[W1\times W2\right]_{i,j}$
and $\left[W1\times W2\right]_{j,i}$ to represent both locations.
Recall that the product of two real symmetric matrices is not symmetric
unless they have the same eigenvectors. So 
\[
\Pi\left(min\left(\left(W1\times W2\right),\left(W1\times W2\right)^{T}\right)\right)=\Pi\left(SymSqr\left(M\right)\right)
\]
 as desired.
\end{proof}
So the refinement sequence generated by symbolic squaring can be reproduced
using regular matrix multiplication. However, it involves multiplying
widely-spaced primes matrices. 

One consequence of replacing symbolic squaring with regular matrix
multiplication on widely-spaced primes matrices is that the bounds
for Algorithm \ref{alg:Blind-p-similarity-algorithm} reduces from
$O\left(m^{8}\right)$ for symbolic squaring to $O\left(m^{6}\right)$
for matrix multiplication of widely-spaced primes matrices and from
$O\left(m^{10}\right)$ for symbol substitution of canonical inner
product string arrays to $O\left(m^{8}\right)$ for symbol substitution
of positive integer matrices. Dropping the overall complexity of the
algorithm from $O\left(m^{12}\right)$ to $O\left(m^{10}\right)$. 

\subsection{Symbolic ``Walks'' }

Two vertices that are not neighbors in a rook's graph are part of
a unique four-cycle \cite{moon63}. This is also true of PCGs since
the connectivity, ignoring edge weights, is identical to a rook's
graph. From Theorem \ref{thm:FirstDMD} the first symbolic squaring
has a pattern identical to the pattern from symbolically multiplying
$DMD$ where the $m^{2}\times m^{2}$ matrix $M=R+D$ is a PCM. Consider
the value of $DMD$ at $(i,j)$. It is given by 

\[
\left[DMD\right]_{i,j}=\begin{cases}
D_{ii}R_{ij}D_{jj} & \mbox{if \ensuremath{i\neq j}}\\
D_{ii}^{3} & \mbox{if \ensuremath{i=j}}
\end{cases}
\]
which, for $i\neq j$, may be interpreted as the relationship between
the pair of vertices $D_{ii}$ and $D_{jj}$.

Now symbolically square $DMD$ for the second symbolic squaring. We
have 
\begin{eqnarray*}
\left(DMD\right)\left(DMD\right) & = & \left(D\left(R+D\right)D\right)\times\left(D\left(R+D\right)D\right)\\
 & = & DRD^{2}RD+DRD^{4}+D^{4}RD+D^{6}.
\end{eqnarray*}
Focus on the $DRD^{2}RD$ term, symbolically, 
\[
\Pi\left(SymMult\left(D,R,D^{2},R,D\right)\right)\preceq\Pi\left(DRD^{4}+D^{4}RD+D^{6}\right)
\]
 if the zeros on the diagonal of $R$ are treated as symbols. Implying
that 
\[
\Pi\left(SumMult\left(D,R,D^{2},R,D\right)\right)=\Pi\left(SymSqr\left(DMD\right)\right).
\]
 One sees that its value at $(i,j)$, $i\neq j$, is equal to 
\[
\left[DRD^{2}RD\right]_{ij}=\sum_{k=1}^{m^{2}}D_{ii}R_{ik}D_{kk}^{2}R_{kj}D_{jj}
\]
 where each term is a ``walk'' of length two, with segments from
vertex $D_{ii}$ to vertex $D_{kk}$ and then from $D_{kk}$ to vertex
$D_{jj}$. Each ``walk'' is a term in the canonical inner product.
``Walk'' is in quotes since a walk normally implies an edge connecting
the vertices. Here ``walk'' is being used more generally to represent
a string of relationships between vertices.

For the third symbolic squaring, $\left(DMD\right)^{4}$ has a term
$DRD^{2}RD^{2}RD^{2}RD$ whose symbolic inner product terms for $(i,j)$,
$i\neq j$, are given by 
\begin{equation}
\left[DRD^{2}RD^{2}RD^{2}RD\right]_{ij}=\sum_{k_{1}=1}^{m^{2}}\sum_{k_{2}=1}^{m^{2}}\sum_{k_{3}=1}^{m^{2}}D_{ii}R_{ik_{1}}D_{k_{1}k_{1}}^{2}R_{k_{1}k_{2}}D_{k_{2}k_{2}}^{2}R_{k_{2}k_{3}}D_{k_{3}k_{3}}^{2}R_{k_{3}j}D_{jj}.\label{eq:4Cycles}
\end{equation}
Now let $i$ equal to $j$, (\ref{eq:4Cycles}) contains all of the
four-cycles as well as other ``walks'' of length four. For the $n^{{\rm th}}$
symbolic squaring, there are $m^{2n}$ ``walks'' of length $2^{(n-1)}$
at each $(i,j)$ location. 

Section \ref{subsec:Origin-of-PCGs} described an unsuccessful attempt
to just use the four-cycles to separate graphs. However, it is also
pointed out that there must be unique row and column signatures that
can be used to pin down the value at a location. As symbolic squaring
continues, the ``walks'' will incorporate the row and column signatures
as well as the value at the intersection.

\subsection{Practical Considerations}

From a practical point of view, Algorithm \ref{alg:Blind-p-similarity-algorithm}
is extremely slow and requires a large amount of memory. A $m\times m$
matrix has a $m^{2}\times m^{2}$ PCM which is symbolically dense.
Testing used adjacency matrices of graphs from the Bliss graph collection
\cite{bliss} and the Neuen-Schweitzer graph database \cite{neuen2017b}.
Only matrices up to $180\times180$ were tested since their PCMs are
$32400\times32400$ and take up about 8GB of memory per PCM when using
64-bit values. To make the execution time tractable, double precision
matrix multiplication with prime numbers as the symbols is used instead
of symbolic matrix multiplication. Symbol substitutions are performed
every iteration. When there are more than 10K symbols the symbol substitution
algorithm switches to using the integers $1,\ldots,k$ as the symbols
to reduce magnitudes. IEEE 754 double precision format has a 52 bit
mantissa which can represent integers up to 15 digits long. So the
process is halted if an inner product goes over $2^{52}$. Lastly
the heuristic compares column mixes instead of diagonal mixes since
column mixes often detect non p-similar cases an iteration before
the diagonal mixes as discussed in Section \ref{subsec:Comparing-Multisets}.

\section{\label{sec:Summary}Summary}

Permutation similarity between two (real or complex) square matrices
reduces to asking whether their associated PCGs are isomorphic. PCGs
add symmetric permutation constraints to matrices so multisets can
be used for comparisons. Usually, multisets maintain either row or
column constraints while giving up the other constraint. PCGs make
row and column constraints an integral part of the problem. Then the
associated PCMs are repeatedly symbolically squared and symbols substituted
until stable partitions are reached or the diagonal mixes differ.
If the diagonal mixes do not match, the matrices are not p-similar.
The case where stable partitions are reached with matching diagonal
mixes are p-similar. The necessity and sufficiency of the algorithm
is shown. A second algorithm using the blind algorithm to find the
permutation between two p-similar matrices is also given. Therefore
permutation similarity and graph isomorphism are in P. 

\bibliographystyle{plain}
\bibliography{References/giReferences}

\end{document}